\setlist[enumerate]{leftmargin=.5in}
\setlist[itemize]{leftmargin=.5in}
\newtheorem{theorem}{Theorem}
\newtheorem{remark}{Remark}
\newtheorem{example}{Example}
\newtheorem{lemma}{Lemma}
\newtheorem{condition}{Condition}
\newtheorem{definition}{Definition}
\numberwithin{theorem}{section} %
\numberwithin{lemma}{section} %
\numberwithin{definition}{section} %
\numberwithin{remark}{section} %
\newcounter{rownumber}[figure] 
\preto\tabular{\setcounter{magicrownumbers}{0}}
\newcounter{magicrownumbers}
\DeclarePairedDelimiter\norm{\lVert}{\rVert}%
\DeclareMathOperator{\vspan}{span}
\DeclareMathOperator{\rank}{rank}
\renewcommand\vec{\bm}
\newcommand{\retainlabel}[1]{\label{#1}\sbox0{\ref{#1}}}
\title{Identifiability of Chemical Reaction Networks with Intrinsic and Extrinsic Noise from Stationary Distributions\thanks{
This work was funded in part by the U.S. National Science Foundation under Grant CMMI 1727189 and the U.S. AFOSR MURI under grant FA9550-22-1-0316.}}
\author{Theodore W. Grunberg\thanks{Department of Electrical Engineering and Computer Science, MIT, Cambridge, MA 02139 USA 
  (\texttt{grunberg@mit.edu}).}
\and Domitilla Del Vecchio\thanks{Department of Mechanical Engineering, MIT, Cambridge, 
MA 02139 USA 
  (\texttt{ddv@mit.edu}).}
}
\DeclareMathOperator{\diag}{diag}
\begin{document}

\maketitle

\begin{abstract}
Many biological systems can be modeled as a chemical reaction network with unknown parameters. Data available to identify these parameters are often in the form of a stationary distribution, such as that obtained from measurements of a cell population. In this work, we introduce a framework for analyzing the identifiability of the reaction rate coefficients of chemical reaction networks from stationary distribution data. Working with the linear noise approximation, which is a diffusive approximation to the chemical master equation, we give a computational procedure to certify global identifiability based on Hilbert's Nullstellensatz. We present a variety of examples that show the applicability of our method to chemical reaction networks of interest in systems and synthetic biology, including discrimination between possible molecular mechanisms for the interaction between biochemical species. 
\end{abstract}

\emph{\textbf{Keywords}}: system identification, synthetic biology, chemical reaction network%

\section{Introduction}\label{sec:introduction}
System identification is concerned with going from a model class for a system to a particular model in that class based on experimental data. The basic property that guarantees that this is possible with sufficient data is \emph{structural identifiability}~\cite{Bellman:1970ue}. One practical use of identifiability analysis is to determine whether a particular experimental setup is sufficient to uniquely estimate the parameters of interest. If a system is not identifiable, then an identification algorithm may give incorrect parameter values without warning. Similarly, if one wishes to discriminate between two possible models for a system, the property of discriminability is necessary to guarantee \emph{a priori} that the true model can be determined from data.
If discriminability is not guaranteed then an algorithm that determines which model generated data can select the wrong model.  In the context of ordinary differential equation (ODE) models, identifiability analysis often takes the form of determining which set of input signals are sufficient to identify the parameters, while discriminability analysis takes the form of determining which input signals are sufficient to select the true model.

\emph{Global a priori} identifiability is the strongest type of structural identifiability, which guarantees that no matter what the true parameter values are, one will be able to uniquely determine them from a given experiment as long as sufficient data is gathered~\cite{Ljung:1994wl}. In general, proving that global identifiability holds is difficult~\cite{cinquemani2018identifiability,hsiao2018control}, and for ODE models a variety of computational tools have been developed. Some exploit the differential algebraic structure of the problem to analyze identifiability with Ritt's Algorithm~\cite{Ljung:1994wl,Bellu:2007tt,audoly2001global}, while other methods are based on observability analysis, with the parameters treated as states with trivial dynamics~\cite{sontag2017dynamic,sontag1979observability,sontag2013mathematical,Villaverde:2016vr,chis2011structural,villaverde2016identifiability}. 

Most work on identifiability for biological applications has focused on ODE models that describe the time evolution of the mean values of the state variables, using the previously discussed algorithmic tools. However, in biological applications,  common data include single cell measurements from a population of cells, such as obtained from flow cytometry~\cite{shapiro2005practical} or from single cell RNAseq~\cite{luecken2019current}. While these techniques can obtain measurements of population distributions across many cells, they do not allow tracking individuals cells across time. Therefore, the data does not take the form of (possibly noisy) measurements along a sample path of the system and thus the standard methods for identifiability analysis of dynamical systems are not directly applicable. However, it has been observed in a variety of studies that using information about the time evolution of the population distribution over the outputs can help identify more parameters than just the time evolution of the means of the outputs in specific cases~\cite{munsky2009listening,Munsky7533,Lillacci:2013wu,steuer2003observing}. Despite this, no general framework for identifiability analysis exists in this setting. When the time evolution of the population distribution can be described by a system of finitely many ODEs, methods of identifiability analysis for ODE models such as those in \cite{Ljung:1994wl} and \cite{Villaverde:2016vr} can be used. Cinquemani studied identifiability of chemical reaction networks from a sequence of distributional data~\cite{cinquemani2018identifiability}. However, their results are only valid for local identifiability of chemical reaction networks with propensities that are affine in the state, e.g., monomolecular reactions, and therefore these results do not allow analysis of general chemical reaction networks or of global identifiability. 

A special case of distributional data measures only the stationary distribution, i.e., just the equilibrium population distribution. In this scenario, algorithms to identify chemical reaction network parameters from stationary distributions have been developed \cite{gupta2019bayesian,ocal2019parameter,backenkohler2017moment}. However, none of these works considered the question of identifiability. Therefore, generally applicable methods for identifiability analysis when only the stationary distribution is measured have been lacking.
In fact, to the best of the authors' knowledge, the question of identifiability from only the stationary distribution has not been studied for general chemical reaction networks.  Swaminathan and Murray considered identifiability of linear time invariant systems from the stationary distribution over all states and additionally a sample path of the underlying stochastic process for a subset of states~\cite{swaminathan2016linear}, but they did not provide conditions for identifiability in the case of only distributional data.

An additional source of noise in biological systems is extrinsic noise. Extrinsic noise arises from the variability of cellular context across a population of cells~\cite{Swain:2002ww}. In this work we additionally consider extrinsic noise that manifests through parameter variation between cells in a population. Such noise can arise from a variety of sources, most notably in synthetic genetic circuits from differences in copy number of the DNA on which the genetic circuit is encoded, such as with lentiviral transduction in mammalian cells or with plasmid transfection in either bacterial or mammalian cells~\cite{campeau2009versatile,schwake2010predictive}. Such noise can, in principle, improve our ability to identify the reaction rate constants, since we have data across a wider range of conditions. However, this is not clear \emph{a priori}.

In this work, we consider global identifiability of linear noise approximation (LNA) models~\cite{van1992stochastic} of chemical reaction networks with intrinsic and extrinsic noise from their stationary distributions, including a treatment of the model discrimination case where one wishes to know if it is possible to determine which chemical reactions are present in a system. Our solution is a generally applicable algebraic characterization of identifiability, which is amenable to analysis using Hilbert's Nullstellensatz~\cite{cox2013ideals}, and thus allows the computation of certificates of identifiability.

This paper is organized as follows. In Section \ref{sec:problem_setting}, we give mathematical background and a description of the problem we consider. In Section \ref{sec:main_result}, we give the main results of this paper, describing how to use algebraic tools to certify global identifiability of chemical reaction networks from their stationary distributions. In particular, Section \ref{sec:main_result} describes a chemical reaction network modeled by the LNA where the goal is to identify the values of the reaction rate constants. In Section \ref{sec:idble_model_id}, we show how to approach the model discriminability problem using our techniques. In Section \ref{sec:idble:extrinsic}, we show how to certify global identifiability from the stationary distribution for chemical reaction networks with extrinsic and intrinsic noise, and additionally show that the addition of extrinsic noise cannot make an identifiable chemical reaction network non identifiable. Throughout this work we apply our methods to certify identifiability of a wide range of chemical reaction networks.

\section{Problem Setting}\label{sec:problem_setting}

\subsection{The linear noise approximation}\label{sec:LNA}
A chemical reaction network (CRN) is a model of a system of chemical species interacting through reactions, each of which is a discrete event that occurs stochastically. The exact model of the resulting stochastic kinetics is given by the chemical master equation, an infinite set of ordinary differential equations that describes the time evolution of the probability of having a particular number of molecules of each species in the system~\cite{gillespie1992rigorous}. In this work, we use the LNA as a model of the stochastic dynamics of CRNs. The LNA, also known as the system size expansion, is the first order correction to the  deterministic reaction rate equations in $\Omega^{-1/2}$, where $\Omega$ is the volume in which the chemical species are contained~\cite{van1992stochastic}. Letting $X$ represent the vector of molecular counts of each species, and $\vec{x}$ represent the mean concentration of the molecular species, the LNA makes the approximation $X = \Omega \vec{x} + \sqrt{\Omega} \vec{\xi}$. Here, $\vec{x}$ is the deterministic mean, which is given by the reaction rate equations, an ODE model that describes the rate of change of the molecular species concentrations, assuming mass action kinetics~\cite{gillespie1992rigorous}, and $\vec{\xi}$ is a random variable representing the fluctuations of $X$ about $\Omega\vec{x}$. For completeness, we give a brief description of the LNA here, a full derivation is given in \cite{van1992stochastic}. We remark that while the LNA gives distributions that are close to the distributions given by the chemical master equation when the volume and molecular counts are large on a finite time interval~\cite{kurtz1978strong}, there are no formal guarantees that the stationary distribution of the LNA is close to that of the chemical master equation. In this work, we take the stationary distribution of the LNA as our model of the stationary distribution of a CRN.

Consider a CRN consisting of $r$ reactions among $n$ species in a well mixed volume of size $\Omega$. Reaction $i$, for $i \in \{1,\dots,r\}$, is described by $\vec{s}_{ri}^T\vec{X} \xrightarrow{k_i} \vec{s}_{pi}^T\vec{X}$, where $\vec{X} = \begin{bmatrix} X_1 & X_2 & \cdots X_n\end{bmatrix}^T$ with $X_j$ the number of molecules of species $j$, $\vec{s}_{ri}$ is the vector of number of molecules of \emph{reactant} species consumed by reaction $i$, and $\vec{s}_{pi}$ is the vector of number of molecules of \emph{product} species created by reaction $i$. The reaction rate constant of reaction $i$ is $k_i$. Using the approximation $\vec{X}(t) = \Omega \vec{x}(t) + \sqrt{\Omega} \vec{\xi}(t)$, the dynamics of the system are given by
\begin{subequations}\label{eq:LNA}
\begin{align}
	\frac{d}{dt}\vec{x}(t) =& \vec{f}(\vec{x}(t);\vec{k}), \quad \vec{x}_0(0) = \vec{x}_0,\label{eq:LNA_rre}\\
	d\vec{\xi}(t) =& \frac{\partial \vec{f}}{\partial \vec{x}} \vec{\xi}(t) dt + \Gamma(\vec{x}(t);\vec{k}) d\vec{w}(t),\quad \vec{\xi}(0) = \vec{\xi}_0,\label{eq:LNA_fluc}
\end{align}
\end{subequations}\noeqref{eq:LNA_rre}\noeqref{eq:LNA_fluc}
in which \eqref{eq:LNA_rre} are the reaction rate equations (RRE) \cite{gillespie1992rigorous} and \eqref{eq:LNA_fluc} gives the evolution of $\vec{\xi}(t)$. Specifically, let $\vec{k} = [k_1,\dots,k_r]^T$. Then, $\vec{f}(\vec{x};\vec{k})$ is given by
\begin{equation}\label{eq:RRE_def}
	\vec{f}(\vec{x}; \vec{k})= S\vec{q}(\vec{x};\vec{k}) %
\end{equation}
where $\vec{q}(\vec{x};\vec{k}) = \begin{bmatrix} q_1(\vec{x};k_1) & q_2(\vec{x};k_2) & \cdots & q_r(\vec{x};k_r)\end{bmatrix}^T$, where $q_i(\vec{x};k_i) = k_i \prod_{j=1}^n x_j   ^{s^j_{ri}}$ is the macroscopic propensity of reaction $i$, where $s^j_{ri}$ is the $j$\textsuperscript{th} element of $\vec{s}_{ri}$.  The stoichiometry matrix $S$ is defined as $S = \begin{bmatrix} \vec{s}_1 & \vec{s}_2 & \cdots & \vec{s}_r\end{bmatrix}$, with $\vec{s}_i = \vec{s}_{pi} - \vec{s}_{ri}$ representing the change in $\vec{X}$ when reaction $i$ occurs. Here, $\vec{w}(t)$ is a Wiener process, and
\begin{equation}\label{eq:Gamma_def}
	\Gamma(\vec{x}; \vec{k})  = S\diag\left(\sqrt{\vec{q}(\vec{x};\vec{k})}\right).
\end{equation}
Throughout this work, we assume that \eqref{eq:LNA_rre} has a unique, exponentially stable, equilibrium in $\mathbb{R}_{\geq 0}^n$ for all $\vec{k} >0$. We denote this equilibrium point by $\vec{x}^*(\vec{k})$. Let $P\in \mathbb{R}^{n\times n}$ be the stationary covariance of $\vec{\xi}$. Then, the following equations characterize the stationary distribution of $\vec{X}(t)$ as a function of $\vec{k}$:
\begin{subequations}\label{eq:stationary_dist}
\begin{align}
	0 =& \vec{f}(\vec{x};\vec{k}),\label{eq:stationary_dist_rre}\\
	0 =& \frac{\partial \vec{f}}{\partial \vec{x}} P + P \frac{\partial \vec{f}}{\partial \vec{x}} ^T + \Gamma(\vec{x}; \vec{k})\Gamma(\vec{x}; \vec{k})^T.\label{eq:stationary_dist_covariance}
\end{align}
\end{subequations}
The stationary distribution of $\vec{X}(t)/\Omega$ is $\mathcal{N}(\vec{x}^*(\vec{k}), \frac{1}{\Omega}P^*(\vec{k}))$, i.e., a normal distribution with mean $\vec{x}^*(\vec{k})$ and covariance $\frac{1}{\Omega}P^*(\vec{k})$, where $\vec{x}^*(\vec{k})$ and $P^*(\vec{k})$ are the solutions to \eqref{eq:stationary_dist}. Our assumption that \eqref{eq:LNA_rre} has a unique equilibrium point in $\mathbb{R}^n_{\geq0}$ for all $\vec{k} >0$ ensures that \eqref{eq:stationary_dist} defines the unique stationary distribution under the LNA. For brevity, we denote a CRN as a function $\mathcal{R}$ that maps reaction rate vectors to the corresponding stationary distribution according to \eqref{eq:stationary_dist}, i.e., $\mathcal{R}:\mathbb{R}_{>0}^r \rightarrow \mathbb{R}^n \times \mathbb{S}^{n\times n}$, where $\mathbb{S}^{n\times n}$ is the space of symmetric $n \times n$ real matrices, defined by $\mathcal{R}(\vec{k}) = \left(\vec{x}^*(\vec{k}),\frac{1}{\Omega}P^*(\vec{k}\right))$.

\begin{example}[Illustrative Example 1]\label{ex:one_dim} 
We first consider a simple CRN $\mathcal{R}_1$ with a single species ($n=1$) and three reactions ($r=3$) given by
\begin{equation}\label{eq:ex1_reactions}
	\schemestart
		\subscheme{$\emptyset$}
    		\arrow(z--x1){<=>[$k_1$][$k_2$]}[0]
		\subscheme{$\mathrm{X}_1$}
		\arrow(@x1--x11){<-[$k_3$]}[0]
		\subscheme{$2\mathrm{X}_1$}
	\schemestop ,
\end{equation}
where reaction $i$ is labeled with its reaction rate constant, $k_i$. The reaction rate equation \eqref{eq:RRE_def} in this case given by
\begin{equation}\label{eq:ex1_f}
	\frac{d}{dt}x_1 = \vec{f}(\vec{x};\vec{k}) = k_1 - k_2x_1 - k_3x_1^2,
\end{equation}
from which we see that there is a unique and asymptotically stable equilibrium point in the region $x_1 \geq 0$ as long as $\vec{k}>0$, and thus the LNA model has a unique equilibrium distribution. In this case we have $\vec{q}(\vec{x};\vec{k}) = \begin{bmatrix} k_1 & k_2x_1 & k_3x_1^2\end{bmatrix}^T$ and the stoichiometry matrix is $S = \begin{bmatrix} 1 & -1 & -1 \end{bmatrix}$. Therefore, from \eqref{eq:Gamma_def} we have
\begin{equation}\label{eq:ex1_gamma}
	\Gamma(x; \vec{k})\Gamma(x; \vec{k})^T = k_1 + k_2x_1 + k_3x_1^2.
\end{equation}
\end{example}

\subsection{Identifiability}\label{sec:identifiability}

In this work, we study the following problem: Given $\vec{\pi}^*$, a stationary distribution over the species concentrations, and $K\subseteq \mathbb{R}^r_{>0}$ a set of possible $\vec{k}$ values, can we uniquely identify the $\vec{k}$ which gave rise to $\vec{\pi}^*$? To make this question mathematically precise, we will consider the following definition of global identifiability for CRNs from the stationary distribution.
\begin{definition}\label{def:param_id}
	A CRN $\mathcal{R}(\vec{k})$ is \emph{stationary globally identifiable} over $K\subseteq \mathbb{R}^r_{>0}$ if for any $\vec{k}_1,\vec{k}_2\in K$ such that $\mathcal{R}(\vec{k}_1) = \mathcal{R}(\vec{k}_2)$, there exists $a \in \mathbb{R}$ such that $\vec{k}_2 = a\vec{k}_1$.
\end{definition}
\begin{remark}
 	For any CRN, if one scales all of the reaction rate constants by the same value, $a$, the stationary distribution does not change. This fundamental lack of identifiability is due to our inability to tell the `speed' of a continuous time Markov chain from its stationary distribution. Definition \ref{def:param_id} reflects that fact that here we study identifiability modulo this fundamental source of non-identifiability.
\end{remark}
\begin{remark}
Whether or not a system is identifiable depends entirely on the model, which is given by the LNA in our analysis. However, under certain conditions, the first and second moments of the LNA and chemical master equation models are identical~\cite{grima2015linear-noise}, and hence in those cases our results also imply identifiability of the chemical master equation model.
\end{remark}

\subsection{Nullstellensatz}\label{sec:nullstellensatz}
In this section, we briefly describe the algebraic tools that we use in this work~\cite{cox2013ideals}. Let $\vec{z}$ be an $n'$ dimensional vector of variables. We denote the set of polynomials in $\vec{z}$,with rational coefficients by $\mathbb{Q}[\vec{z}]$. Since $p \in \mathbb{Q}[\vec{z}]$ is a function of $\vec{z}$, for any $\vec{z}'\in\mathbb{C}^{n'}$, $p(\vec{z}')$ denotes $p$ evaluated at $\vec{z}'\in\mathbb{C}^{n'}$. We say that $p \in \mathbb{Q}[\vec{z}]$ is a monomial if $p$ can be written as $p = \prod_{i=1}^N z_i^{\alpha_i}$ for some $N \geq 0$ and $\alpha_1,\alpha_2,\dots ,\alpha_N \in \mathbb{N}$. Let ``$\prec$'' be any total ordering \cite{cox2013ideals} on the set of monomials in $\mathbb{Q}[\vec{z}]$ that additionally satisfies \emph{i)} $1 \prec p$ for any nonconstant monomial $p\in \mathbb{Q}[\vec{z}]$ and \emph{ii)} $\prod_{i=1}^N x_i^{\alpha_i} \prec  \prod_{i=1}^N z_i^{\beta_i}$ implies that $ \prod_{i=1}^N z_i^{\alpha_i + \gamma_i} \prec  \prod_{i=1}^N z_i^{\beta_i + \gamma_i}$ for all $\alpha_1,\dots,\alpha_N,\beta_1,\dots,\beta_N,\gamma_1,\dots,\gamma_N \in \mathbb{N}$. Such a total ordering $\prec$ is called a \emph{term order} on $\mathbb{Q}[\vec{z}]$. The \emph{ideal} generated by a  set of polynomials $\mathcal{P} \subseteq \mathbb{Q}[\vec{z}]$ is defined as all polynomial combinations of the elements of $\mathcal{P}$, i.e.,
\begin{equation}
	 \left\langle \mathcal{P}\right\rangle = \left\{ g \in \mathbb{Q}[\vec{z}] \middle| g = \sum_{i=1}^m \lambda_i p_i,\; p_1,\dots ,p_m \in \mathcal{P},\;\lambda_1,\lambda_2,\dots,\lambda_m \in \mathbb{Q}[\vec{z}], \text{for some $m \in \mathbb{N}$} \right\}.
\end{equation}

\setcounter{example}{1}
\begin{example}[Algebraic preliminaries]\label{ex:algebraic}
	To illustrate the concepts we consider two different sets of polynomials, $\mathcal{P}_1 = \{z^2 - 1,\;z-1\} \subset \mathbb{Q}[z]$ and $\mathcal{P}_2 = \{z^2-1,\; z-2\}\subset \mathbb{Q}[z]$. We have that 
	\begin{equation}
		\left\langle \mathcal{P}_1\right\rangle = \left\{ g \in \mathbb{Q}[z] \middle| g = \lambda_1\left(z^2 - 1\right) + \lambda_2\left(z-1\right),\; \lambda_1,\lambda_2 \in \mathbb{Q}[z] \right\}
	\end{equation}
	and
	\begin{equation}
		\left\langle\mathcal{P}_2\right\rangle = \left\{ g \in \mathbb{Q}[z] | g = \lambda_1\left(z^2 - 1\right) + \lambda_2\left(z-2\right),\; \lambda_1,\lambda_2 \in \mathbb{Q}[z]\right\}.
	\end{equation}
	For example, $\mathcal{P}_1$ contains $0$ (with $\lambda_1 = 0$, $\lambda_2 = 0$), $z^2-1$ (with $\lambda_1 = 1$, $\lambda_2 = 0$), $z-1$ (with $\lambda_1 = 0$, $\lambda_2 = 1$), as well as $z^3 - 1$ (with $\lambda_1 = z$, $\lambda_2 = 1$), but does not contain $1$, since no $\lambda_1,\lambda_2 \in \mathbb{Q}[z]$ results in $1 = \lambda_1(z^2-1) + \lambda_2(z-1)$. On the other hand, $\mathcal{P}_2$ does contain $1$, since $\lambda_1 = 2z/3 -1$ and $\lambda_2 = -2z^2/3 - z/3$ results in $\lambda_1(z^2-1) + \lambda_2(z-2) = 1$.
\end{example}

Let $p\in \mathbb{Q}[\vec{z}]$. Then, $\mathit{\mathit{in}_{\prec}(p)}$ denotes the largest monomial with respect to $\prec$ that appears in $p$  with a nonzero coefficient. Suppose $\mathcal{I} = \left\langle \mathcal{P}\right\rangle$, then $\mathcal{G}$ is a Gr\"{o}bner basis of $\mathcal{I}$ if it is a finite subset of $\mathcal{I}$ that satisfies $\left\langle\mathrm{in}_{\prec}(p)\middle| p \in \mathcal{I}\right\rangle = \left\langle\mathrm{in}_{\prec}(g) \middle| g \in \mathcal{G}\right\rangle$. $\mathcal{G}$ is a reduced Gr\"{o}bner basis of $\mathcal{I}$ if additionally \emph{i)} the coefficient of the largest monomial in $g$ with respect to $\prec$ is $1$ for each $g \in \mathcal{G}$ and \emph{ii)} for all $g\in\mathcal{G}$, $\left\langle\mathrm{in}_{\prec}(g')\middle| g' \in \mathcal{G} \setminus \{g\}\right\rangle$ does not contain any monomial term of $g$. 
 In Example \ref{ex:algebraic} and for the rest of this work we use Buchberger's algorithm, as implemented in Macaulay2, to compute reduced Gr\"{o}bner bases~\cite{bose1995grobner,M2}.
\setcounter{example}{1}
\begin{example}[Algebraic preliminaries continued]
	 Continuing Example \ref{ex:algebraic}, we consider the reduced Gr\"{o}bner bases of $\mathcal{P}_1$ and $\mathcal{P}_2$. When $n' = 1$, the only valid term order is $1 \prec z \prec z^2 \prec \dots$. The reduced Gr\"{o}bner basis of $\langle \mathcal{P}_1 \rangle$ is $\mathcal{G}_1 = \{z-1\}$ with respect to this term order, whereas with the same term order the reduced Gr\"{o}bner basis of $\mathcal{P}_2$ is $\{1\}$. The details of computing reduced Gr\"{o}bner bases can be found in \cite{cox2013ideals}.
\end{example}

Given an ideal $\mathcal{I} = \langle \mathcal{P}\rangle$, there are many sets of polynomials that generate $\mathcal{I}$. The reduced Gr\"{o}bner basis is a special choice of generating polynomials which reveals certain properties of $\mathcal{I}$. In particular, let $\mathcal{V}(\mathcal{P})$ denote the variety of $\mathcal{P}$, defined by
\begin{equation}
	\mathcal{V}(\mathcal{P}) = \left\{\vec{z} \in \mathbb{C} \middle| 0 = p(\vec{z}),\; \forall p \in \mathcal{P}\right\}.
\end{equation}
In other words if $\mathcal{P} = \left\{p_1,p_2,\dots, p_m\right\}$, $\mathcal{V}(\mathcal{P})$ is the set of solutions to the system of equations $0 = p_1(\vec{z}),\,0 = p_2(\vec{z}),\dots,0 = p_m(\vec{z})$. It is true that $\mathcal{V}(\mathcal{P}) = \mathcal{V}(\mathcal{F})$ for any $\mathcal{F}$ such that $\mathcal{I} = \langle \mathcal{F}\rangle$. In particular, if $\mathcal{G}$ is a reduced Gr\"{o}bner basis of $\mathcal{I}$, then $\mathcal{V}( \mathcal{P}) = \mathcal{V}(\mathcal{G})$. Therefore, if we wish to study $\mathcal{V}(\mathcal{P})$, the set of common zeros of the polynomials in $\mathcal{P}$, we can study $\mathcal{V}(\mathcal{G})$ instead, which is advantageous since by examining the reduced Gr\"{o}bner basis, one can easily tell if $\mathcal{V}(\mathcal{P})$ is empty or not. This idea is formalized by Hilbert's Nullstellensatz, one version of which is given here.
\begin{theorem}[See e.g. \cite{sturmfels2005grobner}]\label{thm:nullstellensatz_coro}
	Let $p_1,p_2,\dots,p_m \in \mathbb{Q}[\vec{z}]$ be polynomials in the $n'$ variables in $\vec{z}$. Then
	\begin{equation}
		\emptyset = \left\{\vec{z} \in \mathbb{C}^{n'} \middle| 0=p_1(\vec{x}),0=p_2(\vec{x}),\dots,0=p_m(\vec{x})\right\}
	\end{equation}
	if and only if the reduced Gr\"{o}bner basis of $\left\langle p_1, p_2, \dots, p_m\right\rangle$ is $\{1\}$.
\end{theorem}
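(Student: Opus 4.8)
The plan is to route the biconditional through the single algebraic condition $1 \in \mathcal{I}$, where $\mathcal{I} = \langle p_1,\dots,p_m\rangle$, thereby separating the elementary Gr\"obner-basis bookkeeping from the one genuinely deep ingredient, the weak form of Hilbert's Nullstellensatz. Concretely, I would prove two equivalences and chain them: \emph{(a)} the reduced Gr\"obner basis of $\mathcal{I}$ equals $\{1\}$ if and only if $1 \in \mathcal{I}$; and \emph{(b)} $\mathcal{V}(\{p_1,\dots,p_m\}) = \emptyset$ if and only if $1 \in \mathcal{I}$. The theorem then follows immediately.

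For \emph{(a)}, the implication ``reduced basis $\{1\}$ $\Rightarrow$ $1 \in \mathcal{I}$'' is trivial, since by definition a Gr\"obner basis is a finite subset of $\mathcal{I}$, so $\{1\} \subseteq \mathcal{I}$. For the converse, if $1 \in \mathcal{I}$ then $\mathcal{I} = \mathbb{Q}[\vec z]$, and since $\mathrm{in}_{\prec}(1) = 1$ the initial ideal $\langle \mathrm{in}_{\prec}(p) \mid p \in \mathcal{I}\rangle$ equals $\langle 1\rangle$. Hence the singleton $\{1\}$ satisfies the definition of a Gr\"obner basis and is reduced---its leading coefficient is $1$, and condition \emph{ii)} of the definition holds trivially because there are no other basis elements---so by uniqueness of the reduced Gr\"obner basis for a fixed term order it is precisely the reduced basis. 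This step uses only the definitions recalled above together with the standard existence and uniqueness of reduced Gr\"obner bases.

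For \emph{(b)}, one direction is a one-line evaluation: if $1 = \sum_i \lambda_i p_i$, then any common zero of the $p_i$ would satisfy $1 = 0$, so $\mathcal{V}(\{p_1,\dots,p_m\})$ is empty. The reverse implication---that an empty variety forces $1 \in \mathcal{I}$---is the weak Nullstellensatz and carries essentially all the weight. I would establish its contrapositive: a proper ideal $\mathcal{I}$ of $\mathbb{C}[\vec z]$ has a common zero in $\mathbb{C}^{n'}$. Extend $\mathcal{I}$ to a maximal ideal $\mathfrak{m}$ by Zorn's lemma; then $L := \mathbb{C}[\vec z]/\mathfrak{m}$ is a field that is finitely generated as a $\mathbb{C}$-algebra, so by Zariski's lemma $L$ is a finite algebraic extension of $\mathbb{C}$, whence $L = \mathbb{C}$ because $\mathbb{C}$ is algebraically closed. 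The images $a_j \in \mathbb{C}$ of the coordinates $z_j$ under the quotient map then assemble into a point $\vec z^* = (a_1,\dots,a_{n'})$ at which every element of $\mathfrak{m} \supseteq \mathcal{I}$---and in particular every $p_i$---vanishes.

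I expect Zariski's lemma to be the main obstacle: it is where algebraic closure of $\mathbb{C}$ is essential, and it is itself proved via Noether normalization or the Artin--Tate lemma, so this is the point at which I would defer to the cited reference rather than reprove it. A second, smaller point to dispatch is that the $p_i$ and the ideal membership in \emph{(a)} live over $\mathbb{Q}$, whereas the variety and the argument for \emph{(b)} live over $\mathbb{C}$. Since Buchberger's algorithm applied to rational generators performs only arithmetic in $\mathbb{Q}$, the reduced Gr\"obner basis is identical whether the ambient ring is $\mathbb{Q}[\vec z]$ or $\mathbb{C}[\vec z]$; consequently $1 \in \langle p_1,\dots,p_m\rangle$ holds over $\mathbb{Q}$ if and only if it holds over $\mathbb{C}$, and the two equivalences glue together without ambiguity.
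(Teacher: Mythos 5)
Your proof is correct, but there is no paper proof to compare it against: the paper states this theorem as a known result, citing Sturmfels, and gives no argument at all. Your proposal is the standard textbook proof, and the decomposition is sound: \emph{(a)} the reduced Gr\"obner basis equals $\{1\}$ iff $1 \in \mathcal{I}$, which follows in one direction because a Gr\"obner basis is a subset of $\mathcal{I}$ and in the other from uniqueness of the reduced basis once one checks $\{1\}$ satisfies the definition; \emph{(b)} the variety over $\mathbb{C}$ is empty iff $1$ lies in the ideal, trivial by evaluation in one direction and the weak Nullstellensatz (maximal ideal plus Zariski's lemma, forcing $\mathbb{C}[\vec z]/\mathfrak{m} \cong \mathbb{C}$) in the other. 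Deferring Zariski's lemma to the reference is the right cut point, since that is where algebraic closure and the genuine algebraic content live. Your treatment of the coefficient-field issue deserves particular mention: the paper works in $\mathbb{Q}[\vec z]$ (and Macaulay2 computes over $\mathbb{Q}$) while the variety lives in $\mathbb{C}^{n'}$, and your observation that Buchberger's algorithm applied to rational generators yields the same reduced basis over either field---so that $1 \in \left\langle p_1,\dots,p_m\right\rangle$ over $\mathbb{Q}$ iff over $\mathbb{C}$---is exactly the descent step needed for the statement as written to be correct; the paper leaves this entirely implicit. What the citation buys the paper is brevity in a work whose contribution lies elsewhere; what your proof buys is an explicit record of where each hypothesis (algebraic closure of $\mathbb{C}$, rationality of the coefficients, the fixed term order) actually enters the argument.
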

\setcounter{example}{1}
\begin{example}[Algebraic preliminaries continued]
	Since the reduced Gr\"{o}bner basis of $\mathcal{P}_1$ is not $\{1\}$, from Theorem \ref{thm:nullstellensatz_coro} we can conclude that there is a solution in $\mathbb{C}$ to
	\begin{align}
		0 & = z^2 - 1,\\
		0 & = z - 1.
	\end{align}
	In fact, one can see that there is one solution, $z = 1$. On the other hand, the reduced Gr\"{o}bner basis of $\mathcal{P}_2$ is $\{1\}$ and therefore, from Theorem \ref{thm:nullstellensatz_coro}, we can conclude that there are no solutions in $\mathbb{C}$ to 
	\begin{align}
		0 & = z^2 - 1,\label{eq:ex_alg_P2_a}\\
		0 & = z - 2,\label{eq:ex_alg_P2_b}
	\end{align}
	which is consistent with our ability in this simple case to deduce that the sets of solutions to \eqref{eq:ex_alg_P2_a} and \eqref{eq:ex_alg_P2_b} do not intersect.
\end{example}

\section{Certifying Identifiability of the LNA}\label{sec:main_result}
We now present the main results of this work, which are methods to algorithmically test for stationary global indentifiability. We begin by showing that the right-hand side of \eqref{eq:stationary_dist} is linear in $\vec{k}$. Specifically, we can write \eqref{eq:stationary_dist_rre} as
\begin{equation}
	\vec{f}(\vec{x};\vec{k}) = \sum_{i=1}^r k_i \vec{s}_i \prod_{j=1}^n x_j^{s_{ri}^j},
\end{equation}
and, given \eqref{eq:Gamma_def}, \eqref{eq:stationary_dist_covariance} can be written as
\begin{equation}
	0 =  \frac{\partial \vec{f}}{\partial \vec{x}}P + P \frac{\partial \vec{f}}{\partial \vec{x}}^T + S\diag \vec{q}(\vec{x};\vec{k}) S^T,
\end{equation}
where we have used the fact that for all $\vec{x}\in\mathbb{R}_{\geq 0}^n$, it is true that $\vec{q}(\vec{x};\vec{k}) \geq 0$. Therefore, the right-hand side of \eqref{eq:stationary_dist_rre} is linear in $\vec{k}$. Furthermore, since $\frac{\partial \vec{f}}{\partial \vec{x}}$ and $\vec{q}(\vec{x};\vec{k})$ are linear in $\vec{k}$, the right-hand side of \eqref{eq:stationary_dist_covariance} is also linear in $\vec{k}$. Also, \eqref{eq:stationary_dist} give $n + n^2$ equations for $\vec{x}\in \mathbb{R}_{\geq}^n$ and $P\in \mathbb{S}^{n\times n}$. Since $P$ is symmetric, there are only $\frac{n^2 + n}{2}$ unique equations in \eqref{eq:stationary_dist_covariance}. Therefore, combining our observations about linearity and the number of unique equations, \eqref{eq:stationary_dist} can be written in the form
\begin{equation}\label{eq:stationary_A}
	0 = A(\vec{x},P)\vec{k},
\end{equation}
where $A(\vec{x},P) \in \mathbb{R}^{\frac{n^2+n}{2} \times r}$ is a function of $\vec{x}$ and of the $\frac{n^2 + n}{2}$ entries of $P$ that are on or above the diagonal. Additionally, since $\vec{f}(\vec{x};\vec{k})$ and $q_i(\vec{x};k_i)$ are polynomials in $\vec{x}$, the elements of $A(\vec{x},P)$ are polynomials in $\vec{x}$ and in the elements of $P$ on or above the diagonal.
\setcounter{example}{0}
\begin{example}[Illustrative example 1 continued]
We ask if $\mathcal{R}_1$, given by \eqref{eq:ex1_reactions}, is stationary globally identifiable over $\mathbb{R}^3_{>0}$. In this example, letting $\vec{x} = x_1$ and $P = p_{11}$, writing out \eqref{eq:stationary_dist} explicitly using \eqref{eq:ex1_f} and \eqref{eq:ex1_gamma} yields
\begin{subequations}\retainlabel{eq:example1_stationary}
\noeqref{eq:example1_stationary_rre,eq:example1_stationary_covariance}
\begin{align}
	0 & = k_1 - k_2x_1 - k_3x_1^2,\retainlabel{eq:example1_stationary_rre}\\
	0 & = 2(-k_2 - 2k_3x_1)p_{11} + k_1 + k_2x_1 + k_3x_1^2.\retainlabel{eq:example1_stationary_covariance}
\end{align}
\end{subequations}
We can write \eqref{eq:example1_stationary} as $0 = A(x,P)\vec{k}$ where
\begin{equation}\label{eq:ex1_A}
	A(\vec{x}, P) = \begin{bmatrix}
		1 & -x_1 & - x_1^2 \\
		1 & x_1 - 2p_{11} & x_1^2 - 4p_{11}x_1
	\end{bmatrix}.
\end{equation}
\end{example}

In general, proving that a given system is stationary globally identifiable is difficult, since it requires proving that \eqref{eq:stationary_A} has only one subspace of solutions in $\vec{k}$ for all $(\vec{x},P)$ that are feasible, that is, for all $(\vec{x},P)$ such that there exists $\vec{k}\in K$ satisfying $(\vec{x},P) = \mathcal{R}(\vec{k})$. These feasible $(\vec{x},P)$ are given by \eqref{eq:stationary_dist}, which is a set of polynomial equations in $(\vec{x},P)$, along with the constraint $\vec{k}\in K$. To overcome this difficulty, we develop a method to certify global stationary identifiability based on Theorem \ref{thm:nullstellensatz_coro}. To begin, associated with each CRN $\mathcal{R}$, we define the sets
\begin{equation}\label{eq:set_S_eq_zero}
	V = \left\{ (\vec{x}, P, \vec{k}) \in (\mathbb{R}^n, \mathbb{S}^{n \times n}, \mathbb{R}_{>0}^r) \middle| 0 = A(\vec{x},P)\vec{k},\; \mbox{rank}(A(\vec{x},P)) < r-1 \right\}.
\end{equation}
and
\begin{equation}\label{eq:set_S_eq_zero_converse}
	V' = \left\{ (\vec{x}, P, \vec{k}) \in (\mathbb{R}^n_{\geq 0}, \mathbb{S}^{n \times n}, \mathbb{R}_{>0}^r) \middle| 0 = A(\vec{x},P)\vec{k},\; \mbox{rank}(A(\vec{x},P)) < r-1 \right\}.
\end{equation}

The following theorem gives an algebraic characterization of stationary globally identifiable for a CRN.

\begin{theorem}\label{thm:spi_alg}
	Consider a CRN $\mathcal{R}$. The following hold:
	\begin{enumerate}
		\item[i)] If $V = \emptyset$, then $\mathcal{R}$ is stationary globally identifiable over $\mathbb{R}^r_{>0}$.
		\item[ii)] If $\mathcal{R}$ is stationary globally identifiable over $\mathbb{R}^r_{>0}$, then $V' = \emptyset$.
	\end{enumerate}
\end{theorem}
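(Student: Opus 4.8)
The plan is to tie stationary global identifiability of $\mathcal{R}$ directly to the dimension of $\ker A(\vec{x},P)$ at feasible points, exploiting the standing assumption that \eqref{eq:LNA_rre} has a unique, exponentially stable equilibrium in $\mathbb{R}^n_{\geq 0}$ for every $\vec{k}>0$. The pivotal bookkeeping step is the equivalence: a pair $(\vec{x},P)$ with $\vec{x}\in\mathbb{R}^n_{\geq 0}$ satisfies $0=A(\vec{x},P)\vec{k}$ if and only if $(\vec{x},P)=\bigl(\vec{x}^*(\vec{k}),P^*(\vec{k})\bigr)$, i.e. $\mathcal{R}(\vec{k})=(\vec{x},\tfrac{1}{\Omega}P)$. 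One direction is \eqref{eq:stationary_A}; the converse uses nonnegativity of $\vec{x}$ to force $\vec{x}=\vec{x}^*(\vec{k})$ by uniqueness of the equilibrium, after which the Jacobian $\partial\vec{f}/\partial\vec{x}$ is Hurwitz and the Lyapunov equation \eqref{eq:stationary_dist_covariance} pins down $P=P^*(\vec{k})$ uniquely. In particular, $\mathcal{R}(\vec{k}_1)=\mathcal{R}(\vec{k}_2)$ holds exactly when $\vec{k}_1$ and $\vec{k}_2$ share a common $(\vec{x},P)$ lying in the null space of $A(\vec{x},P)$.

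For part i) I would argue directly. Assume $V=\emptyset$ and take $\vec{k}_1,\vec{k}_2\in\mathbb{R}^r_{>0}$ with $\mathcal{R}(\vec{k}_1)=\mathcal{R}(\vec{k}_2)=:(\vec{x},\tfrac1\Omega P)$. Then $\vec{x}=\vec{x}^*(\vec{k}_1)\in\mathbb{R}^n_{\geq 0}\subseteq\mathbb{R}^n$ and $0=A(\vec{x},P)\vec{k}_1$, so the triple $(\vec{x},P,\vec{k}_1)$ meets every membership condition of $V$ except possibly the rank bound; since $V=\emptyset$, we must have $\rank A(\vec{x},P)\geq r-1$. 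But $\vec{k}_1\neq 0$ lies in $\ker A(\vec{x},P)$, forcing $\rank A(\vec{x},P)\leq r-1$, hence $\rank A(\vec{x},P)=r-1$ and $\dim\ker A(\vec{x},P)=1$. As $\vec{k}_1$ and $\vec{k}_2$ both lie in this one-dimensional null space, $\vec{k}_2=a\vec{k}_1$ for some $a\in\mathbb{R}$, which is exactly Definition \ref{def:param_id}.

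For part ii) I would prove the contrapositive: $V'\neq\emptyset$ implies non-identifiability. Pick $(\vec{x},P,\vec{k})\in V'$; since $\vec{x}\in\mathbb{R}^n_{\geq 0}$, the bookkeeping step gives $\mathcal{R}(\vec{k})=(\vec{x},\tfrac1\Omega P)$. The rank condition $\rank A(\vec{x},P)<r-1$ yields $\dim\ker A(\vec{x},P)\geq 2$, so I can choose $\vec{v}\in\ker A(\vec{x},P)$ linearly independent of $\vec{k}$. Because $\vec{k}$ lies in the open orthant $\mathbb{R}^r_{>0}$, the perturbation $\vec{k}_2:=\vec{k}+\epsilon\vec{v}$ remains in $\mathbb{R}^r_{>0}$ for all sufficiently small $\epsilon\neq 0$, stays in $\ker A(\vec{x},P)$, and is not a scalar multiple of $\vec{k}$. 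Applying the bookkeeping step once more to $\vec{k}_2$ gives $\mathcal{R}(\vec{k}_2)=(\vec{x},\tfrac1\Omega P)=\mathcal{R}(\vec{k})$, exhibiting two non-proportional parameters in $\mathbb{R}^r_{>0}$ with equal stationary distribution, so $\mathcal{R}$ is not stationary globally identifiable.

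The step I expect to be the crux — and the reason for the deliberate mismatch between $V$ (with $\vec{x}\in\mathbb{R}^n$) and $V'$ (with $\vec{x}\in\mathbb{R}^n_{\geq 0}$) — is the passage from an algebraic solution of $0=A(\vec{x},P)\vec{k}$ back to a genuine stationary distribution. This passage needs $\vec{x}\geq 0$ to invoke uniqueness of the equilibrium; a spurious algebraic solution with some negative coordinate of $\vec{x}$ need not equal any physical $\mathcal{R}(\vec{k})$, so enlarging the domain to all of $\mathbb{R}^n$ only adds points that cannot certify non-identifiability. This is precisely why emptiness of the larger set $V$ is merely sufficient (part i) while emptiness of the smaller set $V'$ is merely necessary (part ii), with the gap between them living exactly at rank-deficient solutions having nonphysical, negative $\vec{x}$.
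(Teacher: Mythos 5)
Your proposal is correct and follows essentially the same route as the paper: part i) is the paper's contrapositive rank argument (two non-proportional positive vectors in $\ker A(\vec{x},P)$ force $\rank A(\vec{x},P)<r-1$) stated directly, and part ii) is the paper's argument verbatim in spirit — openness of $\mathbb{R}^r_{>0}$ to perturb $\vec{k}$ inside the kernel, then uniqueness of the equilibrium to conclude the perturbed rate vector has the same stationary distribution. Your ``bookkeeping step'' (Hurwitz Jacobian plus unique solvability of the Lyapunov equation \eqref{eq:stationary_dist_covariance}) merely makes explicit what the paper invokes implicitly when it asserts $(\vec{x}',P')$ is the stationary distribution for all rate vectors in the kernel subspace.
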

\begin{proof}
	First, to show i), suppose that $\mathcal{R}$ is not stationary globally identifiable over $\mathbb{R}^r_{>0}$. Then there exists $\vec{k}_1,\vec{k}_2 >0$, with $\vec{k}_2$ and $\vec{k}_1$ linearly independent, such that $0 = A(\vec{x},P)\vec{k}_1$ and $0 = A(\vec{x},P)\vec{k}_2$. This immediately implies that $\rank A(\vec{x},P) < r-1$, and therefore $(\vec{x},P,\vec{k}_1) \in V$. Now, to show ii), suppose that there exists $(\vec{x}',P',\vec{k}') \in V'$. By the definition of $V'$, $\rank A(\vec{x}',P') < r-1$, so there exists $W$, a subspace of dimension 2 containing $\vec{k}$ such that $0 = A(\vec{x}',P')W$. It then follows from the fact that $\mathbb{R}^r_{>0}$ is open that there exists $\vec{k}'' > 0$, linearly independent from $\vec{k}'$, such that $0 = A(\vec{x},P)\vec{k}''$. By the uniqueness of the equilibrium point of \eqref{eq:LNA_rre} in $\mathbb{R}_{\geq0}^n$, we know that $(\vec{x}',P')$ is the stationary distribution of $\mathbb{R}$ for all $\vec{k}\in W$, and therefore $\mathcal{R}$ is not stationary globally identifiable.
\end{proof}
\begin{remark}
	While our assumption that \eqref{eq:LNA_rre} has a unique, exponentially stable, equilibrium point in $\mathbb{R}_{\geq 0}^n$ is required for statement ii) of Theorem \ref{thm:spi_alg} to hold, this assumption is not required for statement i) of Theorem \ref{thm:spi_alg}.
\end{remark}

In the remainder of this section, we transform the rank condition on $A$ into a polynomial condition so that the question of the emptiness of $V$ can be addressed by algebraic techniques. To this end, we require the following Lemmas.

\begin{lemma}(Determinant rank characterization)\label{lem:det_rank_char_hornandj}
	Let $A \in \mathbb{R}^{n \times m}$. Then, $\rank A = r'$ if and only if every $r'+1 \times r'+1$ minor of $A$ is zero, and there exists an $r' \times r'$ minor of $A$ that is non-zero.
\end{lemma}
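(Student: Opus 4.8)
The plan is to prove both directions using the standard correspondence between minors, submatrices, and linear independence of rows and columns. The single fact I would rely on throughout is that a square real matrix has nonzero determinant if and only if its columns (equivalently, its rows) are linearly independent. The result is classical and could simply be cited, but a self-contained argument is short.

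First, for the forward direction, I would assume $\rank A = r'$ and establish the two conclusions separately. To produce a nonzero $r' \times r'$ minor, I would select $r'$ linearly independent columns of $A$, which exist because the column rank is $r'$; the resulting $n \times r'$ submatrix still has rank $r'$, so it possesses $r'$ linearly independent rows, and selecting those rows yields an $r' \times r'$ submatrix with linearly independent rows, hence nonzero determinant. To show every $(r'+1) \times (r'+1)$ minor vanishes, I would observe that any choice of $r'+1$ columns of $A$ must be linearly dependent, since the column rank is only $r'$; thus any $(r'+1) \times (r'+1)$ submatrix has linearly dependent columns and therefore zero determinant.

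For the converse, I would assume every $(r'+1) \times (r'+1)$ minor is zero while some $r' \times r'$ minor is nonzero. The nonzero minor forces the corresponding $r'$ columns of $A$ to be linearly independent, so $\rank A \geq r'$. For the reverse inequality I argue by contradiction: if $\rank A \geq r'+1$, then I could extract $r'+1$ linearly independent columns and, exactly as in the forward direction, $r'+1$ linearly independent rows from that tall submatrix, producing an $(r'+1) \times (r'+1)$ submatrix of full rank and hence nonzero determinant, contradicting the hypothesis. Combining the two inequalities gives $\rank A = r'$.

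The only step that requires genuine care rather than being purely mechanical is the repeated extraction of a full-rank square submatrix from a rectangular matrix of a given rank: choosing independent columns first, then independent rows from the resulting tall matrix, and verifying that the rank is preserved at each stage so that the final square submatrix is invertible. I would isolate this as a small auxiliary observation and invoke it in both directions; everything else reduces to the determinant/independence equivalence cited at the outset.
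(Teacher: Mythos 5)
Your proof is correct, but note that the paper does not actually supply an argument for this lemma: its entire proof is the citation \cite[Section 0.4]{horn2012matrix}, where the characterization of rank as the largest order of a nonvanishing minor is recorded as a standard fact. What you do differently is prove it from scratch, using only the equivalence between nonzero determinant and linear independence of rows/columns, together with the observation that a linear relation among columns of $A$ restricts to the same relation among those columns truncated to any subset of rows (and symmetrically for rows). The two-stage extraction you isolate---choose $r'$ independent columns, then $r'$ independent rows of the resulting tall $n \times r'$ submatrix, checking rank is preserved so the final square block is invertible---is indeed the only step requiring care, and you handle it correctly in both directions; the vanishing of all $(r'+1)\times(r'+1)$ minors also goes through, since a dependence among $r'+1$ full columns persists after deleting rows. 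The trade-off is the usual one: the paper's citation keeps the exposition short for a classical fact, while your argument makes the development self-contained. As a side benefit, your extraction technique gives a direct proof of the paper's Lemma \ref{lem:det_rank_char_ineq} as well, which the paper instead derives from the present lemma via a Laplace-expansion argument.
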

\begin{proof}
	See \cite[Section 0.4]{horn2012matrix}.
\end{proof}

\begin{lemma}\label{lem:det_rank_char_ineq}
	Let $A \in \mathbb{R}^{n \times m}$. Then, $\rank A < r'$ if and only if every $r' \times r'$ minor of $A$ is zero.
\end{lemma}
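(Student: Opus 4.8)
The plan is to deduce this as a short corollary of Lemma \ref{lem:det_rank_char_hornandj}, proving the two implications separately. For the forward direction, suppose $\rank A < r'$ and set $\rho = \rank A$, so that $\rho \le r' - 1$. The key observation is that passing to a submatrix cannot increase rank: any $r' \times r'$ submatrix $B$ of $A$ is obtained by deleting rows and columns, so $\rank B \le \rank A = \rho < r'$. Thus $B$ is a singular square matrix and hence $\det B = 0$. Since $B$ was an arbitrary $r' \times r'$ submatrix, every $r' \times r'$ minor of $A$ vanishes.

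For the converse I would argue by contraposition: assume $\rank A \ge r'$ and exhibit a nonzero $r' \times r'$ minor. Applying Lemma \ref{lem:det_rank_char_hornandj} with the role of $r'$ played by $\rho := \rank A$ yields a nonsingular $\rho \times \rho$ submatrix $M$ of $A$. The one extra ingredient needed is that a nonzero minor of order $\rho$ forces nonzero minors of every order $s \le \rho$: since $M$ is nonsingular its $\rho$ columns are linearly independent, so any $s$ of them form a $\rho \times s$ block of rank $s$, which contains $s$ linearly independent rows and therefore an $s \times s$ nonsingular submatrix of $M$, and hence of $A$. Taking $s = r' \le \rho$ produces a nonzero $r' \times r'$ minor of $A$, contradicting the hypothesis that all such minors vanish. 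Therefore $\rank A < r'$.

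I expect the argument to be essentially routine, with the only subtle point being the descent step just described in the converse: Lemma \ref{lem:det_rank_char_hornandj} directly guarantees a nonzero minor only at the single order equal to $\rank A$, whereas here I need one at the prescribed smaller order $r'$, so I must supply the short argument that nonvanishing minors persist at all orders below the rank. An alternative packaging that avoids this would stay entirely at the level of rank, writing $\rank A < r'$ as $\rank A = \rho$ for some $\rho \in \{0,1,\dots,r'-1\}$ and invoking Lemma \ref{lem:det_rank_char_hornandj} together with the submatrix-rank monotonicity of the forward direction; however, I would present the direct submatrix argument above, since it is the most transparent and makes the minor at order exactly $r'$ explicit.
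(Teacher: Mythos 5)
Your proof is correct, and while it has the same skeleton as the paper's (both directions handled separately, the converse by contraposition through Lemma \ref{lem:det_rank_char_hornandj}), the bridging mechanism is genuinely different. The paper propagates facts about minors across orders using the Laplace expansion of the determinant: in the forward direction, Lemma \ref{lem:det_rank_char_hornandj} with $r'' = \rank A$ gives vanishing of all $(r''+1)\times(r''+1)$ minors, and Laplace expansion pushes this upward to all minors of order $r' \geq r''+1$; in the converse, a nonzero minor of order $\rank A$ is pushed downward to a nonzero minor of order $r'$, again by Laplace expansion. You replace both uses of the expansion with rank arguments: in the forward direction you bypass Lemma \ref{lem:det_rank_char_hornandj} entirely, observing that every $r'\times r'$ submatrix has rank at most $\rank A < r'$ and is therefore singular; in the converse, your descent step extracts from a nonsingular $\rho\times\rho$ submatrix ($\rho = \rank A$) first $r'$ linearly independent columns and then $r'$ linearly independent rows, producing a nonsingular $r'\times r'$ submatrix. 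Both routes are sound. Yours is slightly more elementary, in that the only determinant fact required is that a square matrix is singular if and only if its determinant vanishes, whereas the paper handles both directions with a single determinant-theoretic tool, consistent with its citation of Horn and Johnson. You also correctly identified the descent step as the one point requiring an argument beyond Lemma \ref{lem:det_rank_char_hornandj}; it is precisely the step the paper covers with the Laplace expansion.
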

\begin{proof}
	First, we show that if $\rank A < r'$, then every $r' \times r'$ minor of $A$ is zero. Let $\rank A = r'' < r'$. Then, by Lemma \ref{lem:det_rank_char_hornandj}, every $r''+1\times r'' + 1$ minor of $A$ is zero. Furthermore, by the Laplace expansion for the determinant \cite{horn2012matrix}, for all $r''' \geq r'' + 1$, every $r''' \times r'''$ minor of $A$ is zero. Specifically, since $r' \geq r'' + 1$, every $r' \times r'$ minor of $A$ is zero. Second, we show that if $\rank A \geq r'$, then there exists a nonzero $r'\times r'$ minor of $A$. Let $\rank A = r'' \geq r'$. By Lemma \ref{lem:det_rank_char_hornandj} there exists an $r''\times r''$ nonzero minor of $A$. It follows from the Laplace expansion for the determinant \cite{horn2012matrix} that for all $r''' \leq r''$ there exists an $r'''\times r'''$ nonzero minor of $A$. Specifically, there exists an $r' \times r'$ nonzero minor of $A$.
\end{proof}
We now use Lemma \ref{lem:det_rank_char_ineq} and Theorem \ref{thm:spi_alg} to give a computationally checkable sufficient condition for a CRN to be stationary globally identifiable.
\begin{theorem}\label{thm:grob_check}
	Consider a CRN $\mathcal{R}$. If the reduced Gr\"{o}bner basis of
	\begin{equation}\label{eq:ideal}
	\begin{multlined}
		\mathcal{I} = \left\langle y_j^2k_j - 1\; \forall j\in\{1,\dots,r\},\;A_q(\vec{x},P)\vec{k}\; \forall q \in \{1,\dots r\},\vphantom{M^{(r)}}\right.\\\left. M^{(r-1)\times (r-1)}_i(\vec{x},P)\;  \forall i \in \{1,\dots,m\} \right\rangle
	\end{multlined}
	\end{equation}
	is $\{1\}$, then $\mathcal{R}$ is stationary globally identifiable over $\mathbb{R}^r_{>0}$. Here, $A_q(\vec{x},P)$ is the $q$\textsuperscript{th} row of $A(\vec{x},P)$ and $M^{(r-1)\times (r-1)}_i(\vec{x},P)$ is all of the size $(r-1) \times (r-1)$ minors of $A(\vec{x},P)$, indexed by $i=1,\dots,m$.
\end{theorem}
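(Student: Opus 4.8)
The plan is to prove the theorem by contraposition against part i) of Theorem~\ref{thm:spi_alg}, which already tells us that $V=\emptyset$ implies stationary global identifiability over $\mathbb{R}^r_{>0}$. So it suffices to show that if the reduced Gr\"obner basis of $\mathcal{I}$ in \eqref{eq:ideal} equals $\{1\}$, then the set $V$ of \eqref{eq:set_S_eq_zero} is empty. I would argue the contrapositive: assume some $(\vec{x},P,\vec{k})\in V$ and derive that the reduced Gr\"obner basis cannot be $\{1\}$.

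First I would set up the correspondence between membership in $V$ and membership in the complex variety $\mathcal{V}(\mathcal{I})$ of the generators of $\mathcal{I}$. Take a point $(\vec{x},P,\vec{k})\in V$. By definition this real point satisfies $A(\vec{x},P)\vec{k}=0$, equivalently $A_q(\vec{x},P)\vec{k}=0$ for every row index $q$, together with $\rank A(\vec{x},P)<r-1$ and $\vec{k}>0$. The next step is to translate each of these conditions into the vanishing of one of the generator families. The row equations match the generators $A_q(\vec{x},P)\vec{k}$ directly. For the rank condition I would apply Lemma~\ref{lem:det_rank_char_ineq} with $r'=r-1$: since $\rank A<r-1$, every $(r-1)\times(r-1)$ minor of $A$ vanishes, so all the generators $M^{(r-1)\times(r-1)}_i(\vec{x},P)$ vanish at this point.

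The one remaining family is the Rabinowitsch-style generators $y_j^2k_j-1$, which encode the open condition $k_j\neq 0$ over $\mathbb{C}$. Here I would use that each $k_j>0$ is real and positive to choose the auxiliary value $y_j=k_j^{-1/2}\in\mathbb{R}$, so that $y_j^2k_j-1=0$ for every $j$. Augmenting the point with these $y_j$ coordinates yields a genuine real, hence complex, common zero of all generators in \eqref{eq:ideal}, so $\mathcal{V}(\mathcal{I})\neq\emptyset$. By Hilbert's Nullstellensatz in the form of Theorem~\ref{thm:nullstellensatz_coro}, a nonempty variety forces the reduced Gr\"obner basis of $\mathcal{I}$ to differ from $\{1\}$, contradicting the hypothesis. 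Hence no such point exists, $V=\emptyset$, and Theorem~\ref{thm:spi_alg} i) gives the conclusion.

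The individual steps are routine; the point that needs the most care is the direction of the implications across the real/complex boundary. The Nullstellensatz certifies emptiness of the variety over $\mathbb{C}$, so the argument is inherently one-sided: the Gr\"obner basis equaling $\{1\}$ rules out all complex common zeros, and in particular the real ones that would populate $V$, but the converse can fail, since a real $V$ may be empty while $\mathcal{V}(\mathcal{I})$ still retains complex zeros. This is precisely why the theorem asserts only a sufficient condition. I would also confirm that the auxiliary variables $y_j$ serve solely to excise the coordinate hyperplanes $k_j=0$: the square in $y_j^2k_j$ is immaterial over $\mathbb{C}$, so these generators remove no point of the cone $\vec{k}>0$, guaranteeing that every point of $V$ indeed lifts to $\mathcal{V}(\mathcal{I})$.
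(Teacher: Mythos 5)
Your proposal is correct and follows essentially the same route as the paper's own proof: lift a point of $V$ by setting $y_j = k_j^{-1/2}$, invoke Lemma~\ref{lem:det_rank_char_ineq} to convert the rank condition into vanishing minors, apply Theorem~\ref{thm:nullstellensatz_coro} to conclude the reduced Gr\"obner basis cannot be $\{1\}$, and finish with Theorem~\ref{thm:spi_alg}~i). The only difference is organizational: the paper proves the full equivalence $V=\emptyset \iff \bar{V}=\emptyset$ while you prove only the one direction actually needed, and your closing remark about the one-sidedness across the real/complex boundary is accurate.
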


\begin{remark}
	The ideal $\mathcal{I}$ defined in \eqref{eq:ideal} is a subset of $\mathbb{Q}[(\vec{x},\vec{y},\vec{k})]$.
\end{remark}

\begin{proof}
Let
	\begin{multline}
		\bar{V} = \left\{ (\vec{x}, P, \vec{k}, \vec{y}) \in (\mathbb{R}^n, \mathbb{S}^{n \times n}, \mathbb{R}^r, \mathbb{R}^r) \middle| 0 = A(\vec{x},P)\vec{k},\vphantom{M^{(r)}}\right.\\\left. 0 = M^{(r-1)\times (r-1)}_i(\vec{x},P)\; \forall i \in \{1,\dots,m\},\; 0 = y_j^2k_j - 1\; \forall j\in \{1,\dots,r\} \right\}.
	\end{multline}
 Recall $V$ defined in \eqref{eq:set_S_eq_zero}. We first show that $V = \emptyset$ if and only if $\bar{V} = \emptyset$. First, suppose $V \neq \emptyset$. Then, there exists $(\vec{x}, P, \vec{k}) \in V$. It follows that $0 = A(\vec{x},P)\vec{k}$. Let $\vec{y}$ be such that $y_j = \sqrt{1/k_j}$. Therefore, for all $j$, $y_j^2k_j - 1 = 0$. By Lemma \ref{lem:det_rank_char_ineq}, $\mbox{rank}(A(\vec{x},P)) < r-1$ guarantees that $0 = M^{(r-1)\times (r-1)}_i(\vec{x},P)$ for all $i = 1,\dots,m$, and hence $(\vec{x}, P, \vec{k}, \vec{y}) \in \bar{V}$. 
Now suppose that $\bar{V} \neq \emptyset$. Then, there exists $(\vec{x}, P, \vec{k}, \vec{y}) \in \bar{V}$. It follows that $0 = A(\vec{x},P)\vec{k}$. Then, we have that $0 = M^{(r-1)\times (r-1)}_i(\vec{x},P)$  for all $i = 1,\dots,m$, and hence by Lemma \ref{lem:det_rank_char_ineq} it is true that $\rank A(\vec{x},P) < r-1$. Therefore $(\vec{x}, P, \vec{k}) \in V$, and hence $V \neq \emptyset$.
To complete the proof, observe that $\bar{V}$ is the variety of $\mathcal{I}$ defined by \eqref{eq:ideal}. If the reduced Gr\"{o}bner basis of $\mathcal{I}$ is $\{1\}$ then by Theorem \ref{thm:nullstellensatz_coro} $\bar{V} = \emptyset$. This implies by our above argument that $V = \emptyset$, and therefore by Theorem \ref{thm:spi_alg} $\mathcal{R}$ is stationary globally identifiable over $\mathbb{R}^r_{>0}$.
\end{proof}

Since the computation of reduced Gr\"{o}bner bases can be done algorithmically, Theorem \ref{thm:grob_check} allows us to check if a CRN is stationary globally identifiable automatically.
\begin{remark}\label{remark:positivstellensatz}
	Even though in this work we focus on using Hilbert's Nullstellensatz to certify identifiability, alternatively Positivstellensatz can be used to search for a certificate that $V=\emptyset$~\cite{stengle1974nullstellensatz}.
\end{remark}

\setcounter{example}{0}
\begin{example}[Illustrative example 1 continued]
We continue with Example 1. We ask if $\mathcal{R}_1$, given by \eqref{eq:ex1_reactions}, is stationary globally identifiable over $\mathbb{R}^3_{>0}$. In this case, $r=3$, $n=1$, $\vec{x}= x_1$, and $P = p_{11}$. Using \eqref{eq:ex1_A}, \eqref{eq:ideal} becomes
\begin{equation}\label{eq:example1_ideal}
\begin{multlined}
	\left\langle k_1y_1^2 - 1, k_2y_2^2 - 1, k_3y_3^2 - 1,\;
	k_1 - k_2x_1 - k_3x_1^2,\right.\\\left.
	k_1 - k_3(4p_{11}x_1 - x_1^2) - k_2(2p_{11} - x_1),\;
	 2x_1 - 2p_{11}, 2x_1^2 - 4p_{11}x_1, 2p_{11}x_1^2\right\rangle.
\end{multlined}
\end{equation}
Computing the reduced Gr\"{o}bner basis of \eqref{eq:example1_ideal} using the built in implementation of Buchberger's algorithm in Macaulay2 \cite{M2}, we find that it is $\{1\}$~\cite{M2}. Therefore, by Theorem \ref{thm:grob_check}, $\mathcal{R}_1$ is stationary globally identifiable over $\mathbb{R}^3_{>0}$.%
\end{example}

\subsection{Examples}
In this section, we present several examples of using the mathematical tools of Section \ref{sec:main_result} to certify that a given CRN is stationary globally identifiable. For all of the examples in this section, we compute reduced Gr\"{o}bner bases with Macaulay2, a software system for algebraic geometry \cite{M2}.

\setcounter{example}{2}
\begin{example}[Two species illustrative example]\label{ex:complex_balanced}
We now consider CRN $\mathcal{R}_3$ shown in \eqref{eq:complex_balanced}:
\begin{equation}\label{eq:complex_balanced}
	\vcenter{\hbox{\schemestart
		\subscheme{$\emptyset$}
    		\arrow(z--x1){->[$k_1$]}[135]
		\subscheme{$\mathrm{X}_1$}
		\arrow(@z--x2){<-[$k_3$]}[45]
		\subscheme{$\mathrm{X}_2$}
		\arrow(@x1--@x2){->[$k_2$]}
	\schemestop}}
\end{equation}
$\mathcal{R}_3$ has two species, $\mathrm{X}_1$ an $\mathrm{X}_2$. $\mathrm{X}_1$ is produced with rate constant $k_1$ and spontaneously transforms into $\mathrm{X}_2$ with rate constant $k_2$, which is degraded with rate constant $k_3$. We wish to understand if it is possible to estimate the rate vector $\vec{k}$ up to a scaling factor from the stationary distribution. For this example, $\vec{f}(\vec{x};\vec{k})$ defined in \eqref{eq:RRE_def} is
\begin{equation}\label{eq:ex3_rre}
	\vec{f}(\vec{x};\vec{k}) = 
	\begin{bmatrix}
		k_1 - k_2x_1\\
		k_2x_1 - k_3x_2
	\end{bmatrix},
\end{equation}
and $\Gamma(\vec{x};\vec{k})$ defined in \eqref{eq:Gamma_def} is
\begin{equation}\label{eq:ex3_gamma}
	\Gamma(\vec{x};\vec{k})\Gamma(\vec{x};\vec{k})^T =
	\begin{bmatrix}
		k_1 + k_2x_1 & -k_2x_1\\
	  -k_2x_1 & k_2x_1 + k_3x_2
	\end{bmatrix}.
\end{equation}
Writing \eqref{eq:LNA} in the form \eqref{eq:stationary_A} yields
\begin{equation}\label{eq:ex3_A}
	0 = A(\vec{x},P)\vec{k} = 
	\begin{bmatrix}
		1 &         -x_1 & 0 &           0 &           0\\
		0 &           0 & 1 &         -x_2 &          x_1\\
		1 & x_1 - 2p_{11} & 0 &           0 &           0\\
		0 &       -p_{12} & 0 &       -p_{12} &        p_{11}\\
		0 &           0 & 1 & x_2 - 2p_{22} & 2p_{12} + x_1
	\end{bmatrix}\vec{k}.
\end{equation}
Computing the reduced Gr\"{o}bner basis $\mathcal{G}$ of the ideal defined by \eqref{eq:ideal} with $A$ given in \eqref{eq:ex3_A}, we find that $\mathcal{G}=\{1\}$, and hence by Theorem \ref{thm:grob_check} $\mathcal{R}_3$ is stationary globally identifiable over $\mathbb{R}^3_{>0}$.%
\end{example}

\begin{example}[Sequestration rate]\label{ex:seq_rate}
Consider a CRN $\mathcal{R}_4$ consisting of two species $\mathrm{X}_1$ and $\mathrm{X}_2$ as shown in \eqref{eq:ex4_reactions}:
\begin{equation}\label{eq:ex4_reactions}
	\vcenter{\hbox{\schemestart
		\subscheme{$\emptyset$}
    		\arrow(z--x1){<=>[$k_2$][$k_1$]}[135]
		\subscheme{$\mathrm{X}_1$}
		\arrow(@z--x2){<=>[$k_3$][$k_4$]}[90]
		\subscheme{$\mathrm{X}_2$}
		\arrow(@z--x12){<-[$k_5$]}[45]
		\subscheme{$\mathrm{X}_1 + \mathrm{X}_2$}
	\schemestop}}
\end{equation}
Each species is produced and degraded at some unknown rate, and additionally $\mathrm{X}_1$ and $\mathrm{X}_2$ mutually degrade through the reaction \schemestart $\mathrm{X}_1+\mathrm{X}_2$\arrow{->[$k_5$]}$\emptyset$\schemestop. Such a system of chemical reactions is referred to as the \emph{antithetic} motif, and can be used to realize an integral controller~\cite{qian2018realizing,huang2018quasi,Aoki:2019aa}. Controllers constructed using the antithetic motif only approximately implement an integrator~\cite{qian2018realizing}. Based on \cite{qian2018realizing}, we can establish a heuristic to compare two possible biological implementations of the antithetic motif with parameter vectors $\vec{k}^A$ and $\vec{k}^B$ respectively with respect to the steady state error generated in a feedback system. To do this, we define the following dimensionless parameters:
\begin{align}\label{eq:sigmas}
		\sigma_1\left(\vec{k}^A,\vec{k}^B\right) &= \textstyle\frac{k_2^B k_5^A}{k_5^Bk_2^A}, & \sigma_2\left(\vec{k}^A,\vec{k}^B\right) &= \textstyle\frac{k_2^Bk_1^A}{k_1^Bk_2^A},\\ \sigma_3\left(\vec{k}^A,\vec{k}^B\right) &= \textstyle\frac{k_4^Bk_5^A}{k_5^Bk_4^A}, & \sigma_4\left(\vec{k}^A,\vec{k}^B\right) &= \textstyle\frac{k_4^Bk_3^A}{k_3^Bk_4^A}.
\end{align}
If $\sigma_i(\vec{k}^A,\vec{k}^B) << 1$ for $i \in \{1,2,3,4\}$, then $\vec{k}^B$ is expected to perform better than $\vec{k}^A$.  We observe that 
for all $\alpha^A,\alpha^B > 0$ we have $\sigma_i\left(\alpha^A\vec{k}^A,\alpha^B\vec{k}^B\right) = \sigma_i\left(\vec{k}^A,\vec{k}^B\right)$ for $i\in\{1,2,3,4\}$. Therefore, stationary global identifiability ensures that one can estimate $\sigma_i\left(\vec{k}^A,\vec{k}^B\right)$ for $i=1,2,3,4$ from the stationary distribution of $\mathcal{R}_4$. Motivated by this we study whether $\mathcal{R}_4$ is stationary globally identifiable. For $\mathcal{R}_4$ we have that 
\begin{equation}
	\vec{f}(\vec{x};\vec{k}) = 
	\begin{bmatrix}
		k_1 - k_2x_1 - k_5x_1x_2\\
		k_3 - k_4x_2 - k_5x_1x_2
	\end{bmatrix}
\end{equation}
and
\begin{equation}
	\Gamma(\vec{x};\vec{k})\Gamma(\vec{x};\vec{k})^T =
	\begin{bmatrix}
		k_1 + k_2x_1 + k_5x_1x_2 & k_5x_1x_2\\
		k_5x_1x_2 & k_3 + k_4x_2 + k_5x_1x_2
	\end{bmatrix}.
\end{equation}
Therefore, writing \eqref{eq:LNA} in the form \eqref{eq:stationary_A} yields
\begin{equation}\label{eq:ex4_A}
	0 = A(\vec{x},P)\vec{k} = 
	\begin{bmatrix}
		1 &         -x_1 & 0 &           0 &                                        -x_1x_2\\
		0 &           0 & 1 &         -x_2 &                                        -x_1x_2\\
		1 & x_1 - 2p_{11} & 0 &           0 &                 x_1x_2 - 2p_{12}x_1 - 2p_{11}x_2\\
		0 &       -p_{12} & 0 &       -p_{12} & x_1x_2 - p_{12}x_1 - p_{12}x_2 - p_{22}x_1 - p_{11}x_2\\
		0 &           0 & 1 & x_2 - 2p_{22} &                 x_1x_2 - 2p_{22}x_1 - 2p_{12}x_2
	\end{bmatrix}\vec{k}.
\end{equation}
Computing the reduced Gr\"{o}bner basis $\mathcal{G}$ of the ideal defined by \eqref{eq:ideal} with $A$ in \eqref{eq:ex4_A} we find that $\mathcal{G} =\{1\}$, and therefore by Theorem \ref{thm:grob_check} $\mathcal{R}_4$ is stationary globally identifiable.%
We have shown that measurements of the stationary distributions are sufficient to infer which of two biological implementations of $\mathcal{R}_4$ is better for implementing antithetic feedback control.
\end{example}

\begin{example}[Cooperative enzymatic degradation]
We now consider $\mathcal{R}_5$ shown in \eqref{eq:ex5_reactions}.
\begin{equation}\label{eq:ex5_reactions}
	\vcenter{\hbox{\schemestart
		\subscheme{$\emptyset$}
    		\arrow(z--x1){<=>[$k_2$][$k_1$]}[135]
		\subscheme{$\mathrm{X}_1$}
		\arrow(@z--x2){<=>[$k_3$][$k_4$]}[90]
		\subscheme{$\mathrm{X}_2$}
		\arrow(@x2--x12){0}[0]
		\subscheme{$2\mathrm{X}_1 + \mathrm{X}_2$}
		\arrow(@x12--x11){->[$k_5$]}[-90]
		\subscheme{$2\mathrm{X}_1$}
	\schemestop}}
\end{equation}
Note that $\mathcal{R}_5$ is similar to $\mathcal{R}_4$ considered in Example \ref{ex:seq_rate}, but the mutual degradation of $\mathrm{X}_1$ and $\mathrm{X}_2$ has been replaced by $\mathrm{X}_1$ enzymatically degrading $\mathrm{X}_2$ via the reaction \schemestart $2\mathrm{X}_1+\mathrm{X}_2$\arrow{->[$k_5$]}[0,0.5]$2X_1$\schemestop. Such an enzymatic reaction, where two copies of $\mathrm{X}_1$ bind with and degrade one copy of $\mathrm{X}_2$ is encountered when an mRNA molecule has two target sites for a complementary microRNA to bind to, both of which must be bound for degradation of the mRNA to occur~\cite{Flamand:2017vb}. For $\mathcal{R}_5$ we have that $\vec{f}(\vec{x};\vec{k})$ defined in \eqref{eq:RRE_def} is given by
\begin{equation}
	\vec{f}(\vec{x};\vec{k}) = 
	\begin{bmatrix}
		k_1 - k_2x_1 \\
	- k_5x_2x_1^2 + k_3 - k_4x2
	\end{bmatrix}
\end{equation}
and $\Gamma(\vec{x};\vec{k})$ defined in \eqref{eq:Gamma_def} is given by
\begin{equation}
	\Gamma(\vec{x};\vec{k})\Gamma(\vec{x};\vec{k})^T =
	\begin{bmatrix}
		k_1 + k_2x_1 & 0\\
		0 & k_5x_2x_1^2 + k_3 + k_4x_2
	\end{bmatrix}.
\end{equation}
Therefore, writing \eqref{eq:LNA} in the form \eqref{eq:stationary_A} yields
\begin{equation}\label{eq:ex5_A}
	0 = A(\vec{x},P)\vec{k} = 
	\begin{bmatrix}
		1 &         -x_1 & 0 &           0 &                                    0\\
		0 &           0 & 1 &         -x_2 &                             -x_1^2x_2\\
		1 & x_1 - 2p_{11} & 0 &           0 &                                    0\\
		0 &       -p_{12} & 0 &       -p_{12} &           - p_{12}x_1^2 - 2p_{11}x_2x_1\\
		0 &           0 & 1 & x_2 - 2p_{22} & x_1^2x_2 - 2p_{22}x_1^2 - 4p_{12}x_1x_2
	\end{bmatrix}\vec{k}.
\end{equation}
Computing the Gr\"{o}bner basis $\mathcal{G}$ of the ideal defined by \eqref{eq:ideal} with $A$ in \eqref{eq:ex5_A}, we find that $\mathcal{G} =\{1\}$, and therefore by Theorem \ref{thm:grob_check} $\mathcal{R}_5$ is stationary globally identifiable over $\mathbb{R}^5_{>0}$.%
\end{example}

We now apply the results of Section \ref{sec:main_result} to two different CRNs with three species.

\begin{example}[Activation cascade]\label{ex:act_cascade}
We consider a simplified model of an activation cascade $\mathcal{R}_6$, as shown in \eqref{eq:ex6_reactions_act_cascade}:
\begin{equation}\label{eq:ex6_reactions_act_cascade}
	\vcenter{\hbox{\schemestart
		\subscheme{$\emptyset$}
    		\arrow(z--x1){<=>[$k_1$][$k_2$]}[180]
		\subscheme{$\mathrm{X}_1$}
          \arrow(@z--x2){<=>[$k_3$][$k_4$]}[90]
        \subscheme{$\mathrm{X}_2$}
          \arrow(@z--x3){<=>[$k_5$][$k_6$]}[0]
        \subscheme{$\mathrm{X}_3$}
		  \arrow(@x1--x12){->[$k_7$]}[90]
        \subscheme{$\mathrm{X}_1 + \mathrm{X}_2$}
          \arrow(@x2--x23){->[$k_8$]}[0]
		\subscheme{$\mathrm{X}_2 + \mathrm{X}_3$}
	\schemestop}}
\end{equation}
In our simplified model $\mathcal{R}_6$, we have three species, $\mathrm{X}_1$, $\mathrm{X}_2$, and $\mathrm{X}_3$, each of which is a protein species. $\mathrm{X}_1$ activates the production of $\mathrm{X}_2$, which we model by the reaction \schemestart $\mathrm{X}_1$\arrow{->[$k_7$]}[0,0.5]$\mathrm{X}_1 + \mathrm{X}_2$\schemestop. Similarly, $\mathrm{X}_2$ activates the production of $\mathrm{X}_3$ as modeled by the reaction \schemestart $\mathrm{X}_2$\arrow{->[$k_7$]}[0,0.5]$\mathrm{X}_2 + \mathrm{X}_3$\schemestop. Reactions 1 through 6 model each species degrading as well as being produced at some basal rate. For $\mathcal{R}_6$, $\vec{f}(\vec{x};\vec{k})$ defined in \eqref{eq:RRE_def} is given by
\begin{equation}
	\vec{f}(\vec{x};\vec{k}) = 
	\begin{bmatrix}
		k_1 - k_2x_1\\
		k_3 - k_4x_2 + k_7x_1\\
		k_5 - k_6x_3 + k_8x_2
	\end{bmatrix}
\end{equation}
and $\Gamma(\vec{x};\vec{k})\Gamma(\vec{x};\vec{k})^T$ with $\Gamma(\vec{x};\vec{k})$ defined in \eqref{eq:Gamma_def} is given by 
\begin{equation}
	\Gamma(\vec{x};\vec{k})\Gamma(\vec{x};\vec{k})^T =
	\begin{bmatrix}
		k_1 + k_2x_1 & 0 & 0\\
		0 & k_3 + k_4x_2 + k_7x_1 & 0\\
		0 & 0 & k_5 + k_6x_3 + k_8x_2
	\end{bmatrix}.
\end{equation}
Therefore, writing \eqref{eq:LNA} in the form \eqref{eq:stationary_A} yields
\begin{equation}\label{eq:ex6_A}
	0 = A(\vec{x},P)\vec{k} = 
	\begin{bmatrix}
		1 &         -x_1 & 0 &           0 & 0 &           0 &           0 &           0\\
		0 &           0 & 1 &         -x_2 & 0 &           0 &          x_1 &           0\\
		0 &           0 & 0 &           0 & 1 &         -x_3 &           0 &          x_2\\
		1 & x_1 - 2p_{11} & 0 &           0 & 0 &           0 &           0 &           0\\
		0 &       -p_{12} & 0 &       -p_{12} & 0 &           0 &        p_{11} &           0\\
		0 &       -p_{13} & 0 &           0 & 0 &       -p_{13} &           0 &        p_{12}\\
		0 &           0 & 1 & x_2 - 2p_{22} & 0 &           0 & 2p_{12} + x_1 &           0\\
		0 &           0 & 0 &       -p_{23} & 0 &       -p_{23} &        p_{13} &        p_{22}\\
		0 &           0 & 0 &           0 & 1 & x_3 - 2p_{33} &           0 & 2p_{23} + x_2
	\end{bmatrix}\vec{k}.
\end{equation}
Computing the reduced Gr\"{o}bner basis $\mathcal{G}$ of the ideal \eqref{eq:ideal} with $A$ given in \eqref{eq:ex6_A}, we find that $\mathcal{G} = \{1\}$, and therefore by Theorem \ref{thm:grob_check} $\mathcal{R}_6$ is stationary globally identifiable over $\mathbb{R}^8_{>0}$.%
\end{example}

\begin{example}[Coupled sequestration reactions]
We now consider a biological system with three species $\mathrm{X}_1$, $\mathrm{X}_2$, and $\mathrm{X}_3$ where $\mathrm{X}_2$ binds to and mutually degrades with both $\mathrm{X}_1$ and $\mathrm{X}_3$. We model this system by the CRN shown in \eqref{eq:ex7_reactions}:
\begin{equation}\label{eq:ex7_reactions}
	\schemestart
		\subscheme{$\emptyset$}
    		\arrow(z--x1){<=>[$k_2$][$k_1$]}[135]
		\subscheme{$\mathrm{X}_1$}
		\arrow(@z--x2){->[$k_3$]}[90]
		\subscheme{$\mathrm{X}_2$}
		\arrow(@z--x3){->[$k_4$]}[45]
		\subscheme{$\mathrm{X}_3$}
		\arrow(@z--x12){<-[$k_5$]}[180]
        \subscheme{$\mathrm{X}_1 + \mathrm{X}_2$}
		\arrow(@z--x23){<-[$k_6$]}[0]
		\subscheme{$\mathrm{X}_2 + \mathrm{X}_3$}
	\schemestop
\end{equation}
We assume that all three species are produced at some rate, but only $\mathrm{X}_1$ spontaneously degrades. This CRN is a coarse model of two RNA species ($\mathrm{X}_1$ and $\mathrm{X}_3$), which are degraded by the same microRNA species ($\mathrm{X}_2$). Such systems are common in biology, as some microRNA species are known to regulate multiple genes by targeting the corresponding mRNA species~\cite{linsley:2007wg}. For $\mathcal{R}_7$ the definition of $\vec{f}(\vec{x};\vec{k})$ in \eqref{eq:RRE_def} gives
\begin{equation}
	\vec{f}(\vec{x};\vec{k}) = 
	\begin{bmatrix}
		k_1 - k_2x_1 - k_5x_1x_2\\
		k_3 - k_5x_1x_2 - k_6x_2x_3\\
		k_4 - k_6x_2x_3
	\end{bmatrix}
\end{equation}
and using the definition of $\Gamma(\vec{x};\vec{k})$ given in \eqref{eq:Gamma_def} we obtain that
\begin{equation}
	\Gamma(\vec{x};\vec{k})\Gamma(\vec{x};\vec{k})^T =
	\begin{bmatrix}
		k_1 + k_2x_1 + k_5x_1x_2 & k_5x_1x_2 & 0\\
		k_5x_1x_2 & k_3 + k_5x_1x_2 + k_6x_2x_3 & k_6x_2x_3\\
		0 & k_6x_2x_3 & k_4 + k_6x_2x_3
	\end{bmatrix}.
\end{equation}
Therefore, writing \eqref{eq:LNA} in the form \eqref{eq:stationary_A} yields $0 = A(\vec{x},P)\vec{k}$, where  
\begin{equation}\label{eq:ex7_A}
\begin{multlined}
	A(\vec{x},P) = \\
	\mbox{\footnotesize
	$
	\begin{bmatrix}
		1 &         -x_1 & 0 & 0 &                                        -x_1x_2 &                                             0\\
		0 &           0 & 1 & 0 &                                        -x_1x_2 &                                        -x_2x_3\\
		0 &           0 & 0 & 1 &                                             0 &                                        -x_2x_3\\
		1 & x_1 - 2p_{11} & 0 & 0 &                 x_1x_2 - 2p_{12}x_1 - 2p_{11}x_2 &                                             0\\
		0 &       -p_{12} & 0 & 0 & x_1x_2 - p_{12}x_1 - p_{12}x_2 - p_{22}x_1 - p_{11}x_2 &                           - p_{12}x_3 - p_{13}x_2\\
		0 &       -p_{13} & 0 & 0 &                           - p_{13}x_2 - p_{23}x_1 &                           - p_{12}x_3 - p_{13}x_2\\
		0 &           0 & 1 & 0 &                 x_1x_2 - 2p_{22}x_1 - 2p_{12}x_2 &                 x_2x_3 - 2p_{23}x_2 - 2p_{22}x_3\\
		0 &           0 & 0 & 0 &                           - p_{13}x_2 - p_{23}x_1 & x_2x_3 - p_{23}x_2 - p_{23}x_3 - p_{33}x_2 - p_{22}x_3\\
		0 &           0 & 0 & 1 &                                             0 &                 x_2x_3 - 2p_{33}x_2 - 2p_{23}x_3
	\end{bmatrix}$}.
\end{multlined}
\end{equation}
Computing the reduced Gr\"{o}bner basis $\mathcal{G}$ of the ideal \eqref{eq:ideal} with $A$ given in \eqref{eq:ex7_A}, we find that $\mathcal{G} = \{1\}$, and therefore by Theorem \ref{thm:grob_check} $\mathcal{R}_7$ is stationary globally identifiable over $\mathbb{R}^6_{>0}$.%
\end{example}
An example of a non-identifiable CRN is provided in Example \ref{ex:feedback}, which is deferred until Section \ref{sec:idble:extrinsic}.
\section{Model discrimination}\label{sec:idble_model_id}
An application of the results of  Section \ref{sec:main_result} is to certifying a type of identifiability where instead of asking if it is possible to uniquely determine the value of $\vec{k}$, we ask if it is possible to determine whether the rate constant vector $\vec{k}$ is in $K_1 \subseteq \mathbb{R}_{\geq 0}^r$ or in $K_2 \subseteq \mathbb{R}_{\geq 0}^r$. For example, we may be interested in determining which of two reactions is present in our system, with the knowledge that at most one of the two reactions is present. This notion is formalized in the following definition.
\begin{definition}
	A CRN $\mathcal{R}$ is stationary model discriminable between $K_1$ and $K_2$ if there does not exist $\vec{k}_1\in K_1, \vec{k}_2\in K_2$ such that $\mathcal{R}(\vec{k}_1) = \mathcal{R}(\vec{k}_2)$. 
\end{definition}
In this work, we do not give a complete characterization of stationary model discriminability in our problem setting, however we do present the following result, which allows us to directly apply the framework  developed in this work to certify stationary model discriminability for CRNs.
We first consider how to certify that a CRN is stationary globally identifiable over a general set $K$ defined in terms of polynomial equations. To do this, we consider a set
\begin{equation}\label{eq:lifted_K}
		\bar{K} = \left\{ (\vec{k},\vec{y}) \in \mathbb{R}^{r+l} \middle| h_i(\vec{k},\vec{y}) = 0, \;i = 1,2,\dots,p \right\}
\end{equation}
where $h_i(\vec{k},\vec{y})$ are polynomials such that the orthogonal projection of $\bar{K}$ onto the $\vec{k}$ space is equal to $K$. We call such a $\bar{K}$ a lifted representation of $K$. If $K$ is a semialgebraic set, that is, a finite union of sets described by polynomial equalities and inequalities, then it is always possible to construct a lifted representation as in \eqref{eq:lifted_K} with $l=1$ \cite{motzkin1970real}. A simple way to convert a strict inequality of the form $p(\vec{x}) >  0$, to an equality is by adding a variable $y$, and using the constraint $p(\vec{x})y^2 - 1 = 0$. Similarly, an inequality of the form $p(\vec{x}) \geq 0$ can be converted to an equality by adding a variable $y$ and using the constraint $p(\vec{x}) - y^2 = 0$~\cite{bochnak2013real,bertsekas1997nonlinear}.
\begin{theorem}\label{thm:general_K}
	Consider a CRN $\mathcal{R}$ and a set $K$ such that $\bar{K}$ defined in \eqref{eq:lifted_K} is a lifted representation of $K$. If the reduced Gr\"{o}bner basis of
	\begin{equation}\label{eq:general_K_ideal}
		\left\langle h_j(\vec{k},\vec{y}),\; j=1,\dots,p, \;A_q(\vec{x},P)\vec{k}\; q=1,\dots,r,\;M^{(r-1)\times (r-1)}_i(\vec{x},P)\; i=1,\dots,m \right\rangle
	\end{equation}
	is $\left\{1\right\}$, then $\mathcal{R}$ is stationary globally identifiable over $K$.
\end{theorem}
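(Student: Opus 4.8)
The plan is to mirror the proof of Theorem~\ref{thm:grob_check}, the only change being that the positivity-encoding generators $y_j^2 k_j - 1$, which cut out $\vec{k} \in \mathbb{R}^r_{>0}$, are replaced by the generators $h_j(\vec{k},\vec{y})$ of the lifted representation $\bar{K}$, which cut out $\vec{k} \in K$. As in that proof, I would argue by the contrapositive together with Hilbert's Nullstellensatz (Theorem~\ref{thm:nullstellensatz_coro}) and the minor characterization of rank deficiency (Lemma~\ref{lem:det_rank_char_ineq}). Since only the sufficiency direction is asserted, the distinction between real and complex solutions is benign: a real witness to non-identifiability is in particular a complex point of the relevant variety.

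First I would introduce the analogue of $V$ adapted to $K$, namely the set of triples $(\vec{x},P,\vec{k})$ with $\vec{k}\in K$, $A(\vec{x},P)\vec{k}=0$, and $\rank A(\vec{x},P) < r-1$, and establish the analogue of Theorem~\ref{thm:spi_alg}(i): if this set is empty, then $\mathcal{R}$ is stationary globally identifiable over $K$. The argument transcribes the one for Theorem~\ref{thm:spi_alg}(i): if $\mathcal{R}$ is not identifiable over $K$, there are linearly independent $\vec{k}_1,\vec{k}_2\in K$ with $\mathcal{R}(\vec{k}_1)=\mathcal{R}(\vec{k}_2)=(\vec{x}^*,P^*)$; since the stationary distribution solves \eqref{eq:stationary_A}, both $\vec{k}_1$ and $\vec{k}_2$ lie in the kernel of $A(\vec{x}^*,P^*)$, forcing $\rank A(\vec{x}^*,P^*) < r-1$ and placing $(\vec{x}^*,P^*,\vec{k}_1)$ in this set.

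Next I would translate emptiness of this set into emptiness of the complex variety $\bar{V}_K$ of the ideal in \eqref{eq:general_K_ideal}. Starting from a failure of identifiability, the point $(\vec{x}^*,P^*,\vec{k}_1)$ above satisfies $A(\vec{x}^*,P^*)\vec{k}_1=0$, and by Lemma~\ref{lem:det_rank_char_ineq} the condition $\rank A(\vec{x}^*,P^*)<r-1$ is equivalent to the vanishing of every minor $M_i^{(r-1)\times(r-1)}(\vec{x}^*,P^*)$. The crucial remaining step is to realize the constraint $\vec{k}_1\in K$ algebraically: because $\bar{K}$ is a lifted representation of $K$, its orthogonal projection onto $\vec{k}$-space equals $K$, so there exists $\vec{y}_1$ with $h_j(\vec{k}_1,\vec{y}_1)=0$ for all $j$. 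Hence $(\vec{x}^*,P^*,\vec{k}_1,\vec{y}_1)$ is a real, and therefore complex, point of $\bar{V}_K$.

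Finally, observing that $\bar{V}_K$ is exactly the variety of the generators in \eqref{eq:general_K_ideal}, Theorem~\ref{thm:nullstellensatz_coro} gives that a reduced Gr\"{o}bner basis equal to $\{1\}$ forces $\bar{V}_K=\emptyset$; the contrapositive chain just assembled then yields identifiability over $K$. I expect the only genuine subtlety, rather than a true obstacle, to be this projection step: one must invoke the defining property of the lifted representation to pass from the semialgebraic membership $\vec{k}_1\in K$ to the existence of a companion $\vec{y}_1$ solving the polynomial system $h_j=0$. Everything else is a direct transcription of the proof of Theorem~\ref{thm:grob_check}, with $\{h_j\}$ in place of $\{y_j^2 k_j - 1\}$.
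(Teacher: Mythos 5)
Your proposal is correct and follows essentially the same route as the paper's proof: argue the contrapositive, use linear independence of $\vec{k}_1,\vec{k}_2\in K$ in the kernel of $A(\vec{x}^*,P^*)$ together with Lemma~\ref{lem:det_rank_char_ineq} to get vanishing of all $(r-1)\times(r-1)$ minors, use the projection property of the lifted representation $\bar{K}$ to produce $\vec{y}_1$ with $h_j(\vec{k}_1,\vec{y}_1)=0$, and conclude via Theorem~\ref{thm:nullstellensatz_coro} that the reduced Gr\"{o}bner basis cannot be $\{1\}$. Your intermediate set (the analogue of $V$ restricted to $K$) is a harmless reorganization; the paper skips it and exhibits the real witness point directly, but the logical content is identical, including your correct observation that a real solution suffices as a complex point for the Nullstellensatz step.
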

\begin{proof}
	The proof follows that of Theorem \ref{thm:grob_check}, however we replace the polynomials $k_iy_i^2 - 1$ with $h_j(\vec{k},\vec{y})$, and instead of Theorem \ref{thm:spi_alg} we have only a sufficient semialgebraic condition for stationary global identifiability, since here we do not assume that $K$ is open. Suppose $\mathcal{R}$ is not stationary globally identifiable over $K$. Then there exist $\vec{k}_1,\vec{k}_2 \in K$, $\vec{x}_0 \in \mathbb{R}_{>0}^n$, and $P \in \mathbb{S}^{n\times n}$ such that $\vec{k}_1$ and $\vec{k}_2$ are linearly independent and $(\vec{x}_0,P_0) = \mathcal{R}(\vec{k}_1) = \mathcal{R}(\vec{k}_2)$. The fact that $\vec{k}_1 \in K$ implies that there exists $\vec{y}_1$ such that $(\vec{k}_1,\vec{y}_1)\in\bar{K}$. By the fact that $\vec{k}_1$ and $\vec{k}_2$ are linearly independent, $\rank A(\vec{x}_0,P_0) < r-1$, and hence $M^{(r-1)\times (r-1)}_i(\vec{x}_0,P_0)$ for all $i = 1,\dots, m$. Since additionally $0 = A_q(\vec{x}_0,P_0)\vec{k}_1$, we have that $\vec{k} = \vec{k}_1$, $\vec{y} = \vec{y}_1$, $\vec{x} = \vec{x}_0$, $P = P_0$ is a solution to 
	\begin{align}
		0=& h_j(\vec{k},\vec{y}),\;\forall j=1,\dots,p\\
		 0 =& A_q(\vec{x},P)\vec{k},\; \forall q=1,\dots,r\\
		 0  =& M^{(r-1)\times (r-1)}_i(\vec{x},P),\;\forall i = 1,\dots,m.
	\end{align}
	Therefore, by Theorem \ref{thm:nullstellensatz_coro}, the reduced Gr\"{o}bner basis of \eqref{eq:general_K_ideal} must not be $\left\{1\right\}$. We have thus shown the contrapositive of the theorem statement.
\end{proof}

\setcounter{example}{0}
\begin{example}[Example 1 with a different set $K$]
	We return to Example \ref{ex:one_dim}, however instead of asking if $\mathcal{R}_1$ given by \eqref{eq:ex1_reactions} is stationary globally identifiable over $\mathbb{R}_{>0}^3$, we are interested in investigating whether it is stationary globally identifiable over 
	\begin{equation}\label{eq:ex_general_K_K}
		K = \left\{ \vec{k}\in \mathbb{R}^3 \middle| k_1 > 0,\; k_2 >0,\; k_3 \geq 0\right\}.
	\end{equation}
	One way to represent this set as the projection of a set $\bar{K}$ in the form  \eqref{eq:lifted_K} is by choosing $\bar{K}$ as:
	\begin{equation}\label{eq:ex_general_K_K_bar}
		\bar{K} = \left\{ (\vec{k},\vec{y})\in\mathbb{R}^6\middle| y_1^2k_1 - 1 = 0,\; y_2^2k_2 - 1 = 0,\; k_3 - y_3^2 = 0\right\}.
	\end{equation}
	Indeed, it can be checked that the orthogonal projection of $\bar{K}$ onto $\vec{x}$ is $K$. In fact, if $y_i^2k_i - 1 = 0$ then $k_1 = 1/y_i^2 > 0$. Similarly, if $k_2 - y_2^2 = 0$, then $k_2 = y_2^2 \geq 0$.
	To apply Theorem \ref{thm:general_K} we must compute the reduced Gr\"{o}bner basis of \eqref{eq:general_K_ideal}, which from \eqref{eq:ex1_A} is given by
	\begin{equation}\label{eq:ex_general_K_ideal}
	\begin{multlined}
		\left\langle k_1 - k_2x_1 - k_3x_1^2,\;
	k_1 - k_3(4p_{11}x_1 - x_1^2) - k_2(2p_{11} - x_1),\;
	 k_1y_1^2 - 1,\right.\\\left. k_2y_2^2 - 1,\; k_3 - y_3^2,\;
	 2x_1 - 2p_{11},\; 2x_1^2 - 4p_{11}x_1,\; 2p_{11}x_1^2\right\rangle.
	\end{multlined}
	\end{equation}
	Using Macaulay2 \cite{M2} we find that the reduced Gr\"{o}bner basis of \eqref{eq:ex_general_K_ideal} is $\{1\}$, and hence by Theorem \ref{thm:general_K} $\mathcal{R}_1$ is stationary globally identifiable over $K$ given by \eqref{eq:ex_general_K_K}.
\end{example}

We are now ready to study the model discriminability problem. Our approach is to attempt to certify global stationary identifiability of $\mathcal{R}$ over the set $K_1 \cup K_2$, which is formalized in the following theorem.

\begin{theorem}\label{thm:spi_implies_model_idble}
	Consider a CRN $\mathcal{R}$. Let $K_1,K_2 \subset \mathbb{R}^r_{\geq 0}$ be such that $\mathrm{cone}\left(K_1\right) \cap K_2 = \emptyset$\footnote{For a set $K\subseteq \mathbb{R}^v$, $\mathrm{cone}(K) = \left\{ \vec{z}\in\mathbb{R}^v\middle| \vec{z} = \lambda \vec{k},\; \vec{k}\in K, \; \lambda \geq 0 \right\}$. }. If $\mathcal{R}$ is stationary globally identifiable over $K = K_1 \cup K_2$, then $\mathcal{R}$ is stationary model discriminable between $K_1$ and $K_2$.
\end{theorem}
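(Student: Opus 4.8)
The plan is to argue by contraposition: I would assume $\mathcal{R}$ is \emph{not} stationary model discriminable between $K_1$ and $K_2$ and derive a contradiction with the hypothesis $\mathrm{cone}(K_1)\cap K_2 = \emptyset$, using stationary global identifiability over $K = K_1\cup K_2$ as the engine of the argument.

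First I would unpack the failure of discriminability. By the definition of stationary model discriminability, if $\mathcal{R}$ is not discriminable then there exist $\vec{k}_1\in K_1$ and $\vec{k}_2\in K_2$ with $\mathcal{R}(\vec{k}_1) = \mathcal{R}(\vec{k}_2)$. Since $K_1,K_2\subseteq K$, both $\vec{k}_1$ and $\vec{k}_2$ lie in $K$, so I can invoke stationary global identifiability over $K$ (Definition \ref{def:param_id}), which furnishes a scalar $a\in\mathbb{R}$ such that $\vec{k}_2 = a\vec{k}_1$.

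The next step, and the crux of the proof, is to pin down the sign of $a$ using the nonnegativity constraints $K_1,K_2\subseteq\mathbb{R}^r_{\geq 0}$. If $\vec{k}_1$ has a strictly positive coordinate $(\vec{k}_1)_i>0$, then reading off the $i$th coordinate of $\vec{k}_2 = a\vec{k}_1$ together with $(\vec{k}_2)_i\geq 0$ forces $a\geq 0$; and in the degenerate case $\vec{k}_1 = \vec{0}$ we have $\vec{k}_2 = \vec{0}$ and may simply take $a\geq 0$. With $a\geq 0$ and $\vec{k}_1\in K_1$, the definition of the cone gives $\vec{k}_2 = a\vec{k}_1\in\mathrm{cone}(K_1)$. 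Since also $\vec{k}_2\in K_2$, this places $\vec{k}_2\in\mathrm{cone}(K_1)\cap K_2$, contradicting $\mathrm{cone}(K_1)\cap K_2 = \emptyset$ and completing the argument.

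I expect the only genuine subtlety to be the control of the sign of $a$. Definition \ref{def:param_id} guarantees only that $\vec{k}_2$ is \emph{some} real multiple of $\vec{k}_1$, i.e., that $\vec{k}_2$ lies on the whole line $\vspan\{\vec{k}_1\}$, whereas the hypothesis is phrased in terms of $\mathrm{cone}(K_1)$, which is only the nonnegative ray. The statement is nonetheless correct precisely because rate vectors are nonnegative, which excludes negative multiples; making this observation rigorous is the heart of the proof, while the remaining steps are immediate consequences of the relevant definitions.
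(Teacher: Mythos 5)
Your proof is correct and follows essentially the same route as the paper: contraposition from a failure of discriminability to a pair $\vec{k}_1\in K_1$, $\vec{k}_2\in K_2$ with $\mathcal{R}(\vec{k}_1)=\mathcal{R}(\vec{k}_2)$, which stationary global identifiability over $K_1\cup K_2$ forces to be scalar multiples of one another, contradicting the cone hypothesis. In fact your treatment is slightly more careful than the paper's own: the paper's proof invokes $\vspan\left(K_1\right)\cap K_2=\emptyset$ even though the stated hypothesis concerns $\mathrm{cone}(K_1)$, thereby skipping exactly the step you supply --- using $K_1,K_2\subseteq\mathbb{R}^r_{\geq 0}$ to rule out negative multiples (and the degenerate case $\vec{k}_1=\vec{0}$) so that $\vec{k}_2=a\vec{k}_1$ with $a\geq 0$ indeed lands in $\mathrm{cone}(K_1)\cap K_2$.
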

\begin{proof}
	We prove Theorem \ref{thm:spi_implies_model_idble} by contraposition. Suppose that $\mathcal{R}$ is not stationary model discriminable between $K_1$ and $K_2$. Then there exists $\vec{k}_1 \in K_1$ and $\vec{k}_2 \in K_2$ such that $\mathcal{R}(\vec{k}_1) = \mathcal{R}(\vec{k}_2)$. The assumption that $\vspan\left(K_1\right) \cap K_2 = \emptyset$ ensures that there does not exist $\alpha$ such that $\vec{k}_1 = \alpha \vec{k}_2$, and hence $\mathcal{R}$ is not stationary globally identifiable over $K_1 \cup K_2$.
\end{proof}

\begin{remark}
	The converse of Theorem \ref{thm:spi_implies_model_idble} is not true. However, Theorem \ref{thm:spi_implies_model_idble} provides a sufficient condition to conclude that $\mathcal{R}$ is stationary model identifiable between $K_1$ and $K_2$.
\end{remark}

As an illustration, suppose that for some CRN $\mathcal{R}$ with $r$ reactions, we know that exactly one between the $r$\textsuperscript{th} and $r-1$\textsuperscript{th} reactions is present. If we want to determine if it is possible to discriminate from the stationary distribution of $\mathcal{R}$ between reaction $r$ being present and reaction $r-1$ being present, we ask if $\mathcal{R}$ is stationary model discriminable between $K_1$ and $K_2$ where, letting $\vec{k}_{1:r-2}$ be the vector of the first $r-2$ elements of $\vec{k}$,
\begin{equation}
	K_1  = \left\{ \vec{k} \in \mathbb{R}_{\geq 0}^r \middle| \vec{k}_{1:r-2}>0,\; k_{r-1} > 0\;\mbox{and}\; k_r = 0\right\}
\end{equation}
and
\begin{equation}
	K_2  = \left\{ \vec{k} \in \mathbb{R}_{\geq 0}^r \middle| \vec{k}_{1:r-2}>0,\; k_{r-1} = 0\;\mbox{and}\; k_r > 0\right\}.
\end{equation}
Let $K = K_1 \cup K_2$. We need a representation of $K$ as in equation \eqref{eq:lifted_K}. One such representation of $K$ is
\begin{equation}\label{eq:lifted_model_selection_K_bar}
\begin{multlined}
	\bar{K} = \left\{(\vec{k},\vec{y})\in\mathbb{R}^{2r+1} \middle| 0 = k_iy_i^2 - 1, \;i=1,2,\dots,r-2, 0  = k_{r-1} - y_{r-1}^2,\right. \\
	 \left. 0 = k_r-y_r^2,\; 0 = k_{r-1}k_r,\; 0 = (k_{r-1}+k_r)y_{r+1}^2 - 1\right\}.
\end{multlined}
\end{equation}
\begin{remark}
We can choose $\bar{K}$ to be any lifted representation of $K_1 \cup K_2$ of the form \eqref{eq:lifted_K}, however, it is possible for the reduced Gr\"{o}bner basis of \eqref{eq:general_K_ideal} to be $\{1\}$ for some choices of $\bar{K}$ and not $\{1\}$ for other choices of $\bar{K}$. Such a possibility is a consequence of using Nullstellensatz to prove identifiability, and using Positivstellensatz as discussed in Remark \ref{remark:positivstellensatz} would prevent this issue.
\end{remark}
\setcounter{example}{0}
\begin{example}[1-dimensional model discriminability]
	Let us again consider $\mathcal{R}_1$ given by \eqref{eq:ex1_reactions}. Suppose we know that either $k_2>0$ and $k_3 = 0$, or $k_2 = 0$ and $k_3>0$. If we are interested in whether we can discriminate between these two models, we use the framework of this section as follows. Let
	\begin{equation}\label{eq:ex_1d_discr_K1}
		K_1 = \left\{ \vec{k}\in\mathbb{R}^3\middle| k_1>0,\; k_2 >0,\; k_3 = 0\right\}
	\end{equation}
	and
	\begin{equation}\label{eq:ex_1d_discr_K2}
		K_2 = \left\{\vec{k}\in\mathbb{R}^3\middle| k_1>0,\;k_2 = 0; k_3>0\right\}.
	\end{equation}
	Then, to check if $\mathcal{R}_1$ is stationary model discriminable between $K_1$ and $K_2$ we let $K = K_1 \cup K_2$, which has lifted representation
	\begin{equation}
		\bar{K} = \left\{(\vec{k},\vec{y})\in\mathbb{R}^{7} \middle| 0 = k_1y_1^2 -1,\; 0 = k_2 - y_2^2,\; 0 = k_3 - y_3^2,\; 0 = k_2k_3,\; 0 = (k_2+k_3)y_4^2 - 1 \right\}.
	\end{equation}
	In this case, using \eqref{eq:ex1_A} and $h_j(\vec{k},\vec{y})$ defined in \eqref{eq:lifted_model_selection_K_bar}, the ideal given by \eqref{eq:general_K_ideal} is
	\begin{equation}\label{eq:ex_1d_discr_ideal}
	\begin{multlined}
		\left\langle k_1 - k_2x_1 - k_3x_1^2,\;
	k_1 - k_3(4p_{11}x_1 - x_1^2) - k_2(2p_{11} - x_1),\;\right.\\
	 \left. k_1y_1^2 - 1, k_2 - y_2^2, k_3 - y_3^2,  k_2k_3, (k_2+k_3)y_4^2 - 1\;
	 2x_1 - 2p_{11}, 2x_1^2 - 4p_{11}x_1, 2p_{11}x_1^2\right\rangle.
	\end{multlined}
	\end{equation}
	Using Macaulay2 \cite{M2}, we find that the reduced Gr\"{o}bner basis of \eqref{eq:ex_1d_discr_ideal} is $\{1\}$, and hence by Theorems \ref{thm:spi_implies_model_idble} and \ref{thm:general_K} the CRN $\mathcal{R}_1$ is stationary model discriminable between $K_1$ and $K_2$ given by \eqref{eq:ex_1d_discr_K1} and \eqref{eq:ex_1d_discr_K2}, respectively.
\end{example}

\subsection{Examples}
We now use \eqref{eq:lifted_model_selection_K_bar} to certify stationary model discriminability of several biologically relevant systems via Theorem \ref{thm:spi_implies_model_idble}.

\setcounter{example}{7}
\begin{example}[Determining the direction of an activation (model discrimination)]\label{ex:which_act}
In this example we consider whether it is possible to determine from only measurements of the joint stationary distribution of two genes $\mathrm{X}_1$ and $\mathrm{X}_2$ whether $\mathrm{X}_1$ activates $\mathrm{X}_2$ or $\mathrm{X}_2$ activates $\mathrm{X}_1$. Such a question is of practical importance in systems biology because it asks whether one can deduce causality in a biological system without observing how the system evolves over time, or how it reacts to applied perturbations. This question is conceptually related to the study of causal inference, though here we ask whether we can distinguish between two \emph{a prior} given stochastic process models, instead of deciding between graphical models~\cite{peters2017elements}. Such a system is conceptually modeled by CRN $\mathcal{R}_8$ shown in \eqref{eq:ex8_reactions}.
\begin{equation}\label{eq:ex8_reactions}
	\vcenter{\hbox{\schemestart
		\subscheme{$\emptyset$}
    		\arrow(z--x1){<=>[$k_2$][$k_1$]}[180]
		\subscheme{$\mathrm{X}_1$}
          \arrow(@z--x2){<=>[$k_3$][$k_4$]}[0]
        \subscheme{$\mathrm{X}_2$}
		  \arrow(@x1--x12){->[$k_6$]}[90]
        \subscheme{$\mathrm{X}_1 + \mathrm{X}_2$}
          \arrow(@x2--x23){->[$k_5$]}[90]
		\subscheme{$\mathrm{X}_1 + \mathrm{X}_2$}
	\schemestop}}
\end{equation}
We note that in order to simplify the system we have modeled gene expression as a one step process, and model activation of $\mathrm{X}_2$ by $\mathrm{X}_1$ with the reactions \schemestart $\emptyset$\arrow{->[$k_3$]}[0,0.5]$\mathrm{X}_2$\schemestop \; and  \schemestart $\mathrm{X}_1$\arrow{->[$k_6$]}[0,0.5]$X_1 + X_2$\schemestop, i.e., an affine activation function of the form $k_3 + k_6x_1$. The activation of $\mathrm{X}_1$ by $\mathrm{X}_2$ is modeled analogously via the 1\textsuperscript{st} and 5\textsuperscript{th} reactions. For $\mathcal{R}_8$ $\vec{f}(\vec{x};\vec{k})$ defined in \eqref{eq:RRE_def} is given by
\begin{equation}
	\vec{f}(\vec{x};\vec{k}) = 
	\begin{bmatrix}
		k_1 - k_2x_1 + k_5x_2 \\
		k_3 - k_4x_2 + k_6x_1
	\end{bmatrix}
\end{equation}
and $\Gamma(\vec{x};\vec{k})$ as defined in \eqref{eq:Gamma_def} is given by
\begin{equation}
	\Gamma(\vec{x};\vec{k})\Gamma(\vec{x};\vec{k})^T =
	\begin{bmatrix}
		k_1 + k_2x_1 + k_5x_2 & 0\\
		0 & k_3 + k_4x_2 + k_6x_1
	\end{bmatrix}.
\end{equation}
Therefore, writing \eqref{eq:LNA} in the form \eqref{eq:stationary_A} yields
\begin{equation}\label{eq:ex8_A}
	0 = A(\vec{x},P)\vec{k} = 
	\begin{bmatrix}
		1 &         -x_1 & 0 &           0 &           0 &          x_2\\
		0 &           0 & 1 &         -x_2 &          x_1 &           0\\
		1 & x_1 - 2p_{11} & 0 &           0 &           0 & 2p_{12} + x_2\\
		0 &       -p_{12} & 0 &       -p_{12} &        p_{11} &        p_{22}\\
		0 &           0 & 1 & x_2 - 2p_{22} & 2p_{12} + x_1 &           0
	\end{bmatrix}\vec{k}.
\end{equation}
The two models we wish to decide between are
\begin{enumerate}
	\item $\mathrm{X}_1$ is constitutively expressed ($k_1 >0$) and activates $\mathrm{X}_2$ ($k_3,k_6 > 0$),
	\item $\mathrm{X}_2$ is constitutively expressed ($k_3 > 0$) and activates $\mathrm{X}_1$ ($k_1,k_5 > 0$).
\end{enumerate}
In both models we assume $\mathrm{X}_1$ and $\mathrm{X}_2$ degrade at a nonzero rate ($k_2,k_4 >0$). Using the framework of Section \ref{sec:idble_model_id} we represent model 1 as the reaction rate vector being in
\begin{equation}\label{eq:K1_6species}
	K_1  = \left\{ \vec{k} \in \mathbb{R}_{\geq 0}^6 \middle| \vec{k}_{1:4}>0,\; k_{5} > 0\;\mbox{and}\; k_6 = 0\right\}
\end{equation}
and model 2 by the reaction rate vector being in 
\begin{equation}\label{eq:K2_6species}
	K_2  = \left\{ \vec{k} \in \mathbb{R}_{\geq 0}^6 \middle| \vec{k}_{1:4}>0,\; k_{5} = 0\;\mbox{and}\; k_6 > 0\right\}.
\end{equation}
In this case \eqref{eq:lifted_model_selection_K_bar} becomes
\begin{equation}\label{eq:lifted_model_selection_K_bar_6species}
\begin{multlined}
	\bar{K} = \left\{(\vec{k},\vec{y})\in\mathbb{R}^{2r+1} \middle|
	0 = k_iy_i^2 - 1, \;i=1,2,\dots,4, \right.\\\left.
	0  = k_{5} - y_{5}^2,\; 0 = k_6-y_6^2,\;
	0 = k_{5}k_6,\;
	0 = (k_{5}+k_6)y_{7}^2 - 1\right\},
\end{multlined}
\end{equation}
which we use as our representation of $K = K_1 \cup K_2$.  Computing the Gr\"{o}bner basis $\mathcal{G}$ of the ideal defined by \eqref{eq:ideal} with $A$ in \eqref{eq:ex8_A}, we find that $\mathcal{G} =\{1\}$, and therefore by Theorem \ref{thm:grob_check} $\mathcal{R}_8$ is stationary globally identifiable over $K_1 \cup K_2$.%
We can therefore conclude by Theorem \ref{thm:spi_implies_model_idble} that $\mathcal{R}_8$ is stationary model discriminable between $K_1$ and $K_2$. This result conflicts with the intuition that correlation between the concentrations of $\mathrm{X}_1$ and $\mathrm{X}_2$ is insufficient to infer whether $\mathrm{X}_1$ ``causes'' $\mathrm{X}_2$ or \emph{vice versa}. However, examining the joint distribution allows us to tell which direction the activation acts because the noise on $x_1$ will contribute to the variance of $x_2$ when $\mathrm{X}_1$ activates $\mathrm{X}_2$, whereas the noise on $x_2$ will contribute to the variance of $x_1$ when $\mathrm{X}_2$ activates $\mathrm{X}_1$. The fact that noise from ``upstream'' genes contributes to a higher variance in ``downstream'' genes is well understood \cite{Ochab-Marcinek:2010aa}, though to the authors' knowledge the use of this principle for model discrimination has not been explored.
\end{example}
\begin{remark}\label{rem:discrimination}
	In Example \ref{ex:which_act} we showed that in CRN $\mathcal{R}_8$ it is possible to determine whether reaction 5 or 6 is present. Given sufficient data, the inference can be carried out by solving
	\begin{equation}
		c_1 = \min_{\vec{k} \in K_1} \norm{A(\hat{\vec{x}},\hat{P})\vec{k}}_2^2
	\end{equation}
	and
	\begin{equation}
		c_2 = \min_{\vec{k} \in K_2} \norm{A(\hat{\vec{x}},\hat{P})\vec{k}}_2^2,
	\end{equation}
	where $\hat{\vec{x}}$ is the sample mean and $\hat{P}$ is $\Omega$ times the sample covariance. This procedure is very similar to standard model selection methods \cite{akaike1973information}, expect that the fitting of the parameters is not done via maximum likelihood estimation, and we do not worry about the Occam factor present in the Akaike information criterion, since given infinite data, exactly one of $c_1$ and $c_2$ will be zero. In this case, if $c_1 = 0$ then $\mathrm{X}_1$ is constitutively expressed ($k_1 >0$) and activates $\mathrm{X}_2$ ($k_3,k_6 > 0$), whereas if $c_2 = 0$ then $\mathrm{X}_2$ is constitutively expressed ($k_3 > 0$) and activates $\mathrm{X}_1$ ($k_1,k_5 > 0$).
\end{remark}

\begin{example}[Sequestration vs enzymatic degradation]\label{ex:seq_vs_enzy}
As discussed in Example \ref{ex:seq_rate}, the antithetic motif where $\mathrm{X}_1$ and $\mathrm{X}_2$ mutually degrade is important to constructing integral biomolecular feedback controllers. When searching for pairs of species that can be used to implement such a controller, it is common that it is not know \emph{a priori} whether $\mathrm{X}_1$ and $\mathrm{X}_2$ mutually degrade, or whether one enzymatically degrades the other. Since integral controllers using an antithetic motif are designed assuming that $\mathrm{X}_1$ and $\mathrm{X}_2$ mutually degrade, it is important to be able to distinguish between these two models~\cite{qian2018realizing,briat2016antithetic}. Typically, detailed kinetic studies need to be done to determine which model is accurate for the interaction between two given species~\cite{zheng2015analytical}. Here, we investigate if an alternative experimental approach where only the stationary distribution of a system of $\mathrm{X}_1$ and $\mathrm{X}_2$ is measured can be used to answer this model discrimination question. Consider the CRN $\mathcal{R}_9$ shown in \eqref{eq:ex9_reactions}:
\begin{equation}\label{eq:ex9_reactions}
	\vcenter{\hbox{\schemestart
		\subscheme{$\emptyset$}
    		\arrow(z--x1){<=>[$k_2$][$k_1$]}[150]
		\subscheme{$\mathrm{X}_1$}
		\arrow(@z--x2){<=>[$k_3$][$k_4$]}[90]
		\subscheme{$\mathrm{X}_2$}
		\arrow(@z--x12){<-[$k_5$]}[30]
		\subscheme{$\mathrm{X}_1 + \mathrm{X}_2$}
        \arrow(@x2--@x12){<-[$k_6$]}
	\schemestop}}
\end{equation}
For $\mathcal{R}_9$ we have from \eqref{eq:RRE_def} that
\begin{equation}
	\vec{f}(\vec{x};\vec{k}) = 
	\begin{bmatrix}
		k_1 - k_2x_1 - k_5x_1x_2 - k_6x_1x_2 \\
		k_3 - k_4x_2 - k_5x_1x_2
	\end{bmatrix},
\end{equation}
and from \eqref{eq:Gamma_def} that
\begin{equation}
	\Gamma(\vec{x};\vec{k})\Gamma(\vec{x};\vec{k})^T =
	\begin{bmatrix}
		k_1 + k_2x_1 + k_5x_1x_2 + k_6x_1x_2 & k_5x_1x_2\\
		k_5x_1x_2 & k_3 + k_4x_2 + k_5x_1x_2
	\end{bmatrix}.
\end{equation}
Therefore, writing \eqref{eq:LNA} in the form \eqref{eq:stationary_A} yields $0 = A(\vec{x},P)\vec{k}$ where
\begin{equation}\label{eq:ex9_A}
\begin{multlined}
	 A(\vec{x},P) = \\
	 \mbox{\small$
	\begin{bmatrix}
		1 &         -x_1 & 0 &           0 &                                        -x_1x_2 &                        -x_1x_2\\
		0 &           0 & 1 &         -x_2 &                                        -x_1x_2 &                             0\\
		1 & x_1 - 2p_{11} & 0 &           0 &                 x_1x_2 - 2p_{12}x_1 - 2p_{11}x_2 & x_1x_2 - 2p_{12}x_1 - 2p_{11}x_2\\
		0 &       -p_{12} & 0 &       -p_{12} & x_1x_2 - p_{12}x_1 - p_{12}x_2 - p_{22}x_1 - p_{11}x_2 &           - p_{12}x_2 - p_{22}x_1\\
		0 &           0 & 1 & x_2 - 2p_{22} &                 x_1x_2 - 2p_{22}x_1 - 2p_{12}x_2 &                             0
	\end{bmatrix}$}.
\end{multlined}
\end{equation}
Here we consider the additional assumption that exactly one of the two degradation reactions involving $\mathrm{X}_1$ and $\mathrm{X}_2$ is present with a nonzero rate. Asking if we can discriminate between these two cases is asking if $\mathcal{R}_9$ is model discriminable between
\begin{enumerate}
	\item $\mathrm{X}_1$ and $\mathrm{X}_2$ mutually degrade ($k_5 > 0$),
	\item $\mathrm{X}_2$ enzymatically degrades $\mathrm{X}_1$ ($k_6 >0$).
\end{enumerate}
In both models we assume $\mathrm{X}_1$ and $\mathrm{X}_2$ are constitutively produced ($k_1,k_3 >0$) and dilute/spontaneously degrade ($k_2,k_4 >0$). The model discrimination problem is then as in Example \ref{ex:which_act} between $\vec{k}$ being in $K_1$ given by \eqref{eq:K1_6species} and $K_2$ given by \eqref{eq:K2_6species}. As in Example \ref{ex:which_act}, we construct a lifted representation of $K = K_1 \cup K_2$ as \eqref{eq:lifted_model_selection_K_bar_6species}. We perform the same procedure as in Example \ref{ex:which_act}, computing the Gr\"{o}bner basis $\mathcal{G}$ of the ideal \eqref{eq:ideal} with $A$ given in \eqref{eq:ex6_A}.%
In this case we find that $\mathcal{G} =\{1\}$, and therefore by Theorem \ref{thm:grob_check} $\mathcal{R}_9$ is stationary globally identifiable over $K_1 \cup K_2$. We therefore conclude by Theorem \ref{thm:spi_implies_model_idble} that $\mathcal{R}_9$ is stationary model discriminable between $K_1$ and $K_2$.
\end{example}
\begin{remark}
	Given data drawn from the stationary distribution of $x_1$ and $x_2$ in $\mathcal{R}_9$, the same technique described in Remark \ref{rem:discrimination} can be used to determine which model for the interaction of $\mathrm{X}_1$ and $\mathrm{X}_2$ is present in the system.
\end{remark}

\section{Gaining identifiability with extrinsic noise}\label{sec:idble:extrinsic}
We now extend our methods to handle CRNs with extrinsic noise. Our motivation is models of genetic circuits on plasmids, where the plasmid copy number, and therefore certain reaction rate constants in the CRN, vary among cells in the population~\cite{DelVecchio:2015book}. To this end, we consider systems where this variation across cells, or \emph{extrinsic} noise, denoted by $\vec{u} = [u_1,u_2,\dots,u_s]^T$, is an element of the set $U\subset \mathbb{R}^s$, with known distribution $\rho(\vec{u})$, and the reaction rate constants are given by $\vec{g}(\vec{u}^i)\odot \vec{k}$, where $\vec{k}$ is the nominal reaction rate constants and $\vec{g}:U\rightarrow\mathbb{R}_{\geq 0}^r$ is a known  function representing how $\vec{u}\in U$ perturbs $\vec{k}$. Here ``$\odot$'' denotes elementwise multiplication. Our assumption that $\vec{g}(\vec{u})$ is known requires a mechanistic model of how the extrinsic noise enters the system. For simplicity, in this work we assume $|U|<\infty$ as well as that within each cell the value of $\vec{u}$ is constant. In this case, the population distribution after all cells have reached their stationary distribution is given by a Gaussian mixture model of the form
\begin{equation}\label{eq:extrinsic_noise_stationary_dist}
	f_{X}(\vec{x};\vec{k}) = \sum_{\vec{u} \in U} \rho(\vec{u}) v(\vec{x};\mathcal{R}\left(\vec{g}(\vec{u})\odot\vec{k}\right))
\end{equation}
where $v(\vec{x};R)$ denotes the Gaussian probability density function with parameters $R = (\vec{x}',P')$, where the mean is $\vec{x}'$ and the covariance is $P'$.

\begin{example}[1-dimensional extrinsic noise]\label{ex:1d_extrinsic}
	We consider a variation on $\mathcal{R}_1$, where extrinsic noise affects the rate of reaction 1. This corresponds to a system where $\mathrm{X}_1$ is a protein species produced at a rate proportional to the DNA copy number in a given cell~\cite{DelVecchio:2015book}. For simplicity, we assume that in each cell there is either zero copies, one copy, or two copies of the gene coding for $\mathrm{X}$, with probability $1/2$, $1/4$, and $1/4$ respectively. The modified CRN $\mathcal{R}_1$ is:
	\begin{equation}\label{eq:ex_extrinsic1_reactions}
		\schemestart
		\subscheme{$\emptyset$}
    		\arrow(z--x1){<=>[$u_1 k_1$][$k_2$]}[0]
		\subscheme{$\mathrm{X}_1$}
		\arrow(@x1--x11){<-[$k_3$]}[0]
		\subscheme{$2\mathrm{X}_1$}
	\schemestop
	\end{equation}
where in this example $\vec{u} = u_1 \in U = \left\{0,1,2\right\}$. Here, $\vec{g}(\vec{u}) = \begin{bmatrix} u_1 & 1 & 1\end{bmatrix}^T$ since the copy number directly scales the rate constant of the production reaction, but does not change the rate constants of the degradation reactions. $\rho(\vec{u})$ takes values of $1/2$, $1/4$, and $1/4$ when $\vec{u}$ is $0$, $1$, and $2$ respectively, which reflects the probabilities of the different copy numbers. The stationary distribution of $(\mathcal{R}_1, \vec{g}(\vec{u}),\rho(\vec{u}),U)$ is then given by the mixture model
\begin{equation}
	f_{X}(\vec{x};\vec{k}) =  \frac{1}{2} v(\vec{x};\mathcal{R}_1((0,k_2,k_3))) + \frac{1}{4} v(\vec{x};\mathcal{R}_1((k_1,k_2,k_3))) + \frac{1}{4}v(\vec{x};\mathcal{R}_1((2k_1,k_2,k_3))).
\end{equation}
\end{example}

We now formally define our notion of identifiability for CRNs with extrinsic noise.
\begin{definition}\label{def:param_id_ext}
	A CRN with extrinsic noise $(\mathcal{R}, \vec{g}(\vec{u}), \rho(\vec{u}),U)$ is \emph{stationary globally identifiable} over $K\subseteq \mathbb{R}^r_{>0}$ if for any $\vec{k}_1,\vec{k}_2\in K$ such that the stationary distribution given by \eqref{eq:extrinsic_noise_stationary_dist} is identical for $\vec{k}=\vec{k}_1$ and $\vec{k}=\vec{k}_2$, there exists $a \in \mathbb{R}$ such that $\vec{k}_2 = a\vec{k}_1$.
\end{definition}
\begin{remark}
	Definition \ref{def:param_id_ext} is the same as Definition \ref{def:param_id} with the exception that Definition \ref{def:param_id_ext} applies to the tuple $(\mathcal{R}, \vec{g}(\vec{u}), \rho(\vec{u}),U)$ that defines a CRN with extrinsic noise. We explicitly give Definition \ref{def:param_id_ext} to emphasize the point that $\vec{g}(\vec{u})$, $\rho(\vec{u})$ and $U$ play a role in determining whether a CRN with extrinsic noise is stationary globally identifiable.
\end{remark}
We now develop a characterization of identifiability in the sense of Definition \ref{def:param_id_ext}. To do this we must deal with the fact that from an observed Gaussian mixture, e.g. of the form \eqref{eq:extrinsic_noise_stationary_dist}, one can only determine the mixture components. This implies that to estimate $\vec{k}$ from the observed distribution we must deal with the problem of not knowing \emph{a priori} which component in the mixture distribution corresponds to each value of $\vec{u}\in U$. Additionally, if $\mathcal{R}\left(\vec{g}(\vec{u})\odot\vec{k}\right)$ is the same for two values of $\vec{u}\in U$, there will be fewer that $|U|$ components identified in the mixture. We begin by formalizing the mapping from a distribution of the form \eqref{eq:extrinsic_noise_stationary_dist} to the set of mixture components. Let $U = \{\vec{u}^1,\vec{u}^2,\dots,\vec{u}^{|U|}\}$. Consider any distribution $f(\vec{x}) = f(\vec{x};\vec{k})$ of the form \eqref{eq:extrinsic_noise_stationary_dist}. Here our notation reinforces the fact that every distribution of this form is generated by some $\vec{k}\in K$, but when solving the identification problem, the value of $\vec{k}\in K$ is initially unknown. We define $C = \mathcal{C}(f(\cdot)) = \left\{\left(w_1,\vec{x}_1,P_1\right),\left(w_2,\vec{x}_2,P_2\right),\dots,\left(w_s,\vec{x}_s,P_s\right)\right\}$ as the smallest set such that
\begin{equation}\label{eq:ext_noise_C_def}
	\forall \vec{x} \in \mathbb{R}^n,\;f(\vec{x}) = \sum_{i=1}^{|U|} \rho(\vec{u}^i)v(\vec{x};(\vec{x}_i,\textstyle\frac{1}{\Omega}P_i)) = \sum_{i=1}^m w_i v(\vec{x}; (\vec{x}_i,\frac{1}{\Omega}P_i)).
\end{equation}
Such a function $\mathcal{C}$ exists by the \emph{uniqueness of representation property} of finite mixtures of Gaussian distributions~\cite{yakowitz1968identifiability}. Conversely, given $C = \mathcal{C}(f(\cdot))$, it is clear that $f(\cdot)$ can be determined uniquely. We note that our use of $f(\cdot)$ as the argument of $\mathcal{C}$ reinforces the fact that $C = \mathcal{C}(f(\cdot))$ is a function of the whole distribution.

\begin{remark}
	Technically, \cite{yakowitz1968identifiability} tells us that $\bar{\mathcal{C}}(f(\cdot))$ defined as the smallest set
	\begin{equation}
		\bar{C} = \bar{\mathcal{C}}(f(\cdot)) = \left\{\left(w_1,\vec{x}_1,\frac{1}{\Omega}P_1\right),\left(w_2,\vec{x}_2,\frac{1}{\Omega}P_2\right),\dots,\left(w_s,\vec{x}_s,\frac{1}{\Omega}P_s\right)\right\}
	\end{equation}		
	 such that
	\begin{equation}
		\forall \vec{x} \in \mathbb{R}^n,\; f(\vec{x};\vec{k}) = \sum_{i=1}^{|U|} \rho(\vec{u}^i)v(\vec{x};(\vec{x}_i,\textstyle\frac{1}{\Omega}P_i))
	\end{equation}
	exists, i.e. from the population distribution we can uniquely identify the mixture components. However, since the mapping between $\bar{C}$ and $C$ is bijective, $\mathcal{C}$ exists and is invertible.
\end{remark}

We now formalize the notion of an assignment of the elements of $C= \mathcal{C}(f(\cdot))$ to the elements of $U$. In general,  for identifiability we need to determine the ``correct'' assignment as well as the true value of $\vec{k}$ from $C= \mathcal{C}(f(\cdot))$. Given a CRN with extrinsic noise $(\mathcal{R},\vec{g}(\vec{u}),\rho(\vec{u}),U)$, for any $f(\cdot)$ of the form \eqref{eq:extrinsic_noise_stationary_dist} with $\vec{k} \in K$ we define $\vec{\sigma}: \{1,2,\dots,|U|\} \rightarrow \mathcal{C}(f(\cdot))$, i.e. a mapping from the indices of the elements of $U$ to the mixture components. We denote $\vec{\sigma}(i) = (\sigma_\rho(i),\sigma_{\vec{x}}(i),\sigma_P(i))$ where for each $i\in\{1,2,\dots,|U|\}$, $(\sigma_\rho(i),\sigma_{\vec{x}}(i),\sigma_P(i)) = (w_j,\vec{x}_j,P_j) \in \mathcal{C}(f(\cdot))$ for some $j$. Given $f(\cdot)$, only some mappings $\vec{\sigma}$ are ``consistent'' with $C$ in the sense that 
\begin{equation}
	\forall \vec{x} \in \mathbb{R}^n,\; \sum_{i=1}^{|U|} \sigma_{\rho}(i) v(\vec{x};(\sigma_{\vec{x}}(i),\sigma_P(i))) = f(\vec{x}).
\end{equation}
The set of consistent $\vec{\sigma}$'s is given by
\begin{equation}
		\Sigma_f  = \{ \vec{\sigma}:\{1,2,\dots,|U|\}\rightarrow \mathcal{C}(f(\cdot)) \text{ surjective} | \sigma_{\rho}(i) = \sum_{j:(\sigma_{\vec{x}}(j),\sigma_{P}(j)) = (\sigma_{\vec{x}}(i),\sigma_{P}(i))}\rho(\vec{u}^j) \}.
\end{equation}

Given a CRN with extrinsic noise $(\mathcal{R},\vec{g}(\vec{u}),\rho(\vec{u}),U)$, for any $f(\vec{x}) = f_X(\vec{x};\vec{k})$ of the form \eqref{eq:extrinsic_noise_stationary_dist} and $\vec{\sigma}\in \Sigma_{f_X(\cdot;\vec{k})}$, we define
\begin{equation}\label{eq:extrinsic_calA_bar}
	\bar{\mathcal{A}}(f(\cdot), \vec{\sigma}) = \begin{bmatrix}
		A(\sigma_{\vec{x}}(1),\sigma_{P}(1))\diag(\vec{g}(\vec{u}^{1})) \\
		A(\sigma_{\vec{x}}(2),\sigma_{P}(2))\diag(\vec{g}(\vec{u}^{2})) \\
		\vdots \\
		A(\sigma_{\vec{x}}(|U|),\sigma_{P}(|U|))\diag(\vec{g}(\vec{u}^{|U|}))
	\end{bmatrix}.
\end{equation}
We then have that $\forall \vec{k} \in K$, $f_X(\cdot;\vec{k})$ satisfies
\begin{equation}
	0 = \bar{\mathcal{A}}(f_{X}(\cdot;\vec{k}), \vec{\sigma}^*)\vec{k}
\end{equation}
where $\vec{\sigma}^*\in \Sigma_{f_{X}(\cdot;\vec{k})}$ satisfies
\begin{equation}	
	\forall i = 1,2,\dots,|U|,\; \left(\sigma_{\vec{x}}(i),\sigma_{P}(i)\right) = \mathcal{R}(\vec{g}(\vec{u}^i)\odot \vec{k})
\end{equation}

\begin{lemma}\label{lem:ext_basic}
	A CRN with extrinsic noise $(\mathcal{R},\vec{g}(\vec{u}),\rho(\vec{u}),U)$, is stationary globally identifiable if for all $f(\vec{x}) = f(\vec{x};\vec{k})$ of the form \eqref{eq:extrinsic_noise_stationary_dist}, there exists $\vec{\xi}\in\mathbb{R}^r$ such that for all $(\vec{\sigma},\vec{k})\in (\Sigma_f,K)$ satisfying $0 = \bar{\mathcal{A}}(f(\cdot),\vec{\sigma})\vec{k}$, $\vec{k} = a \vec{\xi}$ for some $a \in \mathbb{R}$.
\end{lemma}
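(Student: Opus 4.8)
The plan is to prove the implication directly: assume the stated hypothesis and verify the defining condition of Definition \ref{def:param_id_ext}. So I would start by taking arbitrary $\vec{k}_1,\vec{k}_2\in K$ that generate the same stationary population distribution, i.e. $f_X(\cdot;\vec{k}_1) = f_X(\cdot;\vec{k}_2)$ as functions on $\mathbb{R}^n$, and the goal is to produce a scalar $a$ with $\vec{k}_2 = a\vec{k}_1$.

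The crucial observation is that since $f := f_X(\cdot;\vec{k}_1) = f_X(\cdot;\vec{k}_2)$ is one and the same function, the mixture component set $C = \mathcal{C}(f(\cdot))$, and therefore the set of consistent assignments $\Sigma_f$, are determined by $f$ alone and do not depend on which rate vector produced it. Consequently both $\vec{k}_1$ and $\vec{k}_2$ are associated with assignments living in the \emph{common} set $\Sigma_f$. Concretely, I would invoke the construction immediately preceding the lemma: for each $j\in\{1,2\}$ there is a canonical assignment $\vec{\sigma}_j^*\in\Sigma_f$ given by $(\sigma_{\vec{x}}(i),\sigma_P(i)) = \mathcal{R}(\vec{g}(\vec{u}^i)\odot\vec{k}_j)$, and this assignment satisfies $0 = \bar{\mathcal{A}}(f(\cdot),\vec{\sigma}_j^*)\vec{k}_j$. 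Hence each pair $(\vec{\sigma}_j^*,\vec{k}_j)$ belongs to $(\Sigma_f,K)$ and obeys the defining equation $0 = \bar{\mathcal{A}}(f(\cdot),\vec{\sigma})\vec{k}$.

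The hypothesis now does the work. It supplies a single $\vec{\xi}\in\mathbb{R}^r$, depending only on $f$, such that every pair $(\vec{\sigma},\vec{k})\in(\Sigma_f,K)$ satisfying $0 = \bar{\mathcal{A}}(f(\cdot),\vec{\sigma})\vec{k}$ forces $\vec{k}$ to be a scalar multiple of $\vec{\xi}$. Applying this to $(\vec{\sigma}_1^*,\vec{k}_1)$ and to $(\vec{\sigma}_2^*,\vec{k}_2)$ yields scalars $a_1,a_2$ with $\vec{k}_1 = a_1\vec{\xi}$ and $\vec{k}_2 = a_2\vec{\xi}$. Since $\vec{k}_1\in K\subseteq\mathbb{R}^r_{>0}$ is entrywise positive it is nonzero, so $a_1\neq 0$; therefore $\vec{\xi} = a_1^{-1}\vec{k}_1$ and $\vec{k}_2 = (a_2/a_1)\vec{k}_1$, which is the desired conclusion with $a = a_2/a_1$.

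The step that I expect to demand the most care is the second one: justifying rigorously that $\Sigma_f$ is a property of the distribution $f$ alone, and that the canonical assignment $\vec{\sigma}_j^*$ is both an element of $\Sigma_f$ and a solution of the matrix identity. This relies on the uniqueness-of-representation property of finite Gaussian mixtures~\cite{yakowitz1968identifiability}, which makes $\mathcal{C}(f(\cdot))$ well defined, together with the bookkeeping that the true assignment is surjective onto $C$ and carries the correctly aggregated weights $\sigma_\rho(i)$. Once that is in place, the remainder is essentially a one-line application of the hypothesis, so the only genuine content beyond the setup already developed in the excerpt is this consistency check.
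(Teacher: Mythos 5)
Your proposal is correct and is essentially the paper's own argument run in the forward direction: the paper proves the contrapositive (two non-proportional $\vec{k}',\vec{k}''$ generating the same $f$ yield, via exactly your canonical assignments $\vec{\sigma}',\vec{\sigma}''\in\Sigma_f$, two solutions of $0=\bar{\mathcal{A}}(f(\cdot),\vec{\sigma})\vec{k}$ that cannot both be multiples of a single $\vec{\xi}$), while you assume the hypothesis and conclude $\vec{k}_1=a_1\vec{\xi}$, $\vec{k}_2=a_2\vec{\xi}$, hence $\vec{k}_2=(a_2/a_1)\vec{k}_1$. The key content in both is identical — constructing the consistent assignment for each generating rate vector and verifying it lies in $\Sigma_f$ and satisfies the matrix equation — so this is the same proof up to contraposition.
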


\begin{proof}
	We prove the contrapositive. To begin, suppose that $(\mathcal{R},\vec{g}(\vec{u}),\rho(\vec{u}),U)$ is not stationary globally identifiable. Then, there exists $f(\cdot)$ and $\vec{k}',\vec{k}''\in K$ with $\vec{k}' \neq \alpha \vec{k}''$ for any $\alpha$ such that
	\begin{equation}
		f(\cdot) = \sum_{i=1}^{|U|} \rho(\vec{u}^i)v(\cdot;\mathcal{R}(\vec{g}(\vec{u}^i)\odot \vec{k}'))
	\end{equation}
	and
	\begin{equation}
		f(\cdot) = \sum_{i=1}^{|U|} \rho(\vec{u}^i)v(\cdot;\mathcal{R}(\vec{g}(\vec{u}^i)\odot \vec{k}'')).
	\end{equation}
	Let us define $\vec{\sigma}'$ by $\vec{\sigma}'(i) = (\sigma'_{\rho}(i), \sigma_{\vec{x}}'(i), \sigma_{P}'(i))$ where $(\sigma_{\vec{x}}'(i), \sigma_{P}'(i)) = \mathcal{R}(\vec{g}(\vec{u}^i)\odot \vec{k}')$ and 
	\begin{equation}
		\sigma_{\rho}'(i) = \sum_{j: \mathcal{R}(\vec{g}(\vec{u}^j)\odot \vec{k}) =  \mathcal{R}(\vec{g}(\vec{u}^i)\odot \vec{k})}\rho(\vec{u}^j).
	\end{equation}
	Similarly, we define  $\vec{\sigma}''$ by $\vec{\sigma}''(i) = (\sigma''_{\rho}(i), \sigma_{\vec{x}}''(i), \sigma_{P}''(i))$ where $(\sigma_{\vec{x}}''(i), \sigma_{P}''(i)) = \mathcal{R}(\vec{g}(\vec{u}^i)\odot \vec{k}'')$ and 
	\begin{equation}
		\sigma_{\rho}''(i) = \sum_{j: \mathcal{R}(\vec{g}(\vec{u}^j)\odot \vec{k}'') =  \mathcal{R}(\vec{g}(\vec{u}^i)\odot \vec{k}'')}\rho(\vec{u}^j).
	\end{equation}
	Observe that $\vec{\sigma}',\vec{\sigma}'' \in \Sigma_f$. We have
	\begin{equation}
		\bar{\mathcal{A}}(f(\cdot), \vec{\sigma}') = 
		\begin{bmatrix}
			A(\sigma_{\vec{x}}'(1),\sigma_{P}'(1))\diag(\vec{g}(\vec{u}^{1})) \\
			A(\sigma_{\vec{x}}'(2),\sigma_{P}'(2))\diag(\vec{g}(\vec{u}^{2})) \\
			\vdots \\
			A(\sigma_{\vec{x}}'(|U|),\sigma_{P}'(|U|))\diag(\vec{g}(\vec{u}^{|U|}))
		\end{bmatrix},
	\end{equation}
	and furthermore, for all $i \in \{1,2,\dots,|U|\}$, since 
	\begin{equation}
		(\sigma_{\vec{x}}'(i),\sigma_{P}'(i)) = \mathcal{R}(\vec{g}(\vec{u}^i\odot\vec{k}'),
	\end{equation}
	we have that $0 = A(\sigma_{\vec{x}}'(i),\sigma_{P}'(i))\diag(\vec{g}(\vec{u}^{i}))\vec{k}'$. Therefore, $0 = \bar{\mathcal{A}}(f(\cdot), \vec{\sigma}')\vec{k}'$. Similarly, $0 = \bar{\mathcal{A}}(f(\cdot), \vec{\sigma}'')\vec{k}''$. Therefore, it is not the case that for all $(\vec{\sigma},\vec{k})\in (\Sigma_f,K)$ satisfying $0 = \bar{\mathcal{A}}(f(\cdot),\vec{\sigma})\vec{k}$, $\vec{k} = a \vec{\xi}$ for some $a \in \mathbb{R}$, which completes our proof.
\end{proof}

\begin{condition}\label{cond:sigma_unique_cond}
	The CRN with extrinsic noise $(\mathcal{R},\vec{g}(\vec{u}),\rho(\vec{u}),U)$ is such that for all $f(\vec{x}) = f(\vec{x};\vec{k})$ of the form \eqref{eq:extrinsic_noise_stationary_dist}, there exists a unique $\vec{\sigma}^f \in \Sigma_f$ such that $0 = \bar{\mathcal{A}}(f(\cdot),\vec{\sigma}^f)\vec{k}$ for some $\vec{k}\in K$.
\end{condition}

\begin{lemma}\label{lem:uniqe_sigma}
	A CRN with extrinsic noise $(\mathcal{R},\vec{g}(\vec{u}),\rho(\vec{u}),U)$, is identifiable if it satisfies Condition \ref{cond:sigma_unique_cond}, and furthermore, for all $f(\cdot)$ of the form \eqref{eq:extrinsic_noise_stationary_dist}, 
	\begin{equation}\label{eq:ext_noise_lem2_rank}
		\rank \bar{\mathcal{A}}(f(\cdot),\vec{\sigma}^f) = r-1.
	\end{equation}		
	Here $\vec{\sigma}^f$ is the unique $\vec{\sigma} \in \Sigma_f$ such that $0 = \bar{\mathcal{A}}(f(\cdot),\vec{\sigma})\vec{k}$ for some $\vec{k}\in K$.
\end{lemma}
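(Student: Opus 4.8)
The plan is to verify the sufficient condition of Lemma \ref{lem:ext_basic} directly, using the two hypotheses to pin down both the assignment $\vec{\sigma}$ and the solution set in $\vec{k}$. By Lemma \ref{lem:ext_basic}, it suffices to exhibit, for each fixed $f(\vec{x}) = f(\vec{x};\vec{k})$ of the form \eqref{eq:extrinsic_noise_stationary_dist}, a single vector $\vec{\xi}\in\mathbb{R}^r$ with the property that every pair $(\vec{\sigma},\vec{k}) \in (\Sigma_f, K)$ satisfying $0 = \bar{\mathcal{A}}(f(\cdot),\vec{\sigma})\vec{k}$ obeys $\vec{k} = a\vec{\xi}$ for some $a\in\mathbb{R}$. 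So I would fix an arbitrary such $f$ and construct the required $\vec{\xi}$.

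First I would use Condition \ref{cond:sigma_unique_cond} to collapse the assignment ambiguity. Suppose $(\vec{\sigma},\vec{k})$ is any pair with $\vec{k}\in K$ and $0 = \bar{\mathcal{A}}(f(\cdot),\vec{\sigma})\vec{k}$. Then $\vec{\sigma}$ is an element of $\Sigma_f$ that admits some $\vec{k}\in K$ annihilated by $\bar{\mathcal{A}}(f(\cdot),\vec{\sigma})$; by the uniqueness asserted in Condition \ref{cond:sigma_unique_cond}, this forces $\vec{\sigma} = \vec{\sigma}^f$. Hence all relevant solution pairs share the single assignment $\vec{\sigma}^f$, and the problem reduces to understanding the solution set of the single linear system $0 = \bar{\mathcal{A}}(f(\cdot),\vec{\sigma}^f)\vec{k}$.

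Next I would apply the rank hypothesis \eqref{eq:ext_noise_lem2_rank}. Since $\bar{\mathcal{A}}(f(\cdot),\vec{\sigma}^f)$ has $r$ columns and rank $r-1$, the rank--nullity theorem gives that its kernel is one-dimensional; let $\vec{\xi}$ span it. Then every $\vec{k}$ with $0 = \bar{\mathcal{A}}(f(\cdot),\vec{\sigma}^f)\vec{k}$ satisfies $\vec{k} = a\vec{\xi}$ for some $a\in\mathbb{R}$. Combined with the previous step, this $\vec{\xi}$ works for every consistent pair $(\vec{\sigma},\vec{k})$, establishing the hypothesis of Lemma \ref{lem:ext_basic} and hence stationary global identifiability.

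The argument is essentially a bookkeeping reduction that funnels the extrinsic-noise setting back into the one-dimensional-kernel situation of the intrinsic-noise case, so I do not anticipate a serious technical obstacle. The one point requiring care is the quantifier structure in Condition \ref{cond:sigma_unique_cond}: the condition only asserts uniqueness of the assignment that admits \emph{some} $\vec{k}\in K$, so I must verify that \emph{every} solution pair in $(\Sigma_f, K)$ — not merely the canonical one generated by the true $\vec{k}$ — has its assignment equal to $\vec{\sigma}^f$. This is exactly what the uniqueness clause delivers, and it is precisely what lets a single fixed $\vec{\xi}$ cover all pairs simultaneously, as Lemma \ref{lem:ext_basic} demands.
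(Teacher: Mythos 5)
Your proposal is correct and follows essentially the same route as the paper's proof: invoke Lemma \ref{lem:ext_basic}, use Condition \ref{cond:sigma_unique_cond} to force every solution pair to share the assignment $\vec{\sigma}^f$, and use the rank hypothesis \eqref{eq:ext_noise_lem2_rank} with rank--nullity to conclude the kernel of $\bar{\mathcal{A}}(f(\cdot),\vec{\sigma}^f)$ is one-dimensional, so all admissible $\vec{k}$ are scalar multiples of a single spanning vector. Your explicit attention to the quantifier structure of Condition \ref{cond:sigma_unique_cond} is a sound (and slightly more careful) rendering of what the paper states tersely.
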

\begin{proof}
	The result follows from Lemma \ref{lem:ext_basic}. For any $f(\cdot)$ of the form \eqref{eq:extrinsic_noise_stationary_dist}, assumption 1) ensures that all solutions $(\vec{\sigma},\vec{k})$ to $0 = \bar{\mathcal{A}}(f(\cdot),\vec{\sigma})\vec{k}$ are of the form $(\vec{\sigma}^f, \vec{k})$ for some $\vec{k}$. Assumption 2) then ensures that the dimension of the nullspace of $\bar{\mathcal{A}}(f(\cdot),\vec{\sigma}^f)$ is one, and hence $\exists \vec{v}\in K$ such that $0 = \bar{\mathcal{A}}(f(\cdot),\vec{\sigma}^f)\vec{k}$ if and only if $\vec{k} = \alpha \vec{v}$ for some $\alpha$.
\end{proof}
We now develop a criteria for identifiability that is amenable to analysis using algebraic tools of Section \ref{sec:nullstellensatz}. Given a CRN with extrinsic noise $(\mathcal{R}, \vec{g}(\vec{u},\rho(\vec{u}),U)$, we define $\bar{A}: \left(\mathbb{R}^n \times \mathbb{S}^{n\times n}\right)^{|U|} \rightarrow \mathbb{R}^{|U|(\frac{n^2+3n)}{2} \times r}$ by
\begin{equation}\label{eq:extrinsic_noise_A_bar}
		\bar{A}((\vec{x}_1,P_1),(\vec{x}_2,P_2),\dots,(\vec{x}_{|U|},P_{|U|})) = \begin{bmatrix}
		A(\vec{x}_{1},P_{1})\diag(\vec{g}(\vec{u}^{1})) \\
		A(\vec{x}_{2},P_{2})\diag(\vec{g}(\vec{u}^{2})) \\
		\vdots \\
		A(\vec{x}_{|U|},P_{|U|})\diag(\vec{g}(\vec{u}^{|U|}))
	\end{bmatrix}.
\end{equation}

\begin{theorem}\label{thm:ext_basic_check}
	Consider a CRN with extrinsic noise $(\mathcal{R},\vec{g}(\vec{u}),\rho(\vec{u}),U)$. If Condition \ref{cond:sigma_unique_cond} holds and for all
	\begin{equation}
		\left((\vec{x}_1,P_1),(\vec{x}_2,P_2),\dots,(\vec{x}_{|U|},P_{|U|})\right) \in \left(\mathbb{R}^n_{\geq 0} \times \mathbb{S}^{n\times n}\right)^{|U|}
	\end{equation}
	such that there exists $\vec{k} \in K$ satisfying $0 = \bar{A}((\vec{x}_1,P_1),(\vec{x}_2,P_2),\dots,(\vec{x}_{|U|},P_{|U|}))\vec{k}$, we have
	\begin{equation}\label{eq:ext_noise_thm1_rank}
		\rank \bar{A}((\vec{x}_1,P_1),(\vec{x}_2,P_2),\dots,(\vec{x}_{|U|},P_{|U|})) \geq r-1,
	\end{equation}
	then $(\mathcal{R},\vec{g}(\vec{u}),\rho(\vec{u}),U)$ is stationary globally identifiable over $K$.
\end{theorem}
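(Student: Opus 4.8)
The plan is to reduce Theorem~\ref{thm:ext_basic_check} to Lemma~\ref{lem:uniqe_sigma}. Since Condition~\ref{cond:sigma_unique_cond} is among the hypotheses of the theorem, the first of the two assumptions of Lemma~\ref{lem:uniqe_sigma} holds directly, so it suffices to verify the rank equality \eqref{eq:ext_noise_lem2_rank}, namely $\rank \bar{\mathcal{A}}(f(\cdot),\vec{\sigma}^f) = r-1$ for every $f$ of the form \eqref{eq:extrinsic_noise_stationary_dist}. The bridge between the two statements is the observation that $\bar{\mathcal{A}}$ and $\bar{A}$ are the same object evaluated differently: comparing \eqref{eq:extrinsic_calA_bar} with \eqref{eq:extrinsic_noise_A_bar} gives
\[
\bar{\mathcal{A}}(f(\cdot),\vec{\sigma}^f) = \bar{A}\left((\sigma^f_{\vec{x}}(1),\sigma^f_P(1)),\dots,(\sigma^f_{\vec{x}}(|U|),\sigma^f_P(|U|))\right),
\]
so establishing the rank equality amounts to applying the theorem's hypothesis \eqref{eq:ext_noise_thm1_rank} at the specific tuple of mixture components picked out by $\vec{\sigma}^f$.

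First I would fix an arbitrary $f(\vec{x}) = f(\vec{x};\vec{k})$ of the form \eqref{eq:extrinsic_noise_stationary_dist} and take $\vec{\sigma}^f$ to be the unique consistent assignment guaranteed by Condition~\ref{cond:sigma_unique_cond}, together with the $\vec{k}\in K$ for which $0 = \bar{\mathcal{A}}(f(\cdot),\vec{\sigma}^f)\vec{k}$. Rewriting this identity through the display above as $0 = \bar{A}\left((\sigma^f_{\vec{x}}(i),\sigma^f_P(i))_i\right)\vec{k}$ exhibits the tuple $\left((\sigma^f_{\vec{x}}(i),\sigma^f_P(i))\right)_{i=1}^{|U|}$ as one of the feasible tuples over which the theorem hypothesis quantifies. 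To invoke that hypothesis I must confirm each component lies in the correct domain, i.e. $\sigma^f_{\vec{x}}(i)\in\mathbb{R}^n_{\geq 0}$; this is precisely where the standing assumption that \eqref{eq:LNA_rre} has its equilibrium in $\mathbb{R}^n_{\geq 0}$ is used, since each $\sigma^f_{\vec{x}}(i)$ is the mean of a mixture component and hence an RRE equilibrium $\vec{x}^*(\vec{g}(\vec{u}^i)\odot\vec{k})$. The hypothesis then yields $\rank \bar{\mathcal{A}}(f(\cdot),\vec{\sigma}^f) \geq r-1$.

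For the matching upper bound I would use that $\vec{k}\in K\subseteq\mathbb{R}^r_{>0}$ is nonzero and lies in the kernel of $\bar{\mathcal{A}}(f(\cdot),\vec{\sigma}^f)$, so its nullspace has dimension at least one and therefore $\rank \bar{\mathcal{A}}(f(\cdot),\vec{\sigma}^f)\leq r-1$. Combining the two inequalities gives $\rank \bar{\mathcal{A}}(f(\cdot),\vec{\sigma}^f) = r-1$, which is exactly \eqref{eq:ext_noise_lem2_rank}. As $f$ was arbitrary, both hypotheses of Lemma~\ref{lem:uniqe_sigma} are satisfied, and the lemma delivers stationary global identifiability over $K$.

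I expect the only genuine subtlety — rather than an obstacle — to be the bookkeeping step of confirming that the tuple produced by $\vec{\sigma}^f$ really falls in the feasible set $\left(\mathbb{R}^n_{\geq 0}\times\mathbb{S}^{n\times n}\right)^{|U|}$ of \eqref{eq:ext_noise_thm1_rank}, which hinges on the nonnegativity of the component means. Everything else is the elementary rank sandwich $r-1 \leq \rank \leq r-1$ together with the direct identification of $\bar{\mathcal{A}}$ with $\bar{A}$ recorded above.
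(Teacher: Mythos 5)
Your proposal is correct and follows essentially the same route as the paper's own proof: both reduce to Lemma \ref{lem:uniqe_sigma}, identify $\bar{\mathcal{A}}(f(\cdot),\vec{\sigma}^f)$ with $\bar{A}$ evaluated at the tuple $\left((\sigma^f_{\vec{x}}(i),\sigma^f_P(i))\right)_{i=1}^{|U|}$ of mixture components (checking the components lie in $\mathbb{R}^n_{\geq 0}\times\mathbb{S}^{n\times n}$), apply the hypothesis \eqref{eq:ext_noise_thm1_rank} for the lower bound, and use the existence of a nonzero $\vec{k}\in K\subseteq\mathbb{R}^r_{>0}$ in the kernel---guaranteed by Condition \ref{cond:sigma_unique_cond}---for the upper bound $\rank\bar{\mathcal{A}}(f(\cdot),\vec{\sigma}^f)\leq r-1$. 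Your write-up is in fact slightly more explicit than the paper's at the two points it glosses over (why the rank cannot exceed $r-1$, and why the tuple is feasible for the hypothesis), but the argument is the same.
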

\begin{proof}
	To apply Lemma \ref{lem:uniqe_sigma} we must show that the rank condition \eqref{eq:ext_noise_thm1_rank} implies assumption \eqref{eq:ext_noise_thm1_rank} of Lemma \ref{lem:uniqe_sigma}. Let $f(\cdot)$ be of the form \eqref{eq:extrinsic_noise_stationary_dist}. We have that
	\begin{equation}
		\bar{\mathcal{A}}(f(\cdot),\vec{\sigma}^f) =
		\begin{bmatrix}
		A(\sigma^f_{\vec{x}}(1),\sigma^f_{P}(1))\diag(\vec{g}(\vec{u}^{1})) \\
		A(\sigma^f_{\vec{x}}(2),\sigma^f_{P}(2))\diag(\vec{g}(\vec{u}^{2})) \\
		\vdots \\
		A(\sigma^f_{\vec{x}}(|U|),\sigma^f_{P}(|U|))\diag(\vec{g}(\vec{u}^{|U|}))
	\end{bmatrix}.
	\end{equation}
	Observe that for all $i \in \{1,2,\dots,|U|\}$, $(\sigma^f_{\vec{x}}(i),\sigma^f_{P}(i)) \in \left(\mathbb{R}^n_{\geq 0} \times \mathbb{S}^{n\times n}\right)$. Therefore, 
	\begin{equation}
		\bar{\mathcal{A}}(f(\cdot),\vec{\sigma}^f) = \bar{A}((\sigma^f_{\vec{x}}(1),\sigma^f_{P}(1)),(\sigma^f_{\vec{x}}(2),\sigma^f_{P}(2)),\dots,(\sigma^f_{\vec{x}}(|U|),\sigma^f_{P}(|U|))).
	\end{equation}
	Hence, by \eqref{eq:ext_noise_thm1_rank}, $\rank \bar{\mathcal{A}}(f(\cdot),\vec{\sigma}^f) \geq r-1$. Furthermore, the fact that Condition \ref{cond:sigma_unique_cond} holds ensures that  $\rank \bar{\mathcal{A}}(f(\cdot),\vec{\sigma}^f) \leq r-1$, and so $\rank \bar{\mathcal{A}}(f(\cdot),\vec{\sigma}^f) = r-1$. By applying Lemma \ref{lem:uniqe_sigma} we then obtain the desired result.
\end{proof}

Theorem \ref{thm:ext_basic_check} can be turned into an algebraic condition for identifiability that can be checked computationally. However, in general, it is hard to check that Condition \ref{cond:sigma_unique_cond} holds. Therefore, we now focus on a special case which occurs frequently in synthetic biology where Condition \ref{cond:sigma_unique_cond} is guaranteed to hold. To begin this investigation we define the \emph{augmented CRN} of a CRN with extrinsic noise as follows.

\begin{definition}\label{def:aug_CRN}
	Given a CRN with extrinsic noise, $(\mathcal{R}, \vec{g}(\vec{u}), \rho(\vec{u}),U)$ and $\vec{\alpha}\in \mathbb{R}_{>0}^s, \gamma >0$, we define the augmented version of the CRN $\mathcal{R}_{aug}$, as the CRN with species $\mathrm{X_1}, \dots, \mathrm{X}_n$ from $\mathcal{R}$ along with species $\mathrm{Z}_1,\dots,\mathrm{Z}_s$, and all reactions from $\mathcal{R}$ along with
\begin{center}
\schemestart
	\subscheme{$\emptyset$}
    	\arrow(O--Zi){<=>[$u_i\alpha_i$][$\gamma$]}[0]
	\subscheme{$\mathrm{Z}_i$}
\schemestop, $i = 1,\dots,s$.
\end{center}
Here we recall that $s$ is the dimension of $\vec{u}$. We denote the augmented version of a CRN $(\mathcal{R},\vec{g}(\vec{u}),\rho(\vec{u}),U)$ with parameters $\vec{\alpha}$ and $\gamma$ by $(\mathcal{R}_{aug}, \vec{g}_{aug}(\vec{u}),\rho(\vec{u}),U, \vec{\alpha}, \gamma)$.
\end{definition}

\begin{remark}
	The ideas we have developed for CRNs with extrinsic noise apply to augmented CRNs as well. In fact, for a fixed value of $\vec{\alpha}$ and $\gamma$, Definition \ref{def:param_id_ext} can be applied to an augmented CRN with extrinsic noise, since $(\mathcal{R}_{aug}, \vec{g}_{aug}(\vec{u}),\rho(\vec{u}),U, \vec{\alpha}, \gamma)$ defines a map from $\vec{k}$ to a Gaussian mixture model. Theorem \ref{thm:ext_basic_check} can be used for an augmented CRN $(\mathcal{R}, \vec{g}(\vec{u}), \rho(\vec{u}),U)$. In this case the $\bar{A}$ used in Theorem \ref{thm:ext_basic_check}, and the $\bar{\mathcal{A}}(f(\cdot),\vec{\sigma})$ used in Lemma \ref{lem:uniqe_sigma} are the same as $\bar{A}$ and $\bar{\mathcal{A}}$ defined for the non-augmented CRN $(\mathcal{R}, \vec{g}(\vec{u}), \rho(\vec{u}),U)$. This is due to the fact that the only reactions involving the $Z$ species have rate constants $\vec{\alpha}$ or $\gamma$, which are known constants, and thus do not need to be inferred from the stationary distribution.
\end{remark}
\begin{remark}
	In applications in synthetic biology it is often the case that one has an augmented CRN in the sense of Definition \ref{def:aug_CRN}. One example is when a biomolecular circuit is constructed on one or more plasmids which are transformed in the cells and each plasmid has a constitutive reporter. Each constitutive reporter is a fluorescent protein whose amount is proportional to the copy number of the plasmid. Additionally, it is possible to estimate $\vec{\alpha}$ and $\gamma$ in a separate experiment where the copy number is well controlled  \cite{delahoz2000plasmid}. Note that the reaction rate constant vector of $(\mathcal{R}_{aug}, \vec{g}_{aug}(\vec{u}),\rho(\vec{u}),U)$ is the same as that of $(\mathcal{R},\vec{g}(\vec{u}),\rho(\vec{u}),U)$, and we treat $\vec{\alpha}$ and $\gamma$ as known constants.
\end{remark}

The following continuation of Example \ref{ex:1d_extrinsic} illustrates Definition \ref{def:aug_CRN}.
\setcounter{example}{9}
\begin{example}[1-dimensional extrinsic noise]
	Continuing with Example \ref{ex:1d_extrinsic}, we now consider the case where there is a constitutive reporter in the circuit. The augmented CRN $(\mathcal{R}_{1aug},\vec{g}_{aug}(\vec{u}),\rho(\vec{u}),U,\vec{\alpha},\gamma)$ is given by
	\begin{equation}
	\schemestart
		\subscheme{$\mathrm{Z}_1$}
		\arrow(Z--z){<=>[$\gamma$][$u_{1}\alpha_1$]}[0]
		\subscheme{$\emptyset$}
    		\arrow(@z--x1){<=>[$u_{1}k_1$][$k_2$]}[0]
		\subscheme{$\mathrm{X}_1$}
		\arrow(@x1--x11){<-[$k_3$]}[0]
		\subscheme{$2\mathrm{X}_1$}
        \schemestop.
\end{equation}
Here $Z_1$ is the constitutive reporter. Its production rate is proportional to the copy number, $\vec{u} = u_1$, which takes a, constant, value drawn from $\rho(\vec{u})$ in each cell.
\end{example}

The augmented version of any CRN will satisfy Condition \ref{cond:sigma_unique_cond}, and thus we can readily construct an algebraic condition that is sufficient for identifiability of augmented CRNs. We formalize this fact in the following theorem.

\begin{theorem}\label{thm:grob_check_constit}%
	Consider a CRN with extrinsic noise $(\mathcal{R},g(\vec{u}),\rho(\vec{u}),U)$. Let $\vec{\alpha}^0 \in \mathbb{R}^s_{>0}$, and let
	\begin{equation}
		\bar{K} = \left\{ (\vec{k},\vec{y}) \in \mathbb{R}^{r+m} \middle| h_i(\vec{k},\vec{y}) = 0, \;i = 1,2,\dots,p \right\}
	\end{equation}
	be a lifted representation of $K$. Let $\left\{u_1,u_2,\dots,u_l\right\} \subseteq U$ and denote row $q$ of $\bar{A}$ by \\$\bar{A}_q(\vec{x}_1,\dots,\vec{x}_{l},P_1,\dots,P_{l}, \vec{u}^1, \dots, \vec{u}^l)$. If the reduced Gr\"{o}bner basis of
	\begin{equation}\label{eq:ideal_ext_constit}
	\begin{multlined}
		\left\langle \vphantom{M_{|U|}^1} h_i(\vec{k},\vec{y}), \;\forall i\in \{1,\dots,p\},\right.
		\bar{A}_q(\vec{x}_1,\dots,\vec{x}_{l},P_1,\dots,P_{l}, \vec{u}^1, \dots, \vec{u}^l)\vec{k},\;\forall q\in \{1,\dots, u\frac{n^2 + 3n}{2}\},\\
		\left. \bar{M}_i^{(r-1)\times (r-1)}(\vec{x}_1,\dots,\vec{x}_{l},P_1,\dots,P_{l}, \vec{u}^1, \dots, \vec{u}^l)\vec{k},\;\forall i\in \{1,\dots,m\}\right\rangle
	\end{multlined}
	\end{equation}
	is $\{1\}$, then the augmented CRN $(\mathcal{R}_{aug},\vec{g}_{aug}(\vec{u}),\rho(\vec{u}),U, \vec{\alpha}^0, 1)$, given in Definition \ref{def:aug_CRN}, is stationary globally identifiable over $K$.
\end{theorem}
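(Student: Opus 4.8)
The plan is to reduce the claim to an application of Theorem~\ref{thm:ext_basic_check}, whose two hypotheses are (a) that the augmented CRN $(\mathcal{R}_{aug},\vec{g}_{aug}(\vec{u}),\rho(\vec{u}),U,\vec{\alpha}^0,1)$ satisfies Condition~\ref{cond:sigma_unique_cond}, and (b) that the rank bound \eqref{eq:ext_noise_thm1_rank} holds at every feasible tuple. Part (b) will be delivered by the hypothesis that the reduced Gr\"{o}bner basis of \eqref{eq:ideal_ext_constit} is $\{1\}$, via Hilbert's Nullstellensatz (Theorem~\ref{thm:nullstellensatz_coro}); part (a) will be delivered by the constitutive reporter species $\mathrm{Z}_1,\dots,\mathrm{Z}_s$ of Definition~\ref{def:aug_CRN}. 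Once both hold, Theorem~\ref{thm:ext_basic_check} gives stationary global identifiability over $K$ immediately.

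First I would prove that every augmented CRN satisfies Condition~\ref{cond:sigma_unique_cond}. The mechanism is that the reporter reactions for $\mathrm{Z}_1,\dots,\mathrm{Z}_s$, which have production rates $u_i\alpha_i^0$ and common degradation rate $\gamma=1$, force the stationary mean of the reporter vector to equal $\vec{u}\odot\vec{\alpha}^0$, a value that is independent of $\vec{k}$. Because $\vec{\alpha}^0\in\mathbb{R}^s_{>0}$, the map $\vec{u}\mapsto\vec{u}\odot\vec{\alpha}^0$ is injective on $U$, so distinct extrinsic values $\vec{u}^j\in U$ always produce mixture components with distinct reporter means. Consequently, for any $f(\cdot)$ of the form \eqref{eq:extrinsic_noise_stationary_dist} the elements of $\mathcal{C}(f(\cdot))$ are pairwise distinct and each one is tagged, through its reporter coordinates, by the unique $\vec{u}^j$ that generated it. Since each component of $f(\cdot)$ is a stationary distribution of $\mathcal{R}_{aug}$ whose reporter coordinates equal $\vec{u}^j\odot\vec{\alpha}^0$, the only assignment $\vec{\sigma}\in\Sigma_f$ compatible with these labels sends index $j$ to the component tagged by $\vec{u}^j$; this forces $\vec{\sigma}^f$ to be unique, as required. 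I expect this step to be the main obstacle: the delicacy is that the reporter rows involve only the known constants $\vec{\alpha}^0$ and $\gamma$ and hence, as noted in the remark preceding the theorem, do not enter $\bar{A}$ when its rank is computed, so one must argue that they nevertheless pin the assignment---in particular this labeling argument, not the weight bookkeeping in $\Sigma_f$, is what handles cases such as Example~\ref{ex:1d_extrinsic} where $\rho$ assigns equal probabilities to different copy numbers.

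Next I would establish the rank bound (b) by following the pattern of Theorems~\ref{thm:grob_check} and~\ref{thm:general_K}. The generators of the ideal \eqref{eq:ideal_ext_constit} simultaneously encode three things: the membership $\vec{k}\in K$ through the lifting polynomials $h_i(\vec{k},\vec{y})$; the stationary relations $0=\bar{A}\vec{k}$ through the row products $\bar{A}_q\vec{k}$; and, by Lemma~\ref{lem:det_rank_char_ineq}, the inequality $\rank\bar{A}<r-1$ through the vanishing of every $(r-1)\times(r-1)$ minor $\bar{M}_i^{(r-1)\times(r-1)}$. Hence the complex variety of \eqref{eq:ideal_ext_constit} is precisely the set of tuples that are feasible and simultaneously rank-deficient below $r-1$. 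If the reduced Gr\"{o}bner basis equals $\{1\}$, then by Theorem~\ref{thm:nullstellensatz_coro} this variety is empty, so no feasible tuple with $\vec{x}_i\in\mathbb{R}^n_{\geq 0}$ and $\vec{k}\in K$ can have $\rank\bar{A}<r-1$; equivalently, $\rank\bar{A}\geq r-1$ at every feasible point built from the chosen subset $\{u_1,\dots,u_l\}\subseteq U$.

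Finally I would lift from the subset to all of $U$ and conclude. Since the block rows of the full matrix $\bar{A}$ in \eqref{eq:extrinsic_noise_A_bar} contain those built from $\{u_1,\dots,u_l\}$ as a submatrix, rank monotonicity yields $\rank\bar{A}\geq r-1$ at every feasible full tuple, so hypothesis \eqref{eq:ext_noise_thm1_rank} of Theorem~\ref{thm:ext_basic_check} is met. Combining this with Condition~\ref{cond:sigma_unique_cond} established above, Theorem~\ref{thm:ext_basic_check} shows that $(\mathcal{R}_{aug},\vec{g}_{aug}(\vec{u}),\rho(\vec{u}),U,\vec{\alpha}^0,1)$ is stationary globally identifiable over $K$, which is the assertion of the theorem.
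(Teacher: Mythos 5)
Your proposal is correct and follows essentially the same route as the paper's proof: establish Condition~\ref{cond:sigma_unique_cond} for the augmented CRN by using the reporter species' stationary means $\vec{u}\odot\vec{\alpha}^0$ (which are independent of $\vec{k}$ and, by positivity of $\vec{\alpha}^0$, tag each mixture component with its generating $\vec{u}$), and then invoke Theorem~\ref{thm:ext_basic_check}, with the rank hypothesis supplied by the Nullstellensatz argument on the ideal \eqref{eq:ideal_ext_constit}. If anything, your write-up is slightly more explicit than the paper on two points the paper leaves implicit --- the minor-vanishing/Nullstellensatz step and the rank-monotonicity argument lifting from the subset $\{u_1,\dots,u_l\}$ to all of $U$ --- while the paper is more explicit about the block structure of $\bar{\mathcal{A}}_{aug}$ and the counting argument $|\mathcal{C}(f(\cdot))|=|U|$.
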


\begin{proof}
	For notational clarity we use $\bar{\mathcal{A}}(f(\cdot),\vec{\sigma})$ refer to the matrix defined by \eqref{eq:extrinsic_calA_bar} for the CRN $(\mathcal{R},g(\vec{u}),\rho(\vec{u}),U)$, and $\bar{\mathcal{A}}_{aug}(f(\cdot),\vec{\sigma})$ refer to the matrix defined by \eqref{eq:extrinsic_calA_bar} for the augmented CRN $(\mathcal{R}_{aug},\vec{g}_{aug}(\vec{u}),\rho(\vec{u}),U,\vec{\alpha},\gamma)$. Observe that $\bar{\mathcal{A}}_{aug}(f(\cdot),\vec{\sigma})$ is used to determine if the augmented CRN satisfies Condition \ref{cond:sigma_unique_cond}, whereas $\bar{\mathcal{A}}(f(\cdot),\vec{\sigma})$ determines identifiability of the augmented CRN. This is due to $\vec{\alpha}$ and $\gamma$ being known constants instead of parameters that must be estimated. We partition $P$ as
	\begin{equation}
		P = \begin{bmatrix}
			P_{\vec{x}} & P_{\vec{x},\vec{z}}\\
			P_{\vec{x},\vec{z}}^T & P_{\vec{z}}
		\end{bmatrix}.
	\end{equation}
	Observe that $\bar{\mathcal{A}}_{aug}(f(\cdot),\vec{\sigma})$ takes the form
\begin{equation}\label{eq:ext_A_bar_aug_form}
	 \bar{\mathcal{A}}_{aug}(f(\cdot),\vec{\sigma}) = 
	 \mbox{\small$
	 \begin{bmatrix}
		A(\sigma_{\vec{x}}'(1),\sigma_{P_{\vec{x}}}'(1))\diag(\vec{g}(\vec{u}^{1})) & 0 & 0\\
		0 & \diag(\vec{u}^1) & -\sigma_{\vec{z}}(\vec{u}^1)\\
		0 & 2\diag(\sigma_{P_{\vec{z}}}(\vec{u}^1))-\diag(\vec{u}^1) & -\sigma_{\vec{z}}(\vec{u}^1)\\
		A(\sigma_{\vec{x}}'(2),\sigma_{P_{\vec{x}}}'(2))\diag(\vec{g}(\vec{u}^{2})) & 0 & 0\\
		0 & \diag(\vec{u}^2) & -\sigma_{\vec{z}}(\vec{u}^2)\\
		0 & 2\diag(\sigma_{P_{\vec{z}}}(\vec{u}^2))-\diag(\vec{u}^2) & -\sigma_{\vec{z}}(\vec{u}^2)\\
		\vdots & \vdots & \vdots\\
		A(\sigma_{\vec{x}}'(|U|),\sigma_{P_{\vec{x}}}'(|U|))\diag(\vec{g}(\vec{u}^{|U|})) & 0 & 0\\
		0 & \diag(\vec{u}^{|U|}) & -\sigma_{\vec{z}}(\vec{u}^{|U|})\\
		0 & 2\diag(\sigma_{P_{\vec{z}}}(\vec{u}^{|U|}))-\diag(\vec{u}^{|U|}) & -\sigma_{\vec{z}}(\vec{u}^{|U|})\\
	 \end{bmatrix}$}.
\end{equation}
	We use Theorem \ref{thm:ext_basic_check} to prove the desired result. To do so we must show that Condition \ref{cond:sigma_unique_cond} holds for $(\mathcal{R}_{aug},\vec{g}_{aug}(\vec{u}),\rho(\vec{u}),U)$. Suppose that there exists $\vec{\sigma}^1, \vec{\sigma}^2 \in \Sigma_f$ such that $\vec{\sigma}^1 \neq \vec{\sigma}^2$ and
	\begin{subequations}
	\begin{align}
		0 &= \bar{\mathcal{A}}_{aug}(f(\cdot),\vec{\sigma}^1)
		\begin{bmatrix}
			\vec{k}^1 \\
			\vec{\alpha}^0\\
			\gamma
		\end{bmatrix}\\
		0 &= \bar{\mathcal{A}}_{aug}(f(\cdot),\vec{\sigma}^2)
		\begin{bmatrix}
			\vec{k}^2 \\
			\vec{\alpha}^0\\
			\gamma
		\end{bmatrix}
	\end{align}
	\end{subequations}
	with $\vec{k}^1,\vec{k}^2 \in K$, $\vec{\alpha}^0 \in \mathbb{R}_{> 0}^s$, and $\gamma = 1$. Then, from \eqref{eq:ext_A_bar_aug_form} we have that for all $i = 1,2,\dots,|U|$,
	\begin{subequations}
	\begin{align}
		0 &= \vec{\alpha}^0 \odot \vec{u}^i - \sigma_{\vec{z}}^1(i),\\
		0 &= \vec{\alpha}^0 \odot \vec{u}^i - \sigma_{\vec{z}}^2(i).
	\end{align}
	\end{subequations}
	This implies that for all $i = 1,2,\dots,|U|$, we have that $\sigma_{\vec{z}}^1(i) = \sigma_{\vec{z}}^2(i)$. Therefore, $|\mathcal{C}(f(\cdot))| \geq |U|$. Additionally, we know that it always holds that $|\mathcal{C}(f(\cdot))| \leq |U|$. Therefore, we can then infer that $|\mathcal{C}(f(\cdot))| = |U|$. Thus, $\sigma_{\vec{z}}^1(i) = \sigma_{\vec{z}}^2(i)$ for $i = 1,2,\dots,|U|$ implies that $\vec{\sigma}^1(i) = \vec{\sigma}^2(i)$ for $i=1,2,\dots,|U|$. This shows that only one $\vec{\sigma} \in \Sigma^f$ has a $\vec{k} \in K$ such that $0 = \bar{\mathcal{A}}(f(\cdot),\vec{\sigma})\vec{k}$ for some $\vec{k}\in K$, and therefore Condition \ref{cond:sigma_unique_cond} is satisfied by $(\mathcal{R}_{aug},\vec{g}_{aug}(\vec{u}),\rho(\vec{u}),U, \vec{\alpha},\gamma)$. To complete the proof, observe that \eqref{eq:ideal_ext_constit} being equal to $\{1\}$ ensures that Theorem \ref{thm:ext_basic_check} can be applied, and so $(\mathcal{R}_{aug},\vec{g}_{aug}(\vec{u}),\rho(\vec{u}),U, \vec{\alpha}^0, 1)$, is stationary globally identifiable over $K$.
\end{proof}

\begin{remark}
	We note that Condition \ref{cond:sigma_unique_cond} is needed for the emptiness of the ideal defined by \eqref{eq:ideal_ext_constit} to be a sufficient condition for stationary global identifiability of $(\mathcal{R}, \vec{g}(\vec{u}), \rho(\vec{u}), U)$. This is because without Condition \ref{cond:sigma_unique_cond} there are two ways for a CRN with extrinsic noise to lose identifiability: a) There is exactly one $\vec{\sigma}$ consistent with $f(\cdot)$ and $(\mathcal{R},g(\vec{u}),\rho(\vec{u}),U)$, but $\rank \bar{\mathcal{A}}(f(\cdot),\vec{\sigma}) < r-1$, which is analogous to the loss of identifiability for CRNs without extrinsic noise, or b) There are multiple $\vec{\sigma}$'s consistent with $f(\cdot)$ and $(\mathcal{R},g(\vec{u}),\rho(\vec{u}),U)$, and each corresponds to a different 1-dimensional subspace of for $\vec{k}$. In Theorem \ref{thm:grob_check_constit} we use the fact that the augmented CRN is considered to ensure that Condition \ref{cond:sigma_unique_cond} holds.
\end{remark}

\begin{remark}
	We note that identifiability in sense that Theorem \ref{thm:grob_check_constit} certifies assumes that both $\vec{\alpha}$ and $\gamma$ are known, with $\gamma = 1$. However, since this work studies only stationary distributions, as long as $\vec{\alpha}/\gamma$ is known we can always take $\gamma = 1$ and use the value of $\vec{\alpha}/\gamma$ in place of $\vec{\alpha}$.
\end{remark}

\setcounter{example}{9}
\begin{example}[1-dimensional extrinsic noise]
Here we continue Example \ref{ex:1d_extrinsic}. Let $\vec{\alpha} > 0$. We wish to certify identifiability of $(\mathcal{R}_{aug}, \vec{g}_{aug}(\vec{u}),U, \vec{\alpha}, 1)$ over $\mathbb{R}^2_{>0}$. Theorem \ref{thm:grob_check_constit} states that we can consider the ideal \eqref{eq:ideal_ext_constit}, and if the reduced Gr\"{o}bner basis is $\{1\}$, we can conclude that stationary global identifiability holds. For this example, \eqref{eq:ideal_ext_constit} is defined by 54 polynomials.%
\end{example}

We observe that if we want to use Theorem \ref{thm:grob_check_constit} to certify stationary global identifiability we must compute the reduced Gr\"{o}bner basis of an ideal over $\mathbb{Q}[[\vec{x}^T,\vec{y}^T,\vec{k}^T]^T]$. If for example $K = \mathbb{R}_{>0}^r$, then $[\vec{x}^T,\vec{y}^T,\vec{k}^T]^T \in \mathbb{R}^{l\frac{n^2 + 3n}{2} + r}$, and hence as $|U|$ grows our computational problem becomes harder very quickly, since we may need to use $l=|U|$ in the worst case. An alternative is to use only the reaction rate equations \eqref{eq:LNA_rre}, which conceptually equates to using only the means of each mixture component in the estimation of the parameters. Let $A^{rre}(\vec{x})$ be the first $n$ rows of $A(\vec{x},P)$, and for any $l \leq |U|$, define
\begin{equation}
	\bar{A}^{rre}(\vec{x}_1,\dots,\vec{x}_{l},\vec{u}^1, \dots, \vec{u}^l) =
	\begin{bmatrix}
		A^{rre}(\vec{x}_1)\diag(\vec{g}(\vec{u}^1)) \\
		A^{rre}(\vec{x}_2)\diag(\vec{g}(\vec{u}^2)) \\
		\vdots \\
		A^{rre}(\vec{x}_{l})\diag(\vec{g}(\vec{u}^{l}))
	\end{bmatrix}.
\end{equation}
Since the first $n$ rows of $A(\vec{x},P)$ correspond to the reaction rate equations \eqref{eq:stationary_dist_rre} they are not a function of $P$, and therefore neither is $\bar{A}^{rre}$. Therefore, we can eliminate all the covariance variables from \eqref{eq:ideal_ext_constit} which results in a check for stationary parametric identifiability involving an ideal over a lower dimensional ring.
\begin{theorem}\label{thm:ext_noise_cert_rre_only}
	Consider a CRN with extrinsic noise $(\mathcal{R},g(\vec{u}),\rho(\vec{u}),U)$. Let $\vec{\alpha}^0 \in \mathbb{R}^s_{>0}$, and let
	\begin{equation}
		\bar{K} = \left\{ (\vec{k},\vec{y}) \in \mathbb{R}^{r+m} \middle| h_i(\vec{k},\vec{y}) = 0, \;i = 1,2,\dots,p \right\}
	\end{equation}
	be a lifted representation of $K$. Let $\left\{u_1,u_2,\dots,u_l\right\} \subseteq U$. Denote by $\bar{A}^{rre}_q$ row $q$ of $\bar{A}^{rre}$, and denote by 
	\begin{equation}
		\bar{M}_i^{rre,(r-1)\times (r-1)}(\vec{x}_1,\dots,\vec{x}_{l}, \vec{u}^1, \dots, \vec{u}^l)
	\end{equation}
	the $(r-1)\times (r-1)$ minors of $\bar{A}^{rre}$, indexed by $i$. If the reduced Gr\"{o}bner basis of 
		\begin{equation}\label{eq:rre_only_ideal}
		\begin{multlined}
		\left\langle \vphantom{M_{|U|}^1} h_i(\vec{k},\vec{y}), \;\forall i \in \{1,\dots,p\},\right.
		\bar{A}^{rre}_q(\vec{x}_1,\dots,\vec{x}_{l},\vec{u}^1, \dots, \vec{u}^l)\vec{k},\;\forall q \in \{1,\dots, un\}\\
		\left. \bar{M}_i^{rre,(r-1)\times (r-1)}(\vec{x}_1,\dots,\vec{x}_{l}, \vec{u}^1, \dots, \vec{u}^l)\vec{k},\;\forall i\in \{1,\dots,m\} \right\rangle
	\end{multlined}
	\end{equation}
	is $\{1\}$, then the augmented CRN defined in Definition \ref{def:aug_CRN} $(\mathcal{R}_{aug},\vec{g}_{aug}(\vec{u}),\rho(\vec{u}),U, \vec{\alpha}^0, 1)$, is stationary globally identifiable over $K$.
\end{theorem}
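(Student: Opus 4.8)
The plan is to mirror the proof of Theorem \ref{thm:grob_check_constit}, reducing everything to an application of Theorem \ref{thm:ext_basic_check}, while exploiting the fact that the reaction rate equation rows of $A(\vec{x},P)$ do not involve $P$. The central structural observation I would establish first is that, for any feasible configuration, $\bar{A}^{rre}(\vec{x}_1,\dots,\vec{x}_l,\vec{u}^1,\dots,\vec{u}^l)$ is a row-submatrix of the full $\bar{A}((\vec{x}_1,P_1),\dots,(\vec{x}_{|U|},P_{|U|}))$. Indeed, $A^{rre}(\vec{x})$ consists of the first $n$ rows of $A(\vec{x},P)$, which by \eqref{eq:stationary_dist_rre} depend only on $\vec{x}$; stacking the blocks $A^{rre}(\vec{x}_i)\diag(\vec{g}(\vec{u}^i))$ for a chosen subset $\{u_1,\dots,u_l\}\subseteq U$ thus produces a subset of the rows appearing in $\bar{A}$ built from all of $U$. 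Since deleting rows cannot increase rank, this yields the key inequality $\rank \bar{A} \geq \rank \bar{A}^{rre}$ at every feasible point.

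Next I would translate the Gr\"obner basis hypothesis into a rank statement. By Theorem \ref{thm:nullstellensatz_coro}, the reduced Gr\"obner basis of \eqref{eq:rre_only_ideal} being $\{1\}$ is equivalent to the variety of that ideal being empty. Unpacking the generators, emptiness says there is no tuple $(\vec{x}_1,\dots,\vec{x}_l,\vec{k},\vec{y})$ simultaneously satisfying $h_i(\vec{k},\vec{y})=0$ for all $i$ (so that $\vec{k}\in K$ via the projection of $\bar{K}$), $\bar{A}^{rre}\vec{k}=0$, and the vanishing of all $(r-1)\times(r-1)$ minors of $\bar{A}^{rre}$. By Lemma \ref{lem:det_rank_char_ineq} the last condition is exactly $\rank \bar{A}^{rre} < r-1$. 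Hence the emptiness means that at every $(\vec{x}_1,\dots,\vec{x}_l)$ admitting some $\vec{k}\in K$ with $\bar{A}^{rre}\vec{k}=0$, we must have $\rank \bar{A}^{rre} \geq r-1$.

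I would then assemble these two facts to verify the hypothesis of Theorem \ref{thm:ext_basic_check}. Take any $((\vec{x}_1,P_1),\dots,(\vec{x}_{|U|},P_{|U|}))$ with each $\vec{x}_i\in\mathbb{R}^n_{\geq 0}$ for which some $\vec{k}\in K$ satisfies $0 = \bar{A}(\cdot)\vec{k}$. Restricting this equation to the reaction rate equation rows for the subset $\{u_1,\dots,u_l\}$ shows that the same $\vec{k}$ satisfies $\bar{A}^{rre}(\vec{x}_1,\dots,\vec{x}_l)\vec{k}=0$, so the previous paragraph gives $\rank \bar{A}^{rre} \geq r-1$, and the row-submatrix relation gives $\rank \bar{A} \geq \rank \bar{A}^{rre} \geq r-1$. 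This is precisely the rank condition \eqref{eq:ext_noise_thm1_rank}. Finally, exactly as in the proof of Theorem \ref{thm:grob_check_constit}, the augmented structure of $(\mathcal{R}_{aug},\vec{g}_{aug}(\vec{u}),\rho(\vec{u}),U,\vec{\alpha}^0,1)$ forces Condition \ref{cond:sigma_unique_cond} to hold: the $\mathrm{Z}_i$ reporter rows pin down $\sigma_{\vec{z}}(i) = \vec{\alpha}^0 \odot \vec{u}^i$, which forces $|\mathcal{C}(f(\cdot))|=|U|$ and thereby makes the consistent assignment $\vec{\sigma}$ unique. With both Condition \ref{cond:sigma_unique_cond} and the rank condition in hand, Theorem \ref{thm:ext_basic_check} delivers stationary global identifiability over $K$.

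I expect the main subtlety, rather than a genuine obstacle, to be bookkeeping the mismatch between the two index sets: the ideal \eqref{eq:rre_only_ideal} is built from only a subset of size $l$ of the extrinsic-noise values and uses no covariance variables, whereas Theorem \ref{thm:ext_basic_check} quantifies over all of $U$ and over full covariances $P_i$. The row-submatrix argument is exactly what bridges this gap, and one must be careful that passing to a subset of $U$ and discarding the covariance rows each only decrease rank, so that the lower bound $\rank \bar{A}^{rre} \geq r-1$ indeed propagates upward to $\rank \bar{A} \geq r-1$.
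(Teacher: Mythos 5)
Your proof is correct, and its key step---the observation that $\bar{A}^{rre}$ is a row-submatrix of $\bar{A}$, so that $\rank \bar{A}^{rre} \leq \rank \bar{A}$ at every feasible point---is exactly the observation on which the paper's proof rests. Where you diverge is in how you finish. The paper stays at the level of ideals: it argues that triviality of the reduced Gr\"{o}bner basis of \eqref{eq:rre_only_ideal} forces the full ideal \eqref{eq:ideal_ext_constit} to have reduced Gr\"{o}bner basis $\{1\}$ as well, and then invokes Theorem \ref{thm:grob_check_constit} as a black box; this step is rigorous because every generator of \eqref{eq:rre_only_ideal}---the $h_i$, the RRE rows of $\bar{A}\vec{k}$, and the minors of $\bar{A}^{rre}$, each of which is also an $(r-1)\times(r-1)$ minor of $\bar{A}$---lies among the generators of \eqref{eq:ideal_ext_constit}, so $1$ belongs to the larger ideal. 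You instead unwind Theorem \ref{thm:grob_check_constit}: you convert the Gr\"{o}bner hypothesis into a statement about real feasible points via Theorem \ref{thm:nullstellensatz_coro} and Lemma \ref{lem:det_rank_char_ineq}, propagate the rank bound from $\bar{A}^{rre}$ to $\bar{A}$, re-derive Condition \ref{cond:sigma_unique_cond} from the reporter rows, and apply Theorem \ref{thm:ext_basic_check} directly. Both routes are sound, and your bookkeeping of the two mismatches (subset of $U$ versus all of $U$, and discarded covariance rows) is handled correctly by the row-submatrix argument. The paper's version is shorter because it reuses Theorem \ref{thm:grob_check_constit} wholesale; yours has the advantage of operating entirely at the level of real feasible points and ranks, which sidesteps the one slightly delicate move in the paper's wording---passing from ``the rank of $\bar{A}$ cannot drop below $r-1$ at admissible (real) points'' to triviality of the complex variety of \eqref{eq:ideal_ext_constit}---a passage that is really justified by generator containment rather than by the rank statement itself.
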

\begin{proof}
	We observe that $\rank \bar{A}^{rre} \leq \rank \bar{A}$, and therefore if the reduced Gr\"{o}bner basis of the ideal \eqref{eq:rre_only_ideal} is $\{1\}$, the rank of $\bar{A}$ cannot drop below $r-1$ for any admissible $\vec{x}_1,\dots,\vec{x}_{l},P_1,\dots,P_l$ and hence the ideal \eqref{eq:ideal_ext_constit} has reduced Gr\"{o}bner basis $\{1\}$. Therefore, $(\mathcal{R}_{aug},\vec{g}_{aug}(\vec{u}),\rho(\vec{u}),U, \vec{\alpha}^0, 1)$ is stationary globally identifiable over $K$ by Theorem \ref{thm:grob_check_constit}.
\end{proof}
\setcounter{example}{9}
\begin{example}[1-dimensional extrinsic noise]
We now return to Example \ref{ex:1d_extrinsic}. Suppose we want to certify that $(\mathcal{R}_1,\vec{g}(\vec{u}),\rho(\vec{u}),U,\alpha,1)$ is stationary globally identifiable over $\mathbb{R}_{>0}^3$, while using fewer variables. For this example, $\bar{A}^{rre}$ is given by
\begin{equation}
	\bar{A}^{rre}(\vec{x}_1,\vec{x}_2,\vec{x}_3) = 
		\begin{bmatrix}
		0 & -x_{11} & - x_{11}^2 \\
		1 & -x_{12} & - x_{12}^2 \\
		2 & -x_{13} & - x_{13}^2
	\end{bmatrix}.
\end{equation}
Theorem \ref{thm:ext_noise_cert_rre_only} states that we can consider the ideal \eqref{eq:rre_only_ideal}, and if the reduced Gr\"{o}bner basis is $\{1\}$, we can conclude that stationary global identifiability holds.%
\end{example}

We now present an important example where Theorem \ref{thm:ext_noise_cert_rre_only} can be used to certify stationary global identifiability.
\begin{example}[gaining identifiability by adding extrinsic noise]\label{ex:feedback}
We consider a feedback loop consisting of two species, $\mathrm{X}_1$ and $\mathrm{X}_2$ where as shown in Figure \ref{fig:ext_noise_antithetic} $\mathrm{X}_1$ and $\mathrm{X}_2$ mutually degrade, and $\mathrm{X}_2$ activates the production of $\mathrm{X}_1$. As in Example \ref{ex:which_act} we model the activation of $\mathrm{X}_1$ by $\mathrm{X}_2$ as the production rate of $\mathrm{X}_1$being an affine function, $k_1 + k_6 x_2$. This system forms a conceptual model of a feedback loop with only two species, where as we will see the system is not stationary globally identifiable without extrinsic noise, but is stationary globally identifiable with extrinsic noise. To start, we note that without the extrinsic noise the CRN is not stationary globally identifiable since for the CRN
\begin{equation}
	\schemestart
		\subscheme{$\emptyset$}
    		\arrow(z--x1){<=>[$k_2$][$k_1$]}[150]
		\subscheme{$\mathrm{X}_1$}
		\arrow(@z--x2){<=>[$k_3$][$k_4$]}[90]
		\subscheme{$\mathrm{X}_2$}
		\arrow(@z--x12){<-[$k_5$]}[30]
		\subscheme{$\mathrm{X}_1 + \mathrm{X}_2$}
        \arrow(@x2--@x12){->[$k_6$]}
	\schemestop
\end{equation}
we have from the definition of $\vec{f}(\vec{x};\vec{k})$ in \eqref{eq:RRE_def} that
\begin{equation}
	\vec{f}(\vec{x};\vec{k}) =
	\begin{bmatrix}
		k_1 - k_2x_1 - k_5x_1x_2 + k_6x_2\\
		k_3 - k_4x_2 - k_5x_1x_2
	\end{bmatrix}
\end{equation}
and from \eqref{eq:Gamma_def} that
\begin{equation}
	\Gamma(\vec{x};\vec{k})\Gamma(\vec{x};\vec{k})^T =
	\begin{bmatrix}
		k_1 + k_2x_1 + k_6x_2 + k_5x_1x_2 & k_5x_1x_2\\
		k_5x_1x_2 & k_3 + k_4x_2 + k_5x_1x_2
	\end{bmatrix}.
\end{equation}
Therefore we have that \eqref{eq:stationary_A} is given by $0 = A(\vec{x},P)\vec{k}$ where
\begin{equation}\label{eq:fb_example_A}
	A(\vec{x},P) = 
	\begin{bmatrix}
		1 &         -x_1 & 0 &           0 &                                        -x_1x_2 &          x_2\\
		0 &           0 & 1 &         -x_2 &                                        -x_1x_2 &           0\\
		1 & x_1 - 2p_{11} & 0 &           0 &                 x_1x_2 - 2p_{12}x_1 - 2p_{11}x_2 & 2p_{12} + x_2\\
		0 &       -p_{12} & 0 &       -p_{12} & x_1x_2 - p_{12}x_1 - p_{12}x_2 - p_{22}x_1 - p_{11}x_2 &        p_{22}\\
		0 &           0 & 1 & x_2 - 2p_{22} &                 x_1x_2 - 2p_{22}x_1 - 2p_{12}x_2 &           0
	\end{bmatrix}.
\end{equation}
One can verify that when $\vec{k} = \begin{bmatrix} 10 & 1 & 10 & 1 & 1 & 10\end{bmatrix}^T$ the solution to \eqref{eq:fb_example_A} is $\vec{x} = \begin{bmatrix} 10 &  \frac{10}{11} \end{bmatrix}^T$ and
\begin{equation}
	P = \begin{bmatrix}
		10 & 0 \\
		0 & 10/11
	\end{bmatrix}.
\end{equation}
Evaluating the rank of $A$ in \eqref{eq:fb_example_A} with these values of $\vec{x}$ and $P$ gives $\rank A = 4 < r- 1$ and so the CRN without extrinsic noise is not stationary globally identifiable.

We now consider extrinsic noise, where the genes for $\mathrm{X}_1$ and $\mathrm{X}_2$ are on separate plasmids, each with its own constitutive reporter, $\mathrm{X}_3$ and $\mathrm{X}_4$ respectively. In a cell with extrinsic noise value $\vec{u}^i = (u^i_1,u^i_2)^T$, the production rate of $\mathrm{X}_1$ is $u^i_1 k_1$ and the production rate of $\mathrm{X}_2$ is $u^2_i k_3$. To model the constitutive reporters we define the augmented CRN $(\mathcal{R}_{aug}, \vec{g}_{aug}(\vec{u}), U, \vec{\alpha},\gamma)$ in Figure \ref{fig:ext_noise_antithetic}(b) which includes the reporter species $\mathrm{Y}_1$ and $\mathrm{Y}_2$. Therefore, we can use Theorem \ref{thm:ext_noise_cert_rre_only}. Considering $U \supseteq \left\{[0,1], [1,0], [1,1], [2,1], [2,2], [1,2]\right\}$ we find that for mixture component $i$ the reaction rate equations defined in \eqref{eq:RRE_def} are
\begin{align}
	0 & = \vec{f}(\vec{x}_i;\vec{k}),\\
	0 & = 
	\begin{bmatrix}
		u^i_{1}k_1 - k_2x_{1i} - k_5x_{1i}x_{2i} + k_6x_{2i}\\
		u^i_{2}k_3 - k_4x_{2i} - k_5x_{1i}x_{2i}
	\end{bmatrix}.
\end{align}
Where we use the notation $\vec{x}_i = [x_{1i}, x_{2i}]^T$. Forming $\bar{A}^{rre}(\vec{x}_1,\dots,\vec{x}_l,\vec{u}^1,\dots,\vec{u}^l)$ we find that \eqref{eq:extrinsic_noise_A_bar} is given by
\begin{equation}\label{eq:exFB_A_bar}
	0 = \bar{A}^{rre}(\vec{x}_1,\dots,\vec{x}_l,\vec{u}^1,\dots,\vec{u}^l)\vec{k}
	=
	\begin{bmatrix}
		1 & -x_{11} & 0 &    0 & -x_{11}x_{21} & x_{21}\\
		0 &    0 & 0 & -x_{21} & -x_{11}x_{21} &   0\\
		0 & -x_{12} & 0 &    0 & -x_{12}x_{22} & x_{22}\\
		0 &    0 & 1 & -x_{22} & -x_{12}x_{22} &   0\\
		1 & -x_{13} & 0 &    0 & -x_{13}x_{23} & x_{23}\\
		0 &    0 & 1 & -x_{23} & -x_{13}x_{23} &   0\\
		1 & -x_{14} & 0 &    0 & -x_{14}x_{24} & x_{24}\\
		0 &    0 & 2 & -x_{24} & -x_{14}x_{24} &   0\\
		2 & -x_{15} & 0 &    0 & -x_{15}x_{25} & x_{25}\\
		0 &    0 & 1 & -x_{25} & -x_{15}x_{25} &   0\\
		2 & -x_{16} & 0 &    0 & -x_{16}x_{26} & x_{26}\\
		0 &    0 & 2 & -x_{26} & -x_{16}x_{26} &   0
	\end{bmatrix}\vec{k}.
\end{equation}
The reduced Gr\"{o}bner basis of \eqref{eq:ideal_ext_constit} with $\bar{A}^{rre}(\vec{x}_1,\dots,\vec{x}_l,\vec{u}^1,\dots,\vec{u}^l)$ given by \eqref{eq:exFB_A_bar} is $\{1\}$, and hence, by Theorem \ref{thm:ext_noise_cert_rre_only}, $(\mathcal{R}_{aug}, \vec{g}_{aug}(\vec{u}), U,\vec{\alpha},1)$ is stationary globally identifiable over $\mathbb{R}^6_{>0}$.%

In this way the techniques of this paper help guide experimental design, since as shown in this example one can estimate all of the rate constants in this CRN from the stationary population distribution by placing the genes for $\mathrm{X}_1$ and $\mathrm{X}_2$ on separate plasmids, but not if the genes were e.g. genomically integrated in a single copy, or otherwise placed into the population of cells without copy number variation.

\begin{figure}[!h]
\centering
\subfloat[$(\mathcal{R},\vec{g}(\vec{u}),U)$]{%
       \centering
	\schemestart
		\subscheme{$\emptyset$}
    		\arrow(z--x1){<=>[$k_2$][$u^i_{1}k_1$]}[150]
		\subscheme{$\mathrm{X}_1$}
		\arrow(@z--x2){<=>[$u^i_{2}k_3$][$k_4$]}[90]
		\subscheme{$\mathrm{X}_2$}
		\arrow(@z--x12){<-[$k_5$]}[30]
		\subscheme{$\mathrm{X}_1 + \mathrm{X}_2$}
        \arrow(@x2--@x12){->[$k_6$]}
	\schemestop}
	\hspace{24pt}
  \subfloat[$(\mathcal{R}_{aug},\vec{g}_{aug}(\vec{u}),U)$]{%
        \centering
	\schemestart
		\subscheme{$\emptyset$}
    		\arrow(z--x1){<=>[$k_2$][$u^i_{1}k_1$]}[150]
		\subscheme{$\mathrm{X}_1$}
		\arrow(@z--x2){<=>[$u^i_{2}k_3$][$k_4$]}[90]
		\subscheme{$\mathrm{X}_2$}
		\arrow(@z--x12){<-[$k_5$]}[30]
		\subscheme{$\mathrm{X}_1 + \mathrm{X}_2$}
        \arrow(@x2--@x12){->[$k_6$]}
        \arrow(@z--z2){0}[0,2]
        \subscheme{$\emptyset$}
        \arrow(@z2--y1){<=>[$u^i_{1}\alpha$][$\gamma$]}[90]
        \subscheme{$\mathrm{Z}_1$}
        \arrow(@z2--z3){0}[0]
        \subscheme{$\emptyset$}
        \arrow(@z3--y2){<=>[$u^i_{2}\alpha$][$\gamma$]}[90]
        \subscheme{$\mathrm{Z}_2$}
        \schemestop}
    \\
	\caption{ The CRN with extrinsic noise $(\mathcal{R},\vec{g}(\vec{u}),\rho(\vec{u}),U)$ introduced in Example \ref{ex:feedback}. (a) Shows $(\mathcal{R},\vec{g}(\vec{u}),U)$ and (b) shows $(\mathcal{R}_{aug},\vec{g}_{aug}(\vec{u}),\rho(\vec{u}),U,\vec{\alpha},\gamma)$, the version augmented with constitutive reporters. Augmented CRN $(\mathcal{R}_{aug},\vec{g}_{aug}(\vec{u}),\rho(\vec{u}),U,\vec{\alpha},1)$ is stationary globally identifiable over $\mathbb{R}_{>0}^6$ if $U \supseteq \left\{[0,1], [1,0], [1,1], [2,1], [2,2], [1,2]\right\}$ and there is a constitutive promoter for $u_1$ and $u_2$.}\label{fig:ext_noise_antithetic}
\end{figure}
\end{example}

In this section we have studied the problem of checking if a CRN that is not necessarily stationary globally identifiable becomes identifiable when extrinsic noise is added. We now consider the converse problem, can the addition of extrinsic noise make an identifiable CRN become non identifiable? Here we give the following corollary, which formalizes the intuition that if a chemical reaction network without extrinsic noise is stationary globally identifiable, then adding extrinsic noise preserves identifiability as long as Condition \ref{cond:sigma_unique_cond} is met.
\begin{theorem}\label{coro:id_id_ext}
	Consider an augmented CRN with extrinsic noise $(\mathcal{R},\vec{g}(\vec{u}),\rho(\vec{u}),U, \vec{\alpha}, 1)$. Assume that $\forall \vec{u}\in U, \vec{g}(\vec{u}) > 0$. If the corresponding CRN without extrinsic noise $\mathcal{R}$ is stationary globally identifiable over $\mathbb{R}_{>0}^r$, then $(\mathcal{R},\vec{g}(\vec{u}),\rho(\vec{u}),U, \vec{\alpha}, 1)$ is stationary globally identifiable over $\mathbb{R}_{>0}^r$.
\end{theorem}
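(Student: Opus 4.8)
The plan is to reduce everything to the rank criterion of Theorem \ref{thm:ext_basic_check}, exploiting that each block row of $\bar{A}$ is simply the single-cell matrix $A$ post-multiplied by an invertible diagonal matrix. First I would dispose of the hypothesis about Condition \ref{cond:sigma_unique_cond}: because we are dealing with an \emph{augmented} CRN, Condition \ref{cond:sigma_unique_cond} holds automatically. This is exactly the content established inside the proof of Theorem \ref{thm:grob_check_constit}, where the constitutive-reporter reactions force $|\mathcal{C}(f(\cdot))| = |U|$ and a unique consistent assignment $\vec{\sigma}$. So the only remaining hypothesis of Theorem \ref{thm:ext_basic_check} left to verify is the rank bound \eqref{eq:ext_noise_thm1_rank}, namely that $\rank \bar{A}((\vec{x}_1,P_1),\dots,(\vec{x}_{|U|},P_{|U|})) \ge r-1$ for every admissible tuple with $\vec{x}_i \in \mathbb{R}^n_{\geq 0}$ for which some $\vec{k} \in \mathbb{R}^r_{>0}$ solves $0 = \bar{A}(\cdots)\vec{k}$.

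Next I would fix one such feasible tuple together with a witnessing $\vec{k} > 0$. Reading off the block structure of $\bar{A}$ in \eqref{eq:extrinsic_noise_A_bar}, the equation $0 = \bar{A}(\cdots)\vec{k}$ says $0 = A(\vec{x}_i,P_i)\diag(\vec{g}(\vec{u}^i))\vec{k}$ for each $i$. Setting $\vec{k}'_i = \vec{g}(\vec{u}^i)\odot\vec{k}$, the hypotheses $\vec{g}(\vec{u}^i) > 0$ and $\vec{k} > 0$ give $\vec{k}'_i > 0$, and $0 = A(\vec{x}_i,P_i)\vec{k}'_i$. Hence $(\vec{x}_i, P_i, \vec{k}'_i)$ lies in the domain $(\mathbb{R}^n_{\geq 0}, \mathbb{S}^{n\times n}, \mathbb{R}^r_{>0})$ appearing in the definition of $V'$ in \eqref{eq:set_S_eq_zero_converse}.

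The crux is then to invoke identifiability of the noise-free CRN $\mathcal{R}$. By statement ii) of Theorem \ref{thm:spi_alg}, that hypothesis forces $V' = \emptyset$, so no such triple can satisfy $\rank A(\vec{x}_i,P_i) < r-1$; therefore $\rank A(\vec{x}_i,P_i) \ge r-1$. Since $\vec{g}(\vec{u}^i) > 0$, the matrix $\diag(\vec{g}(\vec{u}^i))$ is invertible, so right-multiplication preserves rank and $\rank\bigl(A(\vec{x}_i,P_i)\diag(\vec{g}(\vec{u}^i))\bigr) = \rank A(\vec{x}_i,P_i) \ge r-1$. Because this matrix sits as a block of stacked rows inside $\bar{A}$, it follows that $\rank \bar{A}(\cdots) \ge r-1$, which is precisely \eqref{eq:ext_noise_thm1_rank}. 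Theorem \ref{thm:ext_basic_check} then delivers stationary global identifiability of the augmented CRN over $\mathbb{R}^r_{>0}$.

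I expect the main delicacy to be the positivity bookkeeping rather than any deep computation: one must confirm that a single positive $\vec{k}$ for the stacked system yields a genuinely positive rate vector $\vec{k}'_i > 0$ in each block, so that the $V' = \emptyset$ conclusion — which requires \emph{strictly} positive rates — actually applies, and one must be careful to cite that augmentation is what secures Condition \ref{cond:sigma_unique_cond}. The rank-preservation under the invertible diagonal scaling and the domination of $\rank \bar{A}$ by a single block are then routine once the reduction is in place.
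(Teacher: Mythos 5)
Your proof is correct, and its core is the same mechanism the paper uses: write each block of $\bar{A}$ in \eqref{eq:extrinsic_noise_A_bar} as $A(\vec{x}_i,P_i)\diag(\vec{g}(\vec{u}^i))$, absorb the positive scaling into the rate vector via $\vec{k}'_i = \vec{g}(\vec{u}^i)\odot\vec{k} > 0$, use identifiability of the noise-free $\mathcal{R}$ to force $\rank A(\vec{x}_i,P_i) \geq r-1$, and observe that invertible diagonal scaling and row-stacking preserve this bound. Where you genuinely differ is the finish. The paper concludes that the reduced Gr\"{o}bner basis of \eqref{eq:ideal_ext_constit} is $\{1\}$ and then invokes Theorem \ref{thm:grob_check_constit}; you instead verify the two hypotheses of Theorem \ref{thm:ext_basic_check} directly, namely Condition \ref{cond:sigma_unique_cond} (secured by augmentation, exactly as established inside the proof of Theorem \ref{thm:grob_check_constit}) and the rank bound \eqref{eq:ext_noise_thm1_rank}. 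Your route is in fact tighter: a rank argument over the reals shows only that the \emph{real} solution set is empty, while Hilbert's Nullstellensatz (Theorem \ref{thm:nullstellensatz_coro}) equates a reduced Gr\"{o}bner basis of $\{1\}$ with emptiness of the \emph{complex} variety, so the paper's assertion that \eqref{eq:ideal_ext_constit} has Gr\"{o}bner basis $\{1\}$ does not literally follow from its real rank computation; your appeal to the semantic rank criterion of Theorem \ref{thm:ext_basic_check} needs only the real statement and sidesteps that leap. You are also more explicit than the paper on two points it leaves implicit: that Theorem \ref{thm:spi_alg} ii), i.e.\ $V' = \emptyset$ with $V'$ as in \eqref{eq:set_S_eq_zero_converse}, is the precise device converting identifiability into the rank bound, and that the quantification in Theorem \ref{thm:ext_basic_check} over $\vec{x}_i \in \mathbb{R}^n_{\geq 0}$ matches the domain of $V'$, so the positivity and nonnegativity bookkeeping genuinely closes.
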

\begin{proof}
	Consider an arbitrary $\vec{x}_1,P_1$ that satisfies $0 = A(\vec{x}_1,P_1)\vec{g}(\vec{u}^1)\odot\vec{k}$ for some $\vec{k}\in \mathbb{R}_{>0}^r$. Letting $\vec{k}' = \vec{g}(\vec{u}^1)\odot\vec{k}$ we have that $0 = A(\vec{x}_1,P_1)\vec{k}'$ and $\vec{k}' \in \mathbb{R}_{>0}^r$. Therefore $\rank A(\vec{x}_1,P_1) = r-1$ by our assumption that $\mathcal{R}$ is stationary globally identifiable over $\mathbb{R}_{>0}^r$. Since $\rank A(\vec{x}_1,P_1) \diag \vec{g}(\vec{u}^1) = \rank A(\vec{x}_1,P_1)$, we have that $\bar{A}$ is rank $r-1$ for all $\vec{x}_1,\dots,\vec{x}_{l},P_1,\dots,P_{l}$ that satisfy $\bar{A}(\vec{x}_1,\dots,\vec{x}_{l},P_1,\dots,P_{l}, \vec{u}^1, \dots, \vec{u}^l)\vec{k}$ for some $\vec{k}\in \mathbb{R}_{>0}^r$. Therefore, the reduced Gr\"{o}bner basis of \eqref{eq:ideal_ext_constit} is $\{1\}$ and so by Theorem \ref{thm:grob_check_constit}, $(\mathcal{R},\vec{g}(\vec{u}),U, \vec{\alpha}, 1)$ is stationary globally identifiable over $\mathbb{R}_{>0}^r$.
\end{proof}
\setcounter{example}{9}
\begin{example}[1-dimensional extrinsic noise]
	Returning to Example \ref{ex:1d_extrinsic}, we now ask if we can conclude that $(\mathcal{R}_1,\vec{g}(\vec{u}),\rho(\vec{u}),U, \alpha, 1)$ with $\alpha >0$ is stationary globally identifiable simply by exploiting our results in Example \ref{ex:one_dim}. If we consider $(\mathcal{R}_1,\vec{g}(\vec{u}),\rho(\vec{u}),U', \alpha, 1)$, where $U' = \{1,2\}$, we can apply Theorem \ref{coro:id_id_ext} to conclude that since $\mathcal{R}_1$ is identifiable, the augmented CRN with extrinsic noise $(\mathcal{R}_1,\vec{g}(\vec{u}),\rho(\vec{u}),U', \alpha, 1)$ is also stationary globally identifiable. We note that if we used $U = \{0,1,2\}$ instead of $U'$, the condition $\vec{g}(\vec{u}) > 0$ would not be satisfied and so we would not be able to apply Theorem \ref{coro:id_id_ext}.
\end{example}

	We conclude with section by noting that while in general it is unclear how to verify Condition \ref{cond:sigma_unique_cond} for a non-augmented CRN with extrinsic noise, for the case $n=1$ and $s=1$, it is sometimes possible, as in the following example.

\setcounter{example}{9}
\begin{example}[1-dimensional extrinsic noise]
Here we continue Example \ref{ex:1d_extrinsic} and certify global stationary identifiability of $(\mathcal{R}_1,\vec{g}(\vec{u}),\rho(\vec{u}),U)$. Theorem \ref{thm:grob_check_constit} requires us to have an augmented network. However, if we can verify Condition \ref{cond:sigma_unique_cond} directly we can check identifiability by considering ideal \eqref{eq:ideal_ext_constit} directly. Here we consider $u_1 = \vec{u}\in U \subset \mathbb{R}$, and so we can write \eqref{eq:RRE_def} as
	\begin{equation}\label{eq:ex_1d_extrinsic_rre}
		\dot{x}_1 = u_1 k_1 - k_2x_1 - k_3x_1^2.
	\end{equation}
	If $u_1=0$, then the equilibrium value of $x_1$ is 0. Furthermore, letting $x_1^*$ denote the equilibrium of \eqref{eq:ex_1d_extrinsic_rre} we have that $\frac{\partial x_1^*}{\partial u}  = \frac{k_1}{k_2+2k_3x_1^*} > 0$. Therefore, the means of each mixture component in $f_X(\vec{x};\vec{k})$ are ordered such that if $u^i_1 < u^j_1$ then $x_i < x_j$. It follows that Condition \ref{cond:sigma_unique_cond} is satisfied, since given any $f(\cdot)$ of the form \eqref{eq:extrinsic_noise_stationary_dist}, $\mathcal{C}(f(\cdot)) = \{(w_1,x_1,p_1),(w_2,x_2,p_2),(w_3,x_3,p_3)\}$, where $x_1<x_2<x_3$, the only possible $\vec{\sigma}\in\Sigma^f$ consistent with $(\mathcal{R}, \vec{g}(\vec{u}),\rho(\vec{u}),U)$ is given by $\vec{\sigma}(0) = (w_1,x_1,p_1)$, $\vec{\sigma}(1) = (w_2,x_2,p_2)$, and $\vec{\sigma}(2) = (w_3,x_3,p_3)$. From \eqref{eq:Gamma_def} we have that for any value of $u_1 = \vec{u} \in U$
	\begin{equation}
		\Gamma(x; \vec{k})\Gamma(x; \vec{k})^T = u_1 k_1 + k_2x_1 + k_3x_1^2,
	\end{equation}
	and so, letting $\vec{x}_i= x_{i}$, $P_i = p_{i}$, $\vec{u}^1 = u^1_1 = 0$, $\vec{u}^2 = u^2_1 = 1$, and $\vec{u}^3 = u^3_1 = 2$, \eqref{eq:extrinsic_noise_A_bar} is given by
	\begin{equation}
		\bar{A}(\vec{x}_1,\vec{x}_2,\vec{x}_3, P_1,P_2,P_3) = 
		\begin{bmatrix}
		0 & -x_{1} & - x_{1}^2 \\
		0 & x_{1} - 2p_{1} & x_1^2 - 4p_{1}x_{1}\\
		1 & -x_{2} & - x_{2}^2 \\
		1 & x_{2} - 2p_{2} & x_{12}^2 - 4p_{2}x_{2}\\
		2 & -x_{3} & - x_{3}^2 \\
		2 & x_{3} - 2p_{3} & x_{3}^2 - 4p_{3}x_{3}
	\end{bmatrix}.
	\end{equation}
	We have established Condition \ref{cond:sigma_unique_cond} for this example, and hence we can establish global stationary identifiability by computing the reduced Gr\"{o}bner basis of the ideal \eqref{eq:ideal_ext_constit}, since in the proof of Theorem \ref{thm:grob_check_constit} the only place the augmented species are considered is in the verification of Condition \ref{cond:sigma_unique_cond}.
\end{example}

\section{Conclusion}
In this work we studied the identifiability of LNA models of chemical reaction networks with intrinsic and extrinsic noise from stationary distributions. We gave algebraic characterizations of identifiability and model discriminability which can be used to algorithmically prove identifiability or model discriminability holds for a given model. Our tools are therefore well suited to be used by practicing synthetic biologists and systems biologists to establish identifiability prior to running costly experiments, as well as to provide confidence that fitted parameters and inferred models are accurate. We applied our methods to many examples of biological relevance, those of which do not have extrinsic noise are summarized in Table \ref{tab:examples}. Since our results for chemical reaction networks with extrinsic noise require Condition \ref{cond:sigma_unique_cond}, which is in general difficult to verify unless the extrinsic noise arises from copy number variation and constitutive reporters are included in the CRN, future work includes algorithmic methods for checking Condition \ref{cond:sigma_unique_cond}.

\begin{table}
	\stepcounter{figure}
	\centering
	\begin{tabular}{|m{1.5em}|c|c|}
		Ex. & CRN & $K$ \\
		\hline %
		\hline
		\refstepcounter{rownumber} \label{tab:one_species} 
		1 &
		\schemestart
		\subscheme{$\emptyset$}
    		\arrow(z--x1){<=>[$k_1$][$k_2$]}[0]
		\subscheme{$\mathrm{X}_1$}
		\arrow(@x1--x11){<-[$k_3$]}[0]
		\subscheme{$2\mathrm{X}_1$}
	\schemestop & $\mathbb{R}^3_{>0}$ \refstepcounter{rownumber}\\
	\hline %
	\refstepcounter{rownumber} \label{tab:complex_balanced}
	3 &
	\schemestart
		\subscheme{$\emptyset$}
    		\arrow(z--x1){->[$k_1$]}[135]
		\subscheme{$\mathrm{X}_1$}
		\arrow(@z--x2){<-[$k_3$]}[45]
		\subscheme{$\mathrm{X}_2$}
		\arrow(@x1--@x2){->[$k_2$]}
	\schemestop & $\mathbb{R}^3_{>0}$ \\
	\hline %
	\refstepcounter{rownumber}  \label{tab:seq_rate}
	4 &
	\schemestart
		\subscheme{$\emptyset$}
    		\arrow(z--x1){<=>[$k_2$][$k_1$]}[135]
		\subscheme{$\mathrm{X}_1$}
		\arrow(@z--x2){<=>[$k_3$][$k_4$]}[90]
		\subscheme{$\mathrm{X}_2$}
		\arrow(@z--x12){<-[$k_5$]}[45]
		\subscheme{$\mathrm{X}_1 + \mathrm{X}_2$}
	\schemestop & $\mathbb{R}^5_{>0}$ \\
	\hline %
	\refstepcounter{rownumber}  \label{tab:high_ord_enzym_deg}
	5 &
	\schemestart
		\subscheme{$\emptyset$}
    		\arrow(z--x1){<=>[$k_2$][$k_1$]}[135]
		\subscheme{$\mathrm{X}_1$}
		\arrow(@z--x2){<=>[$k_3$][$k_4$]}[90]
		\subscheme{$\mathrm{X}_2$}
		\arrow(@x2--x12){0}[0,0.7]
		\subscheme{$2\mathrm{X}_1 + \mathrm{X}_2$}
		\arrow(@x12--x11){->[$k_5$]}[-90]
		\subscheme{$2\mathrm{X}_1$}
	\schemestop & $\mathbb{R}^5_{>0}$ \\
	\hline %
	\refstepcounter{rownumber}  \label{tab:three_node_path}
	6 &
	\schemestart
		\subscheme{$\emptyset$}
    		\arrow(z--x1){<=>[$k_1$][$k_2$]}[180, 0.7]
		\subscheme{$\mathrm{X}_1$}
          \arrow(@z--x2){<=>[$k_3$][$k_4$]}[90]
        \subscheme{$\mathrm{X}_2$}
          \arrow(@z--x3){<=>[$k_5$][$k_6$]}[0, 0.7]
        \subscheme{$\mathrm{X}_3$}
		  \arrow(@x1--x12){->[$k_7$]}[90]
        \subscheme{$\mathrm{X}_1 + \mathrm{X}_2$}
          \arrow(@x2--x23){->[$k_8$]}[0, 0.7]
		\subscheme{$\mathrm{X}_2 + \mathrm{X}_3$}
	\schemestop & $\mathbb{R}^8_{>0}$ \\
	\hline %
	\refstepcounter{rownumber}  \label{tab:two_seq}
	7 &
	\schemestart
		\subscheme{$\emptyset$}
    		\arrow(z--x1){<=>[$k_2$][$k_1$]}[135]
		\subscheme{$\mathrm{X}_1$}
		\arrow(@z--x2){->[$k_3$]}[90]
		\subscheme{$\mathrm{X}_2$}
		\arrow(@z--x3){->[$k_4$]}[45]
		\subscheme{$\mathrm{X}_3$}
		\arrow(@z--x12){<-[$k_5$]}[180, 0.7]
        \subscheme{$\mathrm{X}_1 + \mathrm{X}_2$}
		\arrow(@z--x23){<-[$k_6$]}[0, 0.7]
		\subscheme{$\mathrm{X}_2 + \mathrm{X}_3$}
	\schemestop & $\mathbb{R}^6_{>0}$ \\
	\hline
		\refstepcounter{rownumber}  \label{tab:which_activation}
		8 &
	\schemestart
		\subscheme{$\emptyset$}
    		\arrow(z--x1){<=>[$k_2$][$k_1$]}[180]
		\subscheme{$\mathrm{X}_1$}
          \arrow(@z--x2){<=>[$k_3$][$k_4$]}[0]
        \subscheme{$\mathrm{X}_2$}
		  \arrow(@x1--x12){->[$k_6$]}[90]
        \subscheme{$\mathrm{X}_1 + \mathrm{X}_2$}
          \arrow(@x2--x23){->[$k_5$]}[90]
		\subscheme{$\mathrm{X}_1 + \mathrm{X}_2$}
	\schemestop  & 
	$\begin{aligned}
		\{ \vec{k} \in \mathbb{R}_{\geq 0}^6 | \vec{k}_{1:4}>0,\; k_5 > 0\;&\mbox{and}\; k_6 = 0\;\\\quad \mbox{or} \; k_5 = 0\;&\mbox{and}\; k_6 > 0\}
		\end{aligned}$ \\
	\hline %
	\refstepcounter{rownumber}  \label{tab:seq_vs_enzym}
	9 &
	\schemestart
		\subscheme{$\emptyset$}
    		\arrow(z--x1){<=>[$k_2$][$k_1$]}[150]
		\subscheme{$\mathrm{X}_1$}
		\arrow(@z--x2){<=>[$k_3$][$k_4$]}[90]
		\subscheme{$\mathrm{X}_2$}
		\arrow(@z--x12){<-[$k_5$]}[30]
		\subscheme{$\mathrm{X}_1 + \mathrm{X}_2$}
        \arrow(@x2--@x12){<-[$k_6$]}
	\schemestop &
	$\begin{aligned}
		\{ \vec{k} \in \mathbb{R}_{\geq 0}^6 | \vec{k}_{1:4}>0,\; k_5 > 0\;&\mbox{and}\; k_6 = 0\;\\ \mbox{or} \; k_5 = 0\;&\mbox{and}\; k_6 > 0\}
		\end{aligned}$ \\
	\hline %
	\hline 
	\hline
	\end{tabular}
	\addtocounter{figure}{-1}
	\caption{Chemical reaction networks and the associated set $K$ over which stationary parametric identifiability has been certified using the techniques of Section \ref{sec:main_result}.}\label{tab:examples}
\end{table}

\section*{ACKNOWLEDGMENTS}
The authors thank Eduardo D. Sontag for several helpful discussions. This work was supported in part by the U.S. National Science Foundation under Grant CMMI grant 1727189 and U.S. AFOSR MURI under grant FA9550-22-1-0316.

\clearpage

\bibliographystyle{siamplain}
\bibliography{bibliography.bib}

\end{document}